\theoremstyle{plain}
\newtheorem{theorem}{Theorem}[section]
\newtheorem{corollary}[theorem]{Corollary}
\newtheorem{conjecture}[theorem]{Conjecture}
\newtheorem{lemma}[theorem]{Lemma}
\theoremstyle{definition}
\newtheorem{definition}[theorem]{Definition}
\newtheorem{example}[theorem]{Example}
\newtheorem{notation}[theorem]{Notation}
\theoremstyle{remark}
\newtheorem{remark}[theorem]{Remark}
\newcommand{\Vev}[1]{\big\langle 0 \big|\, {#1} \,\big| 0 \big\rangle}
\newcommand{\VEV}[1]{{\big\langle 0 \big| {#1} \big| 0 \big\rangle}}
\newcommand{\vac}{| 0 \rangle}
\newcommand{\covac}{\langle 0 |}
\newcommand{\Map}{\mathsf{Map}}
\newcommand{\FSMap}{\mathsf{FSMap}}
\DeclareMathOperator{\Id}{Id}
\newcommand{\cA}{\mathcal{A}}
\newcommand{\cD}{\mathcal{D}}
\newcommand{\bJ}{\mathbb{J}}
\newcommand{\cS}{\mathcal{S}}
\newcommand{\cW}{\mathcal{W}}
\newcommand{\cF}{\mathcal{F}}
\renewcommand{\C}{\mathbb{C}}
\newcommand{\wgt}{\mathrm{wgt}}
\newcommand{\Z}{\mathbb{Z}}
\newcommand{\bt}{\boldsymbol{t}}
\newcommand{\cE}{\mathcal{E}}
\newcommand{\lwgt}{\mathfrak{W}}
\newcommand{\lwgtpart}{\mathfrak{w}}
\newcommand{\Eop}{\mathfrak{E}}
\newcommand{\genwgt}{\mathrm{weight}}
\newcommand{\Mf}{\mathsf{M}}
\begin{document}
	
\title[Generalised ordinary vs fully simple duality]{Generalised ordinary vs fully simple duality for $n$-point functions and a proof of the Borot--Garcia-Failde conjecture}

\author[B.~Bychkov]{Boris~Bychkov}
\address{B.~B.: Faculty of Mathematics, National Research University Higher School of Economics, Usacheva 6, 119048 Moscow, Russia; and Center of Integrable Systems, P.~G.~Demidov Yaroslavl State University, Sovetskaya 14, 150003, Yaroslavl, Russia}
\email{bbychkov@hse.ru}

\author[P.~Dunin-Barkowski]{Petr~Dunin-Barkowski}
\address{P.~D.-B.: Faculty of Mathematics, National Research University Higher School of Economics, Usacheva 6, 119048 Moscow, Russia; HSE--Skoltech International Laboratory of Representation Theory and Mathematical Physics, Skoltech, Nobelya 1, 143026, Moscow, Russia; and ITEP, 117218 Moscow, Russia}
\email{ptdunin@hse.ru}

\author[M.~Kazarian]{Maxim~Kazarian}
\address{M.~K.: Faculty of Mathematics, National Research University Higher School of Economics, Usacheva 6, 119048 Moscow, Russia; and Center for Advanced Studies, Skoltech, Nobelya 1, 143026, Moscow, Russia}
\email{kazarian@mccme.ru}

\author[S.~Shadrin]{Sergey~Shadrin}
\address{S.~S.: Korteweg-de Vries Institute for Mathematics, University of Amsterdam, Postbus 94248, 1090 GE Amsterdam, The Netherlands}
\email{S.Shadrin@uva.nl}	

\begin{abstract}
	We study a duality for the $n$-point functions in VEV formalism that we call the ordinary vs fully simple duality. It provides an ultimate generalisation and a proper context for the duality between maps and fully simple maps observed by Borot and Garcia-Failde. Our approach allows to transfer the algebraicity properties between the systems of $n$-point functions related by this duality, and gives direct tools for the analysis of singularities. As an application, we give a proof of a recent conjecture of Borot and Garcia-Failde on topological recursion for fully simple maps.
\end{abstract}

\maketitle	

\tableofcontents	
	
\section{Introduction}
An original motivation that lead to the results of this paper came from an attempt to understand the results and a conjecture of Borot and Garcia-Failde on maps and fully simple maps~\cite{borot2018simple} in the framework of algebraic formalism developed for the $n$-point functions of hypergeometric KP tau functions in~\cite{bychkov2021explicit,bychkov2020topological}.

Borot and Garcia-Failde studied the duality between maps and fully simple maps as an instance of a phenomenon known in the theory of topological recursion, the $x\leftrightarrow y$ duality, developed by Eynard and Orantin~\cite{eynard2013xy}. Borot and Garica-Failde also proved (see also a different fully combinatorial proof in~\cite{borot2019relating}) that enumerations of map and fully simple maps are related by a matrix whose coefficients are given by monotone / strictly monotone Hurwitz numbers.

The second result has a nice interpretation in terms of the vacuum expectation value (VEV) presentation of KP tau functions and using a VEV formula for ordinary maps it gives a VEV formula for fully simple maps. The relation between the VEV formulas for the tau functions of ordinary maps and fully simple maps gives rise to a very general duality for the $n$-point functions of KP tau functions and beyond, for more general VEV $n$-point functions that cover, for instance, the stuffed maps and their generalisations. Note that in general this generalised duality (which we still call the \emph{ordinary} vs \emph{fully simple} duality) is not reduced to the $x\leftrightarrow y$ duality.

This general duality 
is the main object of study in this paper. We prove explicit closed algebraic formulas (of the same flavour as in~\cite{bychkov2021explicit}) expressing the general \emph{ordinary} $n$-point functions in terms of the \emph{fully simple} ones 
and vice versa.


These algebraic relations between the $n$-point functions appear to be a very efficient tool to transfer the properties between the sides of this duality. In particular, we apply this technique to the case of fully simple maps, and we prove (using the known topological recursion on the side of ordinary maps) the topological recursion statement for fully simple maps, conjectured by Borot and Garcia-Failde in~\cite{borot2018simple}.

\subsection{Organisation of the paper} This paper heavily depends on the techniques developed in~\cite{bychkov2021explicit,bychkov2020topological}. For many explicit computations that literally repeat what is done in~\emph{op.~cit.} we rather give references than repeat them here.

Section~\ref{sec:VEVs} is devoted to explaining the vacuum expectation values that appear in other sections, and how to represent them as sums over graphs.

In Section~\ref{sec:mapsdef} we recall the details regarding the ordinary and the fully simple maps, and state the theorem on topological recursion for the fully simple maps (originally conjectured by Borot and Garcia-Failde in~\cite{borot2018simple}).

Section~\ref{sec:dual} is devoted to our main result, namely the generalised ordinary vs fully simple duality, which holds in a much more general context than that of maps. We formulate and prove it first (Sections \ref{sec:KPDuality-beginning}-\ref{sec:duality}) in the KP context (where we also pose a conjecture on topological recursion for the respective $n$-point functions), and then (Sections \ref{sec:BeyondKP}-\ref{sec:general duality}) in an even more general context inspired by the so-called stuffed maps .

In Section~\ref{sec:TRforFS} we use the main result of Section~\ref{sec:dual} in order to prove the theorem on topological recursion for fully simple maps (Theorem~\ref{th:BGF}) stated in Section~\ref{sec:mapsdef}.

\begin{remark}
	Throughout the paper we use the notation $\Mf_{g,n}$, $W_{g,n}$ and $\Mf^\vee_{g,n}$, $W^\vee_{g,n}$ for the ordinary and fully simple \emph{$n$-point functions}, respectively
	. Let us note here that we abuse the notation: in Sections~\ref{sec:mapsdef} and~\ref{sec:TRforFS} $\Mf_{g,n}$, $W_{g,n}$, $\Mf^\vee_{g,n}$, $W^\vee_{g,n}$ stand for the $n$-point functions of maps/fully simple maps, while in Sections~\ref{sec:KPDuality-beginning}-\ref{sec:duality} they stand for the general $t$-deformed hypergeometric KP $n$-point functions and in Sections~\ref{sec:VEVs} and \ref{sec:BeyondKP}-\ref{sec:general duality} they stand for an even more general case inspired by the stuffed maps. See the respective sections for the precise definitions.
\end{remark}

\subsection{Acknowledgements}
When this paper was essentially finished and we mentioned the results at the 7th Workshop on Combinatorics of Moduli Spaces, Cluster Algebras, and Topological Recursion (organized by L.~Chekhov, S.~Lando \emph{et al.} online on May 31--June 4 2021), we got a message from E.~Garcia-Failde that she together with G.~Borot and S.~Charbonnier also found an alternative independent proof of their conjecture with Borot on the topological recursion for fully simple maps, which was to appear soon (at the time of the current update of our present paper, Borot--Charbonnier--Garcia-Failde's paper is already available, \cite{BCGF21}).

B.~B. and M.~K. were partially supported by the International Laboratory of Cluster Geometry at the HSE University. S.~S. was supported by the Netherlands Organization of Scientific Research.

\section{Vacuum expectation values}
\label{sec:VEVs}
The main technical tool used in this paper allowing us to work with various kinds of correlator functions is to express them as
certain vacuum expectation values and then to reduce computations to sums over graphs. In this section we explain a general set-up for this technique.
We refer to~\cite{Kac,MJD} and also~\cite{bychkov2021explicit} as the main sources on the formalism presented here.

Consider the bosonic Fock space $\cF\coloneqq\C[[p_1,p_2,\dots]]$. Let $\vac\coloneqq 1\in\cF$ and define $\covac\colon \cF\to \C$ as $\covac \colon f(p_1,p_2,\dots) \mapsto f|_{p_i=0, i\geq 1}$. For $m>0$ denote
$J_m\coloneqq m\partial_{p_m}$ and $J_{-m}\coloneqq p_{m}$ and set $J_0\coloneqq0$.
For a formal power series $\psi(y,\hbar^2)$ such that $\psi(0,\hbar^2)=0$ we define an operator $\cD_\psi$ that acts diagonally in the basis of Schur functions as
\begin{equation}	\label{eq:Dpsidef}
\cD_\psi s_\lambda \coloneqq \exp\left(\sum_{(i,j)\in\lambda} \psi(\hbar(j-i),\hbar^2)\right)\,s_\lambda
\end{equation}
(below, once the functions $\psi_i$ are fixed and obvious from the context, we often omit them in the notation for $\cD$).

Given a collection of quantities $s_{g;k_1,\dots,k_m}$ defined for $g\ge0$, $m\ge 1$, $k_i\ge1$ and symmetric in $k_1,\dots,k_m$, we are interested in the VEVs of the type
\begin{align}
	W^\bullet_n(X_1,\dots,X_n) & \coloneqq \big\langle 0 \big|  \prod_{i=1}^n \cD_{\psi_i}^{-1} \Big(\sum_{\ell_i=1}^\infty J_{\ell_i} X_i^{\ell_i}\Big)  \cD_{\psi_i}
	\\ \notag
	& \quad
	\exp \Big({\sum_{g=0}^\infty \sum_{m=1}^{\infty} \frac{\hbar^{2g-2+m}}{m!} \sum_{k_1,\dots,k_m=1}^\infty s_{g;k_1,\dots,k_m} \prod_{j=1}^m \frac{J_{-k_j}}{k_j} }\Big)
	\big| 0 \big\rangle.
\end{align}
Here $\psi_i(y,\hbar^2)$, $i=1,\dots,n$, are formal power series such that $\psi_i(0,\hbar^2)=0$ that might eventually coincide with each other. Denote
\begin{align}
	\cA_i &  \coloneqq \cD_{\psi_i}^{-1} \Big(\sum_{\ell_i=1}^\infty J_{\ell_i} X_i^{\ell_i}\Big)  \cD_{\psi_i} ,
	& \cA_I & \coloneqq \cA_{i_1}\cdots\cA_{i_{|I|}}, &  I=\{i_1<i_2<\cdots<i_{|I|}\}\subseteq \llbracket n \rrbracket,
\end{align}
and
\begin{align}
	\Eop & \coloneqq\exp \Big({\sum_{g=0}^\infty \sum_{m=1}^{\infty} \frac{\hbar^{2g-2+m}}{m!} \sum_{k_1,\dots,k_m=1}^\infty s_{g;k_1,\dots,k_m} \prod_{j=1}^m \frac{J_{-k_j}}{k_j} }\Big). \end{align}
With this notation $W^\bullet_n(X_1,\dots,X_n) =  \VEV{\mathcal{A}_{\llbracket n \rrbracket} \Eop}$. We define a connected VEV as
\begin{align}\label{eq:inclexclinv}
	W_n(X_1,\dots,X_n) & =  \VEV{\mathcal{A}_{\llbracket n \rrbracket} \Eop}^\circ 
	\coloneqq \sum\limits_{l=1}^n\frac{(-1)^{l-1}}{l}\sum_{\substack{I_1\sqcup\ldots\sqcup I_l=\llbracket n \rrbracket \\ \forall j\, I_j \neq \emptyset}} \prod_{i=1}^l  \VEV{\mathcal{A}_{I_i}\Eop}.
\end{align}

The computation of such expressions is based on the following formula (\cite{bychkov2021explicit}, Section 3.1):
\begin{align} \label{eq:VertexOp}
	\cA_i = \sum_{m=1}^\infty X_i^m[z^m] \sum_{r=0}^\infty\partial_y^r e^{m \frac{\cS(m\hbar\partial_y)}{\cS(\hbar \partial_y)} \psi_i(y)}
	\big|_{y=0}[u^r]
	\frac{
		e^{\sum_{i=1}^\infty u\hbar\cS(u\hbar i) J_{-i}z^{-i}}
		e^{\sum_{i=1}^\infty u\hbar \cS(u\hbar i) J_{i}z^{i}}}
	{u\hbar\cS(u\hbar)},
\end{align}
as well as the obvious identities $J_{>0}\vac = 0$, $\covac J_{<0} = 0$, commutation relations $[J_a,J_b] = a\delta_{a+b,0}$, and their natural corollaries like
\begin{align}
	\Big[\sum_{i=1}^\infty J_i z^i, \sum_{j=1}^\infty J_{-j} w^{-j}\Big] = \frac{zw}{(z-w)^2},
\end{align}
where the latter formula is understood as its asymptotic expansion in the sector $|z|<|w|$.
In~\eqref{eq:VertexOp} and everywhere below in the paper for a power series $f(x)$ the expression $[x^k]f(x)$ stands for the coefficient in front of $x^k$ in the series, and also we have
\begin{equation}
	\cS(u):=\dfrac{e^{u/2}-e^{-u/2}}{u}.
\end{equation}

With these formulas, we expand $W_n^\bullet$ as an infinite sum of the coefficients of the VEVs of the type
\begin{align} \label{eq:mainingridient}
	\VEV{\prod_{i=1}^n 	\Big(e^{\sum_{l=1}^\infty u_i\hbar\cS(u_i\hbar l) J_{-l}z_i^{-l}}
		e^{\sum_{r=1}^\infty u_i\hbar \cS(u_i\hbar r) J_{r}z_i^{r} }\Big) \Eop},
\end{align}
We commute all operators with the negative indices to the left, and all operators with the positive indices to the right. This way we obtain that~\eqref{eq:mainingridient} is given as a sum over bi-coloured graphs (with white and black vertices) with the following properties:
\begin{itemize}
	\item There are $n$ vertices labelled from $1$ to $n$ (white vertices).
	\item There is a finite number of ordinary edges connecting the white vertices. Self-adjusted ordinary edges (loops) are forbidden. An ordinary edge connecting the white vertices labelled by $i$ and $j$, $i\not=j$, is decorated by
	\begin{align}
		u_i\hbar \cS(u_i\hbar z_i\partial_{z_i})u_j\hbar \cS(u_j\hbar z_j\partial_{z_j})\frac{z_iz_j}{(z_i-z_j)^2}.
	\end{align}
	\item There is a finite number of multi-edges (unlabelled black vertices connected by a finite number of edges to white vertices). A multi-edge can be connected to a white vertex multiple times. A multi-edge that is connected to the white vertices labelled by $i_1,\dots,i_k$ (the indices might be repeated) is decorated by
	\begin{align}
 \sum_{g=0}^\infty \hbar^{2g-2+k} \sum_{t_1,\dots,t_k=1}^\infty s_{g;t_1,\dots,t_k} \prod_{l=1}^k u_{i_l}\hbar \cS(u_{ i_l}\hbar z_{ i_l}\partial_{z_{i_l}}) z_{i_l}^{t_l}.
	\end{align}
	\item The value that we associate to a graph is the product of the weights on its ordinary and multi-edges divided by the order of its automorphism group.
\end{itemize}
Note that there is no requirement that the graphs are connected.

Denote the set of not necessarily connected graphs with $n$ labelled white vertices and with any number of ordinary and multi-edges by $\tilde \Gamma_n^\bullet$. Let $E(\gamma)$ be the set of ordinary and multi-edges of a graph $\gamma\in \tilde \Gamma_n^\bullet$, and denote the weight of an ordinary or a multi-edge $e$ by $\widetilde{\genwgt}(e)$. Then the formula for $W_n^\bullet$ is
\begin{align} \label{eq:disconnectedW}
	W_n^\bullet & =
	\sum_{\substack{r_1,\dots,r_n\geq 0\\ m_1,\dots,m_n\geq 1}}
		\prod_{i=1}^n X_i^{m_i} \partial_{y_i}^{r_i} e^{m_i\frac{\cS(m_i\hbar \partial_{y_i})}{\cS(\hbar\partial_{y_i})}\psi_i(y_i,\hbar^2)}\big|_{y_i=0} \prod_{i=1}^n [u_i^{r_i} z_i^{m_i}]
		\\ \notag & \quad
		 \prod_{i=1}^n\frac{1}{u_i\hbar S(u_i\hbar)} \sum_{\gamma\in \tilde\Gamma_n^\bullet} \frac {\prod_{e\in E(\gamma)} \widetilde{\genwgt}(e)}{|\mathrm{Aut}(\gamma)|}.
\end{align}

A direct corollary of the inclusion-exclusion formula~\eqref{eq:inclexclinv} is the following lemma:
\begin{lemma} Let $\tilde\Gamma_n\subset \tilde\Gamma_n^\bullet$ be a subset of the connected graphs. Then
	\begin{align} \label{eq:connectedW}
		W_n = \sum_{\substack{r_1,\dots,r_n\geq 0\\ m_1,\dots,m_n\geq 1}}
		\prod_{i=1}^n X_i^{m_i} \partial_{y_i}^{r_i} e^{m_i\frac{\cS(m_i\hbar \partial_{y_i})}{\cS(\hbar\partial_{y_i})}\psi_i(y_i,\hbar^2)}\big|_{y_i=0}  \prod_{i=1}^n [u_i^{r_i} z_i^{m_i}] \prod_{i=1}^n\frac{1}{u_i\hbar S(u_i\hbar)} \sum_{\gamma\in \tilde\Gamma_n} \frac {\prod_{e\in E(\gamma)} \widetilde{\genwgt}(e)}{|\mathrm{Aut}(\gamma)|}.
	\end{align}
\end{lemma}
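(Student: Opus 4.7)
The plan is to reduce the lemma to the classical ``exp/log'' (cumulant/moment) inversion for labelled combinatorial structures. The key structural observation is that the prefactor in~\eqref{eq:disconnectedW} (everything to the left of the graph sum) factorises as a product indexed by the labelled white vertices $i=1,\dots,n$; only the graph sum $\sum_{\gamma\in\tilde\Gamma_n^\bullet}(\cdots)$ genuinely mixes the variables attached to different vertices.

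First I would factorise $\sum_{\gamma\in\tilde\Gamma_n^\bullet}(\cdots)$ along connected components. Every $\gamma\in\tilde\Gamma_n^\bullet$ determines canonically an unordered set partition $\{I_1,\dots,I_l\}$ of $\llbracket n\rrbracket$ (according to how the labelled white vertices distribute among components), together with a connected graph $\gamma_j\in\tilde\Gamma_{I_j}$ on each block. Since the labelled white vertices cannot be permuted by any automorphism, while black vertices can only be permuted inside their own component, one has $|\mathrm{Aut}(\gamma)|=\prod_j|\mathrm{Aut}(\gamma_j)|$; moreover, the edge weights $\widetilde{\genwgt}(e)$ are multiplicative over components, since each edge is attached only to vertices of a single component. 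Together with the multiplicativity of the prefactor, and passing from unordered to ordered set partitions (which introduces a factor of $1/l!$), this produces the ``exponential formula''
\begin{align*}
  W_n^\bullet(X_1,\dots,X_n)=\sum_{l\geq 1}\frac{1}{l!}\sum_{\substack{I_1\sqcup\dots\sqcup I_l=\llbracket n\rrbracket\\ \forall j\ I_j\neq\emptyset}}\prod_{j=1}^l \widehat W_{I_j}(X_{I_j}),
\end{align*}
where $\widehat W_I(X_I)$ denotes the right-hand side of~\eqref{eq:connectedW} with the index set $\llbracket n\rrbracket$ replaced by $I$. The same factorisation applies to each $W^\bullet_{I_i}$ that appears on the right-hand side of~\eqref{eq:inclexclinv}.

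The final step is to substitute these expansions into~\eqref{eq:inclexclinv} and to regroup the resulting double sum according to the common refinement of the two levels of set partitions. After regrouping, the claim reduces to the standard numerical identity equivalent to Möbius inversion on the partition lattice, i.e.\ the combinatorial shadow of $\log\exp=\mathrm{id}$: ordered-set-partition sums with weights $1/l!$ are inverted by ordered-set-partition sums with weights $(-1)^{l-1}/l$. This produces $W_n=\widehat W_n$, which is precisely~\eqref{eq:connectedW}. The only delicate point, and thus the main step requiring care, is the automorphism bookkeeping in the component factorisation; because the white vertex labels already eliminate any symmetry permuting distinct blocks of a partition, no extra combinatorial factors arise beyond the $1/l!$ absorbed into the passage from unordered to ordered partitions. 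Once this is cleanly verified, the remainder of the argument is a purely formal manipulation of generating series.
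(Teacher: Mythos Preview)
Your argument is correct and is exactly the approach the paper has in mind: the paper simply records the lemma as ``a direct corollary of the inclusion-exclusion formula~\eqref{eq:inclexclinv}'', and what you wrote is precisely the spelled-out version of that corollary (factorisation of the graph sum over connected components via the labelled white vertices, followed by the standard Möbius/log-exp inversion on the partition lattice). Your care with the automorphism bookkeeping is the one point the paper leaves implicit, and you handle it correctly.
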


In order to get more compact formulas for practical computations, it is also useful to reduce the sets of graphs $\tilde \Gamma_n^\bullet$ and $\tilde \Gamma_n$ to forbid the multiple edges and thus include coefficients coming from the automorphism group into the labels on the multi-edges. To this end, we introduce a subset $\Gamma_n^\bullet\subset \tilde \Gamma_n^\bullet$ of graphs that has at most one multi-edge for each tuple of indices $a_1,\dots,a_k$, where $a_1<\cdots<a_k$, for each $k=1,\dots,n$, and no ordinary edges (there are multi-edges with $2$ indices, and the contributions of former ordinary edges are incorporated in those). For an edge $e$ like that define $\genwgt(e)$ as
\begin{align}
	& \genwgt(e)\coloneqq -1+\exp\Bigg(\delta_{k,2}		u_{a_1}\hbar \cS(u_{a_1}\hbar z_{a_1}\partial_{z_{a_1}})u_{a_2}\hbar \cS(u_{a_2}\hbar z_{a_1}\partial_{z_{a_1}})\frac{z_{a_1}z_{a_2}}{(z_{a_1}-z_{a_2})^2} + \\ \notag
	&  \sum_{g=0}^\infty  \sum_{r_1,\dots,r_k=1}^\infty \frac{\hbar^{2g-2+r_1+\cdots+r_k}}{r_1!\cdots r_k!} \sum_{\substack{t_{ij}=1\\ i=1,\dots,k\\
	j=1,\dots,r_i}}^\infty s_{g;\{t_{ij}\}} \prod_{i=1}^k\Bigg(\prod_{j=1}^{r_j} u_{a_i}\hbar S\big(u_{a_i}\hbar z_{a_i}\partial_{ z_{a_i}}\big) z_{a_i}^{t_{ij}}\Bigg)\Bigg).
\end{align}

\begin{lemma} Let $\Gamma_n\subset \Gamma_n^\bullet$ denote the subset of connected graphs. We have:
	\begin{align}		 \label{eq:disconnectedW-nomultedges}
				W_n^\bullet & = \sum_{\substack{r_1,\dots,r_n\geq 0\\ m_1,\dots,m_n\geq 1}}
				\prod_{i=1}^n X_i^{m_i} \partial_{y_i}^{r_i} e^{m_i\frac{\cS(m_i\hbar \partial_{y_i})}{\cS(\hbar\partial_{y_i})}\psi_i(y_i,\hbar^2)}\big|_{y_i=0}
				 \prod_{i=1}^n [u_i^{r_i} z_i^{m_i}] \frac{\sum_{\gamma\in \Gamma_n^\bullet} \prod_{e\in E(\gamma)} {\genwgt}(e)}{\prod_{i=1}^n u_i\hbar S(u_i\hbar)} ;\\
		 \label{eq:connectedW-nomultedges}
		W_n &= \sum_{\substack{r_1,\dots,r_n\geq 0\\ m_1,\dots,m_n\geq 1}}
				\prod_{i=1}^n X_i^{m_i} \partial_{y_i}^{r_i} e^{m_i\frac{\cS(m_i\hbar \partial_{y_i})}{\cS(\hbar\partial_{y_i})}\psi_i(y_i,\hbar^2)}\big|_{y_i=0}
				 \prod_{i=1}^n [u_i^{r_i} z_i^{m_i}] \frac{\sum_{\gamma\in \Gamma_n} \prod_{e\in E(\gamma)} {\genwgt}(e)}{\prod_{i=1}^n u_i\hbar S(u_i\hbar)} .
	\end{align}
\end{lemma}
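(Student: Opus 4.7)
The plan is to start from \eqref{eq:disconnectedW} and reorganise the sum over $\tilde\Gamma_n^\bullet$ by the subset $S=\{a_1<\cdots<a_k\}\subseteq\llbracket n\rrbracket$ of white vertices that each edge touches. For $\tilde\gamma\in\tilde\Gamma_n^\bullet$, let $\gamma_S$ collect the edges of $\tilde\gamma$ whose set of adjacent white vertices is exactly $S$. Edge weights depend only on this local data, and two edges with distinct adjacency sets cannot be interchanged, so $|\mathrm{Aut}(\tilde\gamma)|=\prod_S|\mathrm{Aut}(\gamma_S)|$ and the weighted sum factorises as $\prod_S\bigl(\sum_{\gamma_S}|\mathrm{Aut}(\gamma_S)|^{-1}\prod_{e\in E(\gamma_S)}\widetilde{\mathrm{weight}}(e)\bigr)$, where the product runs over nonempty $S\subseteq\llbracket n\rrbracket$ and the inner sum includes the empty configuration (contributing $1$).

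We next claim that the local factor for each $S$ equals $\exp(\Omega_S)$, where $\Omega_S$ denotes the expression inside the outer exponential in the definition of $\genwgt(e_S)$. For $k=2$, any number $N$ of parallel ordinary edges comes with a $1/N!$ automorphism factor and exponentiates to $\exp(\text{ordinary edge weight})$. For multi-edges (any $k\geq 1$), a single one with genus $g$ and leg-multiplicities $r_1,\ldots,r_k\geq 1$ at $a_1,\ldots,a_k$ contributes, after summing over the $\binom{r_1+\cdots+r_k}{r_1,\ldots,r_k}$ orderings of its half-edges (identified via the symmetry of $s_{g;\bullet}$), exactly the factor $\hbar^{2g-2+m}/(r_1!\cdots r_k!)\sum_{t_{ij}}s_{g;\{t_{ij}\}}\prod_{i,j}(\cdots)$ appearing inside $\Omega_S$; summing over the number of such multi-edges with the $1/N!$ factor from their permutation symmetry then produces the desired exponential. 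The ordinary and multi-edge contributions combine additively inside the exponent for $k=2$, matching the $\delta_{k,2}$ term in $\Omega_S$.

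Finally, in $\Gamma_n^\bullet$ each subset $S$ either carries no edge or a single consolidated edge of weight $\genwgt(e_S)=e^{\Omega_S}-1$, so the sum over $\Gamma_n^\bullet$ factorises as $\prod_S(1+(e^{\Omega_S}-1))=\prod_S e^{\Omega_S}$, matching the reorganisation of $\tilde\Gamma_n^\bullet$ from the previous paragraph; this proves \eqref{eq:disconnectedW-nomultedges}. The connected version \eqref{eq:connectedW-nomultedges} then follows by applying the inclusion-exclusion \eqref{eq:inclexclinv} term-by-term to both sides of \eqref{eq:disconnectedW-nomultedges}, noting that the consolidation operation $\tilde\gamma\mapsto\gamma$ preserves the incidence pattern on labelled white vertices and hence preserves connectedness. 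The main technical difficulty we expect to face is the careful bookkeeping of three sources of symmetry factors — the $1/m!$ built into the definition of $\Eop$, the multinomial coefficient from assigning the half-edges of a single multi-vertex to labelled white vertices with prescribed multiplicities, and the $1/N!$ from permutations of identical multi-edges — whose mutual compatibility rests on the symmetry of $s_{g;k_1,\ldots,k_m}$ in its lower indices.
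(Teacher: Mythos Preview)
Your argument is correct. The paper states this lemma without proof, treating it as a routine combinatorial reorganisation of the graph sum; your proposal supplies exactly the details that are being implicitly assumed. The key identifications you make are the right ones: the automorphism group of $\tilde\gamma\in\tilde\Gamma_n^\bullet$ factors over edge-types (since white vertices are labelled), a single multi-edge with leg-multiplicities $r_1,\dots,r_k$ contributes $\widetilde{\genwgt}/(r_1!\cdots r_k!)$ to the exponent, and the exponential formula for multisets of such objects yields $e^{\Omega_S}$ for each nonempty $S\subseteq\llbracket n\rrbracket$; matching this against $\prod_S(1+\genwgt(e_S))$ gives \eqref{eq:disconnectedW-nomultedges}.

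For the connected statement, your inclusion--exclusion argument works, but a slightly more direct route is available and is closer in spirit to how the paper organises things: start from the already-proved formula \eqref{eq:connectedW} over $\tilde\Gamma_n$, fibre the sum over the consolidation map $\tilde\gamma\mapsto\gamma$, and observe (as you do) that consolidation preserves connectedness because it preserves the incidence hypergraph on the labelled white vertices. The fibre over a fixed $\gamma\in\Gamma_n$ consists of those $\tilde\gamma$ having at least one edge or multi-edge with support $S$ for each $S\in E(\gamma)$ and none for $S\notin E(\gamma)$, and its weighted sum is $\prod_{S\in E(\gamma)}(e^{\Omega_S}-1)=\prod_{e\in E(\gamma)}\genwgt(e)$, giving \eqref{eq:connectedW-nomultedges} immediately. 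Your remark about the three interacting symmetry factors (the $1/m!$ in $\Eop$, the multinomial from half-edge assignments, and the $1/N!$ from permuting identical multi-edges) correctly isolates the only nontrivial bookkeeping point, and the symmetry of $s_{g;k_1,\dots,k_m}$ is indeed what makes it go through.
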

\section{Fully simple maps and topological recursion}\label{sec:mapsdef}

In this section, we recall the definitions of ordinary maps, fully simple maps, and their VEV expressions. We also recall the definition of topological recursion of Chekhov--Eynard--Orantin, and a statement on topological recursion of ordinary maps as well as a conjecture of Borot--Garcia-Failde on topological recursion for fully simple maps (the latter one is proved below in this paper).

\subsection{Geometric definition}
Let $\Map_{g;\mu_1,\dots,\mu_n}$ count the number of connected ordinary maps. More precisely, $\Map_{g;\mu_1,\dots,\mu_n}$ is a generating formal series in $t_1,t_2,t_3,\dots$ whose coefficient of $\prod_{i=1}^p t_i^{m_i}/m_i!$,  with $\sum_{i=1}^p m_i = M$, is equal to the weighted count of the number of ways to combinatorially glue a surface of genus $g$ from $n+M$ ordered polygons along their sides. The numbers of sides of the first $n$ polygons should be $\mu_1,\dots,\mu_n$, respectively (these are the distinguished polygons), and then there are $m_i$ polygons with $i$ sides, $i=1,\dots,p$. The weighted count means that we weight each polygonal decomposition with the inverse order of its symmetry group.

Let $\FSMap_{g;\mu_1,\dots,\mu_n}$ be the number of connected fully simple maps, which has exactly the same definition as ordinary maps, with one important addendum: consider the genus $g$ surface combinatorially glued from polygons as a cellular complex. We require that each $0$-cell is adjacent to at most one $1$-cell that is a boundary component of a distinguished polygon.

By $\Map^\bullet_{g;\mu_1,\dots,\mu_n}$ and $\FSMap^\bullet_{g;\mu_1,\dots,\mu_n}$ we denote the disconnected variants of the above definitions, the index $g$ here refers to a possibly disconnected surface of the Euler characteristic $2-2g$.

Note a slight difference between our definitions and the definitions in~\cite{borot2018simple,borot2019relating} --- in \emph{op.~cit.} the authors require also that each distinguished polygon also has one distinguished side, which amounts in a difference by a simple combinatorial factor. In the present paper we drop this requirement, to make formulas and computations below a bit nicer.

\subsection{VEV formulas} The interpretation of the enumeration of maps as coefficients of a particular hypergeometric KP tau function is proved in~\cite{goulden2008kp}. We have:
\begin{align}
	\Map^\bullet_{g;\mu_1,\dots,\mu_n} & = [\hbar^{2g-2+n}]\Vev{\prod_{i=1}^n \frac{J_{\mu_i}}{\mu_i} e^{\sum_{i=1}^\infty \frac{J_i t_i}{i\hbar}} \cD e^{\frac{J_{-2}}{2\hbar}}
	},
\end{align}
where $\cD\coloneqq\cD_\psi$ of \eqref{eq:Dpsidef} for
\begin{equation}
	\psi(y)=\log(1+y),
\end{equation}
i.e. $\cD v_\lambda = \prod_{(i,j)\in\lambda} (1+\hbar(j-i))$. We have a similar VEV formula for the fully simple maps.
\begin{lemma}\label{lem:FSviaHurwitz} We have
\begin{align}
	\FSMap^\bullet_{g;\mu_1,\dots,\mu_n} & = [\hbar^{2g-2+n}]\Vev{\prod_{i=1}^n \frac{J_{\mu_i}}{\mu_i} \cD^{-1} e^{\hbar^{-1}\sum_{i=1}^\infty J_i t_i} \cD e^{\frac{J_{-2}}{2\hbar}}
	}.
\end{align}	
\end{lemma}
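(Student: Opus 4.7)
The plan is to deduce the lemma from the ordinary-map VEV formula stated just above together with the Borot--Garcia-Failde relation \cite{borot2018simple,borot2019relating} between fully simple and ordinary map counts via (strictly) monotone single Hurwitz numbers. The key operator-theoretic observation is that the diagonal operator $\cD$ associated to $\psi(y)=\log(1+y)$ is precisely the VEV generator of strictly monotone single Hurwitz numbers, since its eigenvalue on $s_\lambda$ is exactly the relevant content product. Sandwiching the polygon-creation operator by $\cD^{-1}(\cdots)\cD$ then attaches a monotone Hurwitz transform to each boundary, which is exactly what distinguishes fully simple from ordinary at the level of boundaries.

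Concretely, I would first write the BGF identity in the form
$$\FSMap^\bullet_{g;\mu_1,\dots,\mu_n}(t) = \sum_{\nu_1,\dots,\nu_n\ge 1} \prod_{i=1}^n \mathcal{K}_{\mu_i,\nu_i}(\hbar)\,\Map^\bullet_{g;\nu_1,\dots,\nu_n}(t'),$$
where $t'$ is a linear rescaling of $t$ absorbing the $1/i$ factors appearing in the ordinary VEV, and $\mathcal{K}_{\mu,\nu}$ is the monotone single Hurwitz kernel. Substituting the ordinary-map VEV formula and reorganising the double sum, the convolution $\sum_\nu \mathcal{K}_{\mu_i,\nu_i}(\cdot)$ is realised at the VEV level by inserting $\cD^{-1}$ to the left of the polygon-creation factor and $\cD$ to its right, because the matrix elements $\covac \frac{J_\mu}{\mu}\cD^{-1} J_{-\nu}\vac$ (suitably normalised) are exactly the required Hurwitz kernel. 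Sliding $\cD^{\pm1}$ past the polygon generator then converts $\nu$-boundaries into $\mu$-boundaries with the correct Hurwitz weight. Finally, the change from $t_i/(i\hbar)$ to $t_i/\hbar$ between the two exponentials is accounted for by the combinatorial fact that each ordinary $i$-gon contributes $t_i/i$ (the $i$ being the $i$-fold rotational symmetry of the polygon boundary), while this $i$ is absorbed into the rigid boundary attachment of the fully simple regime.

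The main obstacle will be clean bookkeeping of $\hbar$-powers, genus, and normalisation constants; in particular, verifying that the $[\hbar^{2g-2+n}]$-extraction matches term-by-term on both sides after the operator identity is applied, and that the $t$-rescaling $t\mapsto t'$ is precisely what is implemented by the combined effect of the $\cD^{\pm 1}$ conjugation and the change of prefactor in the exponential. An alternative, perhaps cleaner, route is to expand both sides directly in the Schur basis via the Frobenius character formula: the ordinary side yields a sum over $\lambda$ weighted by $\chi^\lambda_\mu$, content products $\prod_{(i,j)\in\lambda}(1+\hbar(j-i))$ coming from $\cD$, and a Plancherel-type factor from $e^{J_{-2}/(2\hbar)}\vac$, while the fully simple side yields the same sum with inverse content products appearing in the boundary-extraction part. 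Matching these expansions reduces to a finite combinatorial identity about simple versus ordinary boundary gluings, essentially equivalent to the BGF theorem; the verification of this identity is the step I expect to require the most care.
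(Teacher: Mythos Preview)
Your proposal has a genuine gap at the very first step: the form in which you state the Borot--Garcia-Failde relation is wrong. You write it as a boundary-by-boundary convolution
\[
\FSMap^\bullet_{g;\mu_1,\dots,\mu_n} \;=\; \sum_{\nu_1,\dots,\nu_n\ge 1}\ \prod_{i=1}^n \mathcal{K}_{\mu_i,\nu_i}(\hbar)\ \Map^\bullet_{g;\nu_1,\dots,\nu_n},
\]
with the number $n$ of boundaries preserved and the kernel factorising over them. This is not what \cite{borot2018simple,borot2019relating} prove. The correct all-genus relation is
\[
\FSMap^\bullet_{\mu_1,\dots,\mu_n} \;=\; \sum_{k\ge 1}\frac{1}{k!}\sum_{\lambda_1,\dots,\lambda_k\ge 1}\lambda_1\cdots\lambda_k\ H^{\le}_{\mu_1,\dots,\mu_n;\lambda_1,\dots,\lambda_k}\Big|_{\hbar\to -\hbar}\ \Map^\bullet_{\lambda_1,\dots,\lambda_k},
\]
where $H^{\le}$ are monotone \emph{double} Hurwitz numbers coupling the full profiles $\mu$ and $\lambda$. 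The number $k$ of boundaries on the ordinary side is summed over and the kernel does not split as a product over parts; consequently the one-mode matrix elements $\langle 0|\tfrac{J_\mu}{\mu}\cD^{-1}J_{-\nu}|0\rangle$ you propose do not encode the right object, and ``sliding $\cD^{\pm1}$ past the polygon generator'' per boundary cannot reproduce the relation.

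The paper's proof implements the non-factorised relation in one stroke by inserting the completeness relation on Fock space,
\[
\Id \;=\; \sum_{k\ge 0}\frac{1}{k!}\sum_{\lambda_1,\dots,\lambda_k\ge 1}\ \prod_{j=1}^k J_{-\lambda_j}\ |0\rangle\langle 0|\ \prod_{j=1}^k \frac{J_{\lambda_j}}{\lambda_j},
\]
between $\prod_i\tfrac{J_{\mu_i}}{\mu_i}\,\cD^{-1}$ and the rest of the operator string; the left factor then becomes the VEV formula for $H^{\le}_{\mu;\lambda}$ and the right factor is the ordinary-map VEV with $k$ boundaries. Your instinct that inserting $\cD^{-1}$ just after the $J_{\mu_i}$'s is the point is correct, but the mechanism is this resolution of the identity, not a per-boundary convolution. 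Finally, the $t_i/(i\hbar)$ versus $t_i/\hbar$ discrepancy you try to explain combinatorially is not a genuine feature of the argument: the ordinary and fully simple VEV expressions use the same $t$-exponential (compare the display for $\Map^\bullet$ just above the lemma with \eqref{eq:Mveedef1}--\eqref{eq:Mveedef}), so your ``rigid boundary'' justification is addressing a non-issue.
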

\begin{proof}
The results of~\cite{borot2018simple,borot2019relating} express $\FSMap^\bullet_{\mu_1,\dots,\mu_n}=\sum_{g\in \mathbb{Z}} \hbar^{2g-2+n} \FSMap^\bullet_{g;\mu_1,\dots,\mu_n} $ as
\begin{align}
	\FSMap^\bullet_{\mu_1,\dots,\mu_n} = \sum_{k=1}^\infty \frac 1{k!} \sum_{\lambda_1,\dots,\lambda_k=1}^\infty {\lambda_1\cdots\lambda_k} H^{\leq}_{\mu_1,\dots,\mu_n;\lambda_1,\dots,\lambda_k}\Big|_{\hbar \to -\hbar}
	\Map^\bullet_{\lambda_1,\dots,\lambda_k},
\end{align}
where
\begin{align}
		\Map^\bullet_{\mu_1,\dots,\mu_n} & = \Vev{\prod_{i=1}^n \frac{J_{\mu_i}}{\mu_i} e^{\sum_{i=1}^\infty \frac{J_i t_i}{i\hbar}} \cD e^{\frac{J_{-2}}{2\hbar}}
	},
\end{align}
and $H^{\leq}_{\mu_1,\dots,\mu_n;\lambda_1,\dots,\lambda_k}$ are the monotone double Hurwitz numbers given by~\cite{guaypaquet20152d}
\begin{align}
	H^{\leq}_{\mu_1,\dots,\mu_n;\lambda_1,\dots,\lambda_k} = \Vev{
		\prod_{i=1}^n \frac{J_{\mu_i}}{\mu_i}\cD^{-1}
		\prod_{j=1}^k \frac{J_{\lambda_j}}{\lambda_j}
	}\Big|_{\hbar \to -\hbar}
\end{align}
(note some differences in normalization in our paper and~\cite{borot2019relating}).
Using this and the fact that on the Fock space
\begin{align}
\Id =  \sum_{k=0}^\infty \frac 1{k!} \sum_{\lambda_1,\dots,\lambda_k=1}^\infty
\prod_{j=1}^k {J_{-\lambda_j}}
\,\big| 0\! \big\rangle	\big\langle\! 0 \big|\,
\prod_{j=1}^k \frac{J_{\lambda_j}}{\lambda_j}
\end{align}
we obtain the statement of the lemma.
\end{proof}

We introduce the following generating functions:
\begin{align}\label{eq:Mveedef1}
	\Mf_{g,n}(X_1,\dots,X_n)&\coloneqq\sum_{\mu_1,\dots,\mu_n=1}^\infty \Map_{g;\mu_1,\dots,\mu_n}X_1^{\mu_1}\cdots  X_n^{\mu_n}\\ \nonumber
	&=[\hbar^{2g-2+n}]\sum_{m_1,\dots,m_n=1}^\infty\prod_{i=1}^n \dfrac{X_i^{m_i}}{m_i}
	\Vev{J_{m_1}\dots J_{m_n}e^{\sum_{i=1}^\infty \frac{J_i t_i}{i\hbar}} \cD
		e^{\frac{J_{-2}}{2\hbar}
	}}^\circ,\\ \label{eq:Mveedef}
	\Mf_{g,n}^\vee(w_1,\dots,w_n)&\coloneqq\sum_{\mu_1,\dots,\mu_n=1}^\infty \FSMap_{g;\mu_1,\dots,\mu_n}w_1^{\mu_1}\cdots  w_n^{\mu_n}\\ \nonumber
	&=[\hbar^{2g-2+n}]\sum_{m_1,\dots,m_n=1}^\infty\prod_{i=1}^n \dfrac{w_i^{m_i}}{m_i}
	\Vev{J_{m_1}\dots J_{m_n}\cD^{-1}e^{\sum_{i=1}^\infty \frac{J_i t_i}{i\hbar}} \cD
		e^{\frac{J_{-2}}{2\hbar}
	}}^\circ
\end{align}
and
\begin{align}
	W_{g,n}(X_1,\dots,X_n)&\coloneqq D_{X_1}\cdots D_{X_n} \Mf_{g,n}(X_1,\dots,X_n), \label{def:Wgn1}\\
	W_{g,n}^\vee(w_1,\dots,w_n)&\coloneqq D_{w_1}\cdots D_{w_n} \Mf_{g,n}^\vee(w_1,\dots,w_n),\label{def:Wgn2}
\end{align}
where $D_{X_i}\coloneqq X_i\, \partial/\partial X_i$, $D_{w_i}\coloneqq w_i\, \partial/\partial w_i$.



\subsection{Formulation of topological recursion}

In this paper, it is more convenient for us to give not a general definition of topological recursion~\cite{EynardOra-TopoRec}, but its reformulation in terms of the loop equations and the projection property~\cite{BorotSha-Blobbed}, further specialized to the case of the underlying curve of genus $0$ and rational functions on it as in~\cite{bychkov2020topological}.

Let $X$ and $y$ be meromorphic functions on $\Sigma=\mathbb{C}P^1$ with affine coordinate $z$, and assume that the finite critical points of $X$ are $p_1,\dots,p_N$, all these critical points are simple; $y$ is regular at these points and has non-zero differential. Let $\sigma_i$ denote the deck transformation of $X$ near $p_i$. Let $B(z_1,z_2)\coloneqq dz_1dz_2/(z_1-z_2)^2$.

Consider a system of symmetric rational functions $\Mf_{g,n}(z_1,\dots,z_n)$, $g\geq 0$, $n\geq 1$, and set
\begin{align}
W_{g,n}&\coloneqq D_{1}\dots D_{n} \Mf_{g,n},\quad D_{i}\coloneqq X(z_i)\tfrac{d}{dX(z_i)},\\
\omega_{g,n}
&\coloneqq d_1\dots d_n \Mf_{g,n}+\delta_{g,0}\delta_{n,2}\tfrac{dX(z_1) dX(z_2)}{(X(z_1)-X(z_2))^2}\\
&=W_{g,n}\prod_{i=1}^n d\,\log(X(z_i))+\delta_{g,0}\delta_{n,2}\tfrac{dX(z_1) dX(z_2)}{(X(z_1)-X(z_2))^2}.\notag
\end{align}

\begin{definition} We say that the rational symmetric differentials $\omega_{g,n}$, $g\geq 0$, $n\geq 1$, satisfy the topological recursion on the spectral curve $(\Sigma,X(z),y(z),B(z_1,z_2))$ if
\begin{itemize}
	\item \emph{(Initial conditions)} We have:
	\begin{align}
		\omega_{0,1}(z) & = y(x) d\log X(z); & \omega_{0,2}(z_1,z_2) &= B(z_1,z_2).
	\end{align}
These equalities are also equivalent to
\begin{align}
W_{0,1}(z)&=y(z),&\Mf_{0,2}(z_1,z_2)&=\log\frac{z_1^{-1}-z_2^{-1}}{X(z_1)^{-1}-X(z_2)^{-1}}.
\end{align}
	\item \emph{(Linear Loop Equations)} For any $g,n\geq0$ the function $W_{g,n+1}$ may have a pole at $p_i$ for $i=1,\dots,N$ with respect to the first argument such that its principal part is skew symmetric with respect to the involution $\sigma_i$. In other words, we have that
	\begin{align}
		W_{g,n+1}(z,z_{\llbracket n\rrbracket})+ W_{g,n+1}(\sigma_i(z),z_{\llbracket n\rrbracket})		
	\end{align}
	is holomorphic in $z$ at $z\to p_i$.
	\item \emph{(Quadratic Loop Equations)} For any $g,n\geq 0$
the rational function
	\begin{align}
		W_{g-1,n+2}(z,z,z_{\llbracket n\rrbracket})+\sum_{\substack{g_1+g_1=g\\ I_1\sqcup I_2 = \llbracket n \rrbracket}}
W_{g_1,|I_1|+1}(z,z_{I_1})	W_{g_2,|I_2|+1}(z,z_{I_2})	
	\end{align}
may have a pole at $p_i$ in $z$ such that its principal part is skew symmetric with respect to the involution $\sigma_i$ for any $i=1,\dots,N$.
	\item \emph{(Projection Property)} For any $g,n\geq 0$ the function $\Mf_{g,n+1}(z,z_{\llbracket n\rrbracket})$ considered as a rational function in $z$ has no poles other than $p_1,\dots,p_N$. Equivalently, the differential $\omega_{g,n+1}(z,z_{\llbracket n\rrbracket})$ has no poles in $z$ other than $p_1,\dots,p_N$.
\end{itemize}
\end{definition}

The projection property implies that $W_{g,n+1}(z,z_{\llbracket n\rrbracket})$ also has no poles in~$z$ other than $p_1,\dots,p_N$. But this weaker condition on $W_{g,n+1}$ is not sufficient for the projection property. We need a stronger requirement that $W_{g,n+1}$ vanishes at the poles of $dX/X$ so that $W_{g,n+1}(z,z_{\llbracket n\rrbracket}) \tfrac{dX(z)}{X(z)}$ is holomorphic outside $p_1,\dots,p_N$.

The linear and quadratic loop equations together determine the principal part of the poles of $W_{g,n+1}(z,z_{\llbracket n\rrbracket})$ in $z$ up to the terms with at most simple poles at $p_i$. Multiplying by $dX(z)/X(z)$ we obtain the principal part of the poles of $\omega_{g,n+1}(z,z_{\llbracket n\rrbracket})$ without ambiguity. Any meromorphic differential form on $\mathbb{C}P^1$ is determined uniquely by the principal parts of its poles. The principal relation of topological recursion is nothing but an explicit formula expressing the form $\omega_{g,n+1}$ in terms of the principal parts of its poles.

\begin{remark}\label{rem:origrec}
	The original formulation~(\cite{EynardOra-TopoRec}) of the spectral curve topological recursion, which is equivalent to the one given above, allows one to explicitly recursively reconstruct all the $n$-point differentials $\omega_{g,n}$ starting with the spectral curve data $(\Sigma,X(z),y(z),B(z_1,z_2))$; we are not listing the respective formulas here for brevity.
\end{remark}

\subsection{Topological recursion for ordinary maps} Denote $\bt\coloneqq(t_1,\dots,t_q,0,0,...)$.
Set
\begin{align}
	x(z)&\coloneqq\alpha+\gamma\;\left(z+\frac1z\right),\\
	V(z)&\coloneqq x-\sum_{k=1}^qt_kx^{k-1}\bigm|_{x=x(z)},\\ \label{eq:wmapsdef}
	w(z)&\coloneqq V(z)_+\qquad\text{(\,the polynomial part of the Laurent polynomial } V(z)\text{ )},\\
	X(z)&\coloneqq\frac{1}{x(z)}=\frac{z}{\alpha\,z+\gamma\,(1+z^2)},\label{eq:wXy2}\\
	y(z)&\coloneqq\frac{w(z)}{X(z)}-1\label{eq:wXy3},
\end{align}
where $\alpha=\alpha(\bt)$ and $\gamma=\gamma(\bt)$ are functions defined in the vicinity of the point $\bt=0$ via implicit algebraic equations
\begin{equation}
	[z^0]V(z)=0,\qquad [z^{-1}] V(z)=\gamma^{-1},
\end{equation}
such that $\alpha(0)=0$, $\gamma(0)=1$.

With this set-up, the $n$-point functions $W_{g,n}(X_1,\dots,X_n)$ enumerating maps and defined by~\eqref{def:Wgn1} become rational functions in $z_1,\dots,z_n$ after the substitution $X_i=X(z_i)$. Moreover, the differentials
\begin{equation}
	\omega_{g,n}=W_{g,n}\prod_{i=1}^n\frac{dX_i}{X_i}+\delta_{g,0}\delta_{n,2}\frac{dX_1dX_2}{(X_1-X_2)^2}
\end{equation}
can be obtained via topological recursion (cf. Remark~\ref{rem:origrec}) with the initial data
\begin{equation}
	\omega_{0,1}(z)=y(z)\frac{dX(z)}{X(z)},\qquad \omega_{0,2}=\frac{dz_1dz_2}{(z_1-z_2)^2},
\end{equation}
i.e. on the spectral curve
\begin{equation}\label{eq:mapssec}	
(\mathbb{C}P^1,X(z),y(z),dz_1dz_2/(z_1-z_2)^2),
\end{equation} 
see~\cite{EynardTopoExpMM,ChekhovEynard,CEO,DOPS,EynardBook}.

In particular, this implies that for stable $(g,n)$ all possible poles of rational functions $W_{g,n}$ lie on the hyperplanes $z_i=\pm1$. When we say that $W_{g,n}$ are rational, we mean that they are ratios of polynomials of fixed degree whose coefficients are formal power series in the $t_i$ parameters. Moreover, the coefficients can be expressed as polynomials in $t_1,\dots,t_q,\alpha,\gamma$, so that the entire ``nonpolynomiality'' is contained in the algebraic functions $\alpha(\bt)$ and $\gamma(\bt)$. In particular, all these series have non-zero radii of convergence.

\subsection{Topological recursion for fully simple maps}
Given that topological recursion is known for $W_{g,n}(X_1,\dots,X_n)$, we prove it for $W_{g,n}^\vee(w_1,\dots,w_n)$, with the spectral curve data given by the functions $w(z)$ 
 and $y(z)$ defined by~\eqref{eq:wmapsdef}--\eqref{eq:wXy3}.

\begin{theorem}[Conjectured by Borot and Garcia-Failde,~\cite{borot2018simple}]\label{th:BGF}
	Let the parameters $t_i$, $i=1\dots q$, be such that the function $w(z)$ defined in~\eqref{eq:wmapsdef} has only simple critical points. Then the $n$-point functions
	$W_{g,n}^\vee(w_1,\dots,w_n)$ defined in~\eqref{def:Wgn2} satisfy the topological recursion on the spectral curve $(\mathbb{C}P^1,w(z),y(z),dz_1dz_2/(z_1-z_2)^2)$, where $y(z)$ is defined in~\eqref{eq:wXy3}; that is, the $(0,1)$-differential is given by $y(z)\tfrac{dw(z)}{w(z)}$.
	
	In terms of of the spectral curve data $X(z)$ and $y(z)$ for ordinary maps,~\eqref{eq:mapssec}, the spectral curve data for fully simple maps takes the following form:
	\begin{equation}
		w=X(z)(1+y(z)),\qquad y=y(z).
	\end{equation}

\end{theorem}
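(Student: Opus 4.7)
The plan is to transfer the known topological recursion for ordinary maps to the fully simple side via the algebraic duality of Section~\ref{sec:dual}. After the substitution $X_i = X(z_i)$ the functions $W_{g,n}(X_1,\dots,X_n)$ are rational in $z_1,\dots,z_n$ with poles only at $z_i = \pm 1$, and they satisfy the loop equations with respect to the deck transformation $z\mapsto 1/z$ of $X(z)$. Since the duality gives an explicit closed algebraic expression for the fully simple $n$-point functions $W^\vee_{g,n}(w_1,\dots,w_n)$ in terms of a finite sum of products of the ordinary $W_{g',n'}$ with $2g'-2+n' \leq 2g-2+n$ (evaluated on the $z$-variables), composed with the change of variables $w = X(1+y)$, the hope is to read off all the required analytic properties of $W^\vee_{g,n}$ from this formula.

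\textbf{Concrete steps.} First, specialise the duality identity of Section~\ref{sec:dual} to the case $\psi(y) = \log(1+y)$ with quadratic potential $e^{J_{-2}/(2\hbar)}$; this is exactly the VEV appearing in Lemma~\ref{lem:FSviaHurwitz}. Second, check the unstable initial conditions: from the definitions~\eqref{eq:Mveedef}--\eqref{def:Wgn2} and the duality, verify directly that $W^\vee_{0,1}(z) = y(z)$ and $\Mf^\vee_{0,2}(z_1,z_2) = \log\frac{z_1^{-1}-z_2^{-1}}{w(z_1)^{-1}-w(z_2)^{-1}}$, which match the TR initial data $\omega_{0,1} = y\, dw/w$ and $\omega_{0,2} = dz_1 dz_2/(z_1-z_2)^2$. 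Third, establish rationality of $W^\vee_{g,n}$ in $z_1,\dots,z_n$ and locate its poles: the duality formula produces poles at $z_i = \pm 1$ (from $W_{g',n'}$), at critical points $q_j$ of $w(z)$ (from Jacobian factors $dX/dw$), and at coinciding $z_i = z_j$. Fourth, verify the projection property by showing that the poles of $W^\vee_{g,n}$ at $z_i = \pm 1$ cancel after multiplication by $dw(z_i)/w(z_i)$, using the projection property on the ordinary side (which forces $W_{g,n}\, dX/X$ to have no poles outside $z_i = \pm 1$, together with vanishing at $z_i=0,\infty$) and the local behaviour of $w = X(1+y)$ near $z = \pm 1$.

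\textbf{Loop equations and the main obstacle.} The main technical difficulty is the verification of the linear and quadratic loop equations for $W^\vee_{g,n}$ at the critical points $q_j$ of $w(z)$: these are \emph{not} the critical points $z = \pm 1$ of $X(z)$, and the deck involutions with respect to $w$ and $X$ are different. At a simple critical point $q_j$ of $w$ one has $dw(q_j) = 0$ but $dX(q_j) \neq 0$, which by the defining relation $w = X(1+y)$ forces $(1+y)\, dX + X\, dy = 0$ at $q_j$, so $q_j$ is precisely a ramification point of $y$ viewed as a function of $X$ on the spectral curve for ordinary maps. Thus the required skew-symmetry of the principal parts of $W^\vee_{g,n}$ under the $w$-deck involution should be read from the duality formula as a statement about the behaviour of the ordinary $W_{g,n}$ under the local exchange of the roles of $X$ and $y$. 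The plan is to expand the duality formula near $q_j$, organise the contributions according to genus and the number of $W_{g',n'}$ factors, and show that the skew-symmetric projections yield zero because of the loop equations and the projection property on the ordinary side combined with algebraic identities satisfied by the derivatives of the change of coordinates $w = X(1+y)$. Carrying out this reduction cleanly, in a way that manifestly produces the standard linear and quadratic loop equations on the fully simple spectral curve $(\mathbb{C}P^1, w(z), y(z), B)$, is the step where all the work will be concentrated.
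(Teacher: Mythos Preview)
Your overall strategy---specialise the duality of Section~\ref{sec:dual} to $\psi(y)=\log(1+y)$ and then verify the three ingredients of topological recursion (initial conditions, loop equations, projection property) on the fully simple side---is exactly what the paper does. The unstable checks and the rationality of $W^\vee_{g,n}$ follow as you say. However, your proposed execution of the two main steps diverges from the paper's, and in both places your plan has real gaps.

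\textbf{Loop equations.} You propose to expand the duality formula for $W^\vee_{g,n}$ near a critical point $q_j$ of $w$ and read off skew-symmetry of the principal part from the loop equations on the ordinary side. This is unlikely to work directly: the ordinary $W_{g,n}$ are \emph{holomorphic} at $q_j$ (their only poles are at $z=\pm1$), so there is no skew-symmetry to inherit, and the deck involutions of $X$ and of $w$ are unrelated. The paper's route is structurally different and avoids this entirely. It introduces the auxiliary VEVs $\cW^\vee_{g,n}(w_1;w_{\llbracket n\rrbracket\setminus 1};u)$ containing the operator $\cE_0(\hbar u)$, whose $u$-coefficients are precisely the left-hand sides of the linear and quadratic loop equations; then it derives an explicit closed formula for $\cW^\vee_{g,n}$ of the same shape as Theorem~\ref{th:WveeW} (Lemma~\ref{lem:curlyWexplicit}). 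The point of that formula is purely structural: every pole in $z_1$ at a zero of $dw$ arises from a finite string of operators $\tfrac{1}{Q}X_1\partial_{X_1}=w_1\partial_{w_1}$ applied to something locally holomorphic there. A general lemma (\cite[Propositions~2.9 and~2.12]{bychkov2020topological}) then says that any function with poles produced this way automatically has odd principal part under the local deck involution. No expansion, no case analysis, no use of the ordinary loop equations at all. Without this $\cW^\vee$ device your plan is missing its central mechanism.

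\textbf{Projection property.} Two issues. First, your list of potential poles is incomplete: the factors $L_r(-v,\hbar,y)$ in $\overline U^\vee_i$ involve $\psi'(y)=(1+y)^{-1}$, so the duality formula also produces apparent poles where $y(z)=-1$; these must be excluded separately (the paper does this in Lemma~\ref{prop:noyp}, reducing to computations from~\cite{bychkov2020topological}), and the poles at $z=0,\infty$ likewise require their own treatment (Lemmata~\ref{prop:nopmone} and~\ref{prop:nozip}). Second, for the poles at $z=\pm1$ the paper does \emph{not} argue by cancellation inside the duality formula or by using the ordinary projection property. Instead it expands $\Mf^\vee_{g,n}$ as a power series in the $t_i$ and shows, directly from the VEV of Lemma~\ref{lem:FSviaHurwitz}, that every Taylor coefficient $(\partial_{t_{a_1}}\cdots\partial_{t_{a_k}}\Mf^\vee_{g,n})|_{t=0}$ is a \emph{polynomial} in $z_1,\dots,z_n$ (because $\cD^{-1}J_b\cD$ is a finite sum of monomials in the $J$'s when $\phi(y)=1+y$, and $w(z)$ is a polynomial in $z$). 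Your proposed mechanism---that $dw/w$ absorbs the poles at $\pm1$---cannot work as stated, since $dw$ is generically nonzero at $z=\pm1$, so holomorphy of $\omega^\vee_{g,n}$ there is equivalent to holomorphy of $\Mf^\vee_{g,n}$ itself, and you still have to explain why the manifest poles of the ordinary $W_{g',n'}$ at $\pm1$ disappear from the sum.
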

\begin{proof}
	Follows from Corollaries~\ref{cor:lqloop} and \ref{cor:MTheta}.
\end{proof}
\begin{remark}	
	If $w(z)$ has non-simple critical points, the ordinary topological recursion is not applicable, and one needs to use the Bouchard-Eynard recursion~\cite{BE13}. For brevity, we omit this case here, but the fact that the Bouchard-Eynard recursion holds for it, can be proved by taking a limit from the general case (where the ordinary topological recursion applies), analogously to \cite[Section 2]{BDS}.
\end{remark}

\section{Dualities}\label{sec:dual}
In this section we formulate and prove the generalised ordinary vs fully simple duality. We start with formulating and proving it in the KP context (along with introducing a conjecture on the topological recursion in this context) in Sections \ref{sec:KPDuality-beginning}-\ref{sec:duality}, and then, in Sections \ref{sec:BeyondKP}-\ref{sec:general duality}, we also do it in a more general context inspired by the so-called stuffed maps.
\subsection{Objects} \label{sec:KPDuality-beginning}
Consider three sets of \emph{parameters} $(c_1,c_2,\dots)$, $(s_1,s_2,\dots)$, $(t_1,t_2,\dots)$. For simplicity, we restrict ourselves to the ACEH-type~\cite{ACEH} case when only a finite number of parameters are different from zeroes.
\begin{remark}
	Note that there were not any $t$-parameters in \cite{ACEH}, that paper covered only the $t=0$ specialization of what is described below; still, we use the terminology \emph{ACEH-type} to reflect that only a finite number of parameters are non-zero, since this was the case for the $c$- and $s$-parameters in \cite{ACEH}.
\end{remark}	
Along with the parameters, we also have \emph{variables} $(p_1,p_2,\dots)$. Let
\begin{align}
	\psi(y)&\coloneqq\log(1+ c_1 y+c_2y^2+\dots),\\ \label{eq:phidef}
	\phi(y)&\coloneqq\exp(\psi(y))=1+ c_1 y+c_2y^2+\dots,\\ \label{eq:phimdef}
	\phi_m(y)
		&\coloneqq\exp\left(\sum_{i=1}^{m}\psi\left(y+\dfrac{2i-m-1}{2} \hbar\right)\right)
		,&m>0,\\ \label{eq:phi0def} 
		\phi_0(y)&\coloneqq1,&\\ 
		\phi_m(y)&\coloneqq(\phi_{-m}(y))^{-1},&m<0
\end{align}
(here, for simplicity, $\psi$ does not depend on $\hbar$, compared to what we had in Section~\ref{sec:VEVs}). Then
\begin{align}
	\cD_\psi s_\lambda &\mathop{\coloneqq}^{\eqref{eq:Dpsidef}} \exp\left(\sum_{(i,j)\in\lambda} \psi(\hbar(j-i))\right)\,s_\lambda.
\end{align}

Consider the following two partition functions
\begin{align}\label{eq:Zcorr}
Z& =e^{F}\coloneqq\Vev{e^{\sum_{i=1}^\infty \frac{J_i p_i}{i\hbar}}e^{\sum_{i=1}^\infty \frac{J_i t_i}{i\hbar}} \cD_\psi e^{\sum_{i=1}^\infty \frac{s_i J_{-i}}{i\hbar}}},
\\\label{eq:Zveecorr}
Z^\vee& =e^{F^\vee}\coloneqq\Vev{e^{\sum_{i=1}^\infty \frac{J_i p_i}{i\hbar}}\cD_\psi^{-1}e^{\sum_{i=1}^\infty \frac{J_i t_i}{i\hbar}} \cD_\psi e^{\sum_{i=1}^\infty \frac{s_i J_{-i}}{i\hbar}}}.
\end{align}
The first one specializes for $t=0$ to the ACEH partition function studied in~\cite{ACEH}, $Z^{\rm ACEH}=Z|_{t=0}$. On the other hand, for $t\ne0$ it is recovered from the ACEH function by a shift of variables,
\begin{equation}
	Z(p,t)=Z^{\rm ACEH}(p_1+t_1,p_2+t_2,\dots)=Z|_{t_i=0,p_i=p_i+t_i}.
\end{equation}

The $n$-point functions are then defined as follows: 
\begin{align}
W_{g,n}(X_1,\dots,X_n)&\coloneqq[\hbar^{2g-2+n}]\sum_{m_1,\dots,m_n=1}^\infty\frac{\partial^n F}{\partial p_{m_1}\dots\partial p_{m_n}}|_{p=0}\prod_{i=1}^n m_i X_i^{m_i}\label{eq:Wcorr}\\
&=[\hbar^{2g-2+n}]\sum_{m_1,\dots,m_n=1}^\infty\prod_{i=1}^n X_i^{m_i}
 \Vev{J_{m_1}\dots J_{m_n}e^{\sum_{i=1}^\infty \frac{J_i t_i}{i\hbar}} \cD_\psi
  e^{\sum_{i=1}^\infty \frac{s_i J_{-i}}{i\hbar}}}^{\circ}, \notag\\
W_{g,n}^\vee(w_1,\dots,w_n)&\coloneqq[\hbar^{2g-2+n}]\sum_{m_1,\dots,m_n=1}^\infty\frac{\partial^nF ^\vee}{\partial p_{m_1}\dots\partial p_{m_n}}|_{p=0}\prod_{i=1}^n m_i w_i^{m_i}\label{eq:Wveecorr}\\
&=[\hbar^{2g-2+n}]\sum_{m_1,\dots,m_n=1}^\infty  \prod_{i=1}^nw_i^{m_i}
 \Vev{J_{m_1}\dots J_{m_n}\cD_\psi^{-1}e^{\sum_{i=1}^\infty \frac{J_i t_i}{i\hbar}} \cD_\psi
  e^{\sum_{i=1}^\infty \frac{s_i J_{-i}}{i\hbar}}}^{\circ}. \notag
\end{align}

\begin{notation}	
We call the partition functions and the $n$-point functions defined above in this subsection (and used up until Section~\ref{sec:BeyondKP}) \emph{KP-type} as opposed to the more general type discussed in Section~\ref{sec:BeyondKP}.
\end{notation}
\begin{remark}	
In the case $t=0$ the coefficients of $W_{g,n}$ enumerate various kinds of Hurwitz numbers for various 
specializations of the other parameters. In the most general case when $c$ and $s$ parameters are not specified, these are the so-called weighted double Hurwitz numbers, see e.g.~\cite{ACEH}. The insertion of $t$-parameters corresponds to additional markings of ramified coverings enumerated by Hurwitz numbers. Therefore, we can refer to the coefficients of $W_{g,n}$ as the $t$-deformed weighted Hurwitz numbers. On the other hand, the functions $W_{g,n}^\vee$ are introduced in a formal way by an explicit formula~\eqref{eq:Wveecorr} and combinatorial meaning of their coefficients is not clear at the moment. The only case when the combinatorial meaning is assigned is the case of enumeration of (hyper)maps and fully simple (hyper)maps. In this way, we consider the coefficients of $W_{g,n}^\vee$ as the ``fully simple analogues'' of ($t$-deformed) weighted double Hurwitz numbers.
\end{remark}

\subsection{Conjectural topological recursions}
We conjecture that the $n$-point functions $W_{g,n}$ and $W_{g,n}^\vee$ satisfy topological recursions on some spectral curve.
\begin{conjecture}\label{conj:toprec}
	Consider the $n$-point functions $W_{g,n}(X_1,\dots,X_n)$ and $W^\vee_{g,n}(w_1,\dots,w_n)$ defined in \eqref{eq:Wcorr} and \eqref{eq:Wveecorr}, respectively. Here it is important that only a finite number of parameters $c,\,t,\,s$ are non-zero, as we have assumed at the beginning of Section~\ref{sec:KPDuality-beginning}; moreover we assume that these parameters are in \emph{general position}, more on that below. Also consider the curve  $\Sigma= {\mathbb C} P^1$ 
	with the global coordinate $z$, and a bidifferential $B(z_1,z_2)\coloneqq dz_1dz_2/(z_1-z_2)^2$.
	
	Then there exist meromorphic functions $X=X(z)$, $w=w(z)$, and $y=y(z)$ on $\Sigma$ such that
	\begin{itemize}
		\item $X(z)$ and $w(z)$ serve as local coordinates near $z=0$, i.e.
		\begin{equation}
			X(0)=w(0)=0,\qquad dX/dz(0)\neq 0,\; dw/dz(0)\neq 0.
		\end{equation}
		\item The $n$-point differentials $\omega_{g,n}$ and $\omega^\vee_{g,n}$, reconstructed via topological recursion (as in Remark~\ref{rem:origrec}) for the spectral curve data $(\Sigma,X(z),y(z),B(z_1,z_2))$ and $(\Sigma,w(z),y(z),B(z_1,z_2))$ respectively, have the power expansions at $z=0$ given by
	\begin{align}
		\omega_{g,n}&=W_{g,n}(X_1,\dots,X_n)\prod_{i=1}^n\frac{dX_i}{X_i}+\delta_{g,0}\delta_{n,2}\frac{dX_1dX_2}{(X_1-X_2)^2},\\
		\omega_{g,n}^\vee&=W_{g,n}^\vee(w_1,\dots,w_n)\prod_{i=1}^n\frac{dw_i}{w_i}+\delta_{g,0}\delta_{n,2}\frac{dw_1dw_2}{(w_1-w_2)^2},
	\end{align}
	respectively. Here $\forall i\;\; X_i=X(z_i),\; w_i=w(z_i)$.
	\item Moreover, the functions $X(z)$, $w(z)$ and $y(z)$ are related by the identity
	\begin{equation}	
		w=X\,\phi(y),
	\end{equation}
	where $\phi$ is the one of \eqref{eq:phidef}.	
\end{itemize}
	
	
	
	The aforementioned general position requirement is needed since for some special values of parameters $c,\,t,\,s$ the functions $X(z)$, $w(z)$ will acquire non-simple critical points, and thus the ordinary topological recursion will not be applicable; one has to use the Bouchard-Eynard recursion~\cite{BE13} for these special situations, but we are not giving the respective precise statement here for brevity.
	
\end{conjecture}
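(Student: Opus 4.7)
The plan is to reduce Conjecture~\ref{conj:toprec} to a known case plus the ordinary vs fully simple duality proved in this paper. Since the two spectral curves $(\Sigma, X(z), y(z), B)$ and $(\Sigma, w(z), y(z), B)$ are related by $w = X\phi(y)$, which is precisely the change of variables entering the generalised duality of Section~\ref{sec:duality}, it should suffice to establish topological recursion for $W_{g,n}$ on $(\Sigma, X, y, B)$; the statement for $W_{g,n}^\vee$ would then follow from the duality, in complete analogy with the way Theorem~\ref{th:BGF} is to be deduced from topological recursion for ordinary maps via Corollaries~\ref{cor:lqloop} and~\ref{cor:MTheta}.

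For the ordinary side I would follow the blueprint of~\cite{bychkov2020topological,bychkov2021explicit}. The case $t=0$ coincides with the ACEH partition function, for which topological recursion is known. Turning on the $t$-parameters amounts to the shift $p_i\mapsto p_i+t_i$ in~\eqref{eq:Zcorr}, so the $t$-deformed $W_{g,n}$ are connected correlators of the ACEH tau-function evaluated at a non-zero reference point, which geometrically corresponds to a deformation of the spectral curve along the $t_i$. Concretely, one applies the graph expansion~\eqref{eq:connectedW-nomultedges} together with the vertex operator formula~\eqref{eq:VertexOp} to write $W_{g,n}$ as a finite sum over decorated graphs; one then defines $X(z)$ implicitly by the requirement that $W_{0,1}$ becomes rational in $z$ after the substitution $X=X(z)$, and reads $y(z)$ off from $W_{0,1}(X(z)) = y(z)$. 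Local solvability of the defining algebraic equation follows from the implicit function theorem at a base point.

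Next one establishes the projection property and the loop equations for $W_{g,n}$. The graph expansion manifestly exhibits the poles of $W_{g,n+1}(z, z_{\llbracket n\rrbracket})$ in $z$ as located at the critical points of $X(z)$, giving the projection property. The linear loop equations then follow as a local symmetry statement at each critical point, produced by the even structure of the $\cS$-factors in the edge weights. The quadratic loop equations require a genuine computation that one performs directly at the graph-sum level, adapting the residue/commutator techniques of~\cite{bychkov2020topological,bychkov2021explicit}. The main new input compared to the ACEH case is bookkeeping of how the $t$-parameters enter the edge weights and the defining equation of $X(z)$; this should produce precisely the deformed spectral curve demanded in the conjecture. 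Once the ordinary side is in place, the fully simple side follows from the algebraic duality formulas of Section~\ref{sec:duality}, which send $(\Sigma, X, y, B)$ to $(\Sigma, X\phi(y), y, B) = (\Sigma, w, y, B)$.

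The hardest step is the quadratic loop equation in the $t$-deformed setting together with the rationality of $W_{g,n}$ in the $z$-variables: one must check that the explicit parametrization $X(z)$ not only exists but also yields only simple critical points on an open dense subset of parameter space, and that the residue manipulations that prove the quadratic loop equation in the ACEH framework remain valid with the shifted basepoint. The ``general position'' hypothesis in the statement is precisely the assertion that we stay on this open set, while failure of simplicity of critical points along walls is exactly the regime where one should expect to need the Bouchard--Eynard recursion~\cite{BE13} instead.
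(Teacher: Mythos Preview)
The statement you are attempting to prove is presented in the paper as an open \emph{conjecture}; the paper does not contain a proof. The remark immediately following the conjecture states explicitly that it ``remains open, even just for the case of the `ordinary' functions $W_{g,n}$,'' and that topological recursion for $W_{g,n}$ is proved only for the specialization $t=0$. So there is no ``paper's own proof'' to compare against; the relevant question is whether your proposal actually closes the gap.

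It does not. Your outline is a reasonable strategy, and indeed it is essentially the strategy the authors have in mind when they say they ``expect'' the statement to extend, but the two steps you flag as hardest are genuinely unresolved and your text does not resolve them. First, the existence of a rational parametrization $X(z)$ making all $W_{g,n}$ rational in $z$ for $t\neq 0$ is not something you can read off the graph expansion; the $t=0$ (ACEH) result uses a specific algebraic curve, and the shift $p_i\mapsto p_i+t_i$ deforms that curve in a way whose global structure (number and simplicity of critical points, rationality of the deformed $X$) has to be analysed, not asserted. Second, your claim that the projection property is ``manifest'' from the graph expansion is too quick: even in the maps case, Section~\ref{sec:TRforFS} spends Lemmata~\ref{prop:nopmone}--\ref{prop:nozip} showing that the spurious poles of $\Mf^\vee_{g,n}$ (at $z=\pm1$, at $y=-1$, at $0$ and $\infty$) actually cancel; nothing in the general graph formula rules these out automatically.

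Finally, your plan to transfer topological recursion from the ordinary side to the fully simple side ``by duality'' overstates what Theorems~\ref{th:WWvee}--\ref{th:WveeW} give you. Those theorems provide closed algebraic formulas relating the two families of $n$-point functions, and Corollary~\ref{cor:rational} transfers rationality, but they do not transfer the loop equations or the projection property for free. In the maps case the paper carries out this transfer by hand (Lemma~\ref{lem:curlyWexplicit}, Corollary~\ref{cor:lqloop}, and the pole analysis of Section~5.2), exploiting specific features of $\phi(y)=1+y$ and of the known pole structure of the map correlators. A general argument would need analogues of all of these, and that is precisely the content of the conjecture.
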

\begin{remark}
	For the two topological recursions conjectured above we have:
	\begin{itemize}
		\item The `poles' of the two recursions and the involutions near the poles are provided by the functions $X$ and $w$, respectively, considered as ramified coverings $\Sigma\to {\mathbb C} P^1$.
		\item The $(0,2)$ differential is the standard bidifferential $\omega_{0,2}=\omega^\vee_{0,2}=B(z_1,z_2)=\frac{dz_1dz_2}{(z_1-z_2)^2}$.
		\item The $(0,1)$-differentials are given by
		\begin{align}
			\omega_{0,1}&=y\,\frac{dX}{X},\\
			\omega^\vee_{0,1}&=y\,\frac{dw}{w},
		\end{align}
		in other words, the functions $W_{0,1}=W_{0,1}^\vee=y$ considered as functions on the spectral curve coincide.
		\end{itemize}
\end{remark}

\begin{remark}	
The conjecture in this form remains open, even just for the case of the ``ordinary'' functions $W_{g,n}$ (disregarding the ``fully simple'' ones, $W^\vee_{g,n}$). The topological recursion statement is proved for $W_{g,n}$ for the specialization $t=0$ only, see~\cite{ACEH},~\cite{bychkov2020topological}. 
We expect that this statement extends to the $t$-deformed case with the same number of critical points for the function $X$. Meanwhile, we can prove certain duality statements for $W_{g,n}$ and $W_{g,n}^\vee$, see below (Theorems~\ref{th:WWvee} and~\ref{th:WveeW}), which can potentially allow one to prove the topological recursion on one side if it is already known on the other side independently. We then do the latter for the maps / fully simple maps case, i.~e., we prove the Borot--Garcia-Failde conjecture (Theorem~\ref{th:BGF}) this way.
\end{remark}

\subsection{Duality formulation}
In this subsection we formulate the duality statement for the $W_{g,n}$ and $W^\vee_{g,n}$. For this we need to introduce \emph{multigraphs}:
\begin{notation}\label{not:multigraphs}	
Let $V$ be a finite set. A \emph{multiedge} is an arbitrary subset of $V$. A \emph{multigraph} with the set of vertices $V$ is an arbitrary collection of multiedges. A multigraph can be represented as a bipartite graph with white and black vertices. The white vertices are considered as `true' vertices while black ones are regarded as multiedges. Every multiedge connects several white vertices, namely, those connected with the corresponding black vertex by edges in the usual sense. A multigraph is called \emph{connected} if the corresponding bipartite graph is connected. All multigraphs that we are considering are \emph{simple} meaning that neither multiple multiedges nor multiple edges in the corresponding bipartite graph are allowed. We denote by $\Gamma_n$ the set of all connected multigraphs on $n$ numbered (white) vertices. For a multigraph $\gamma\in\Gamma_n$ we denote by $V(\gamma)$ and $E(\gamma)$ the set of its vertices and multiedges, respectively. 
\end{notation}
\begin{remark}
	Notation~\ref{not:multigraphs} just gives a more detailed and formal definition of the same set $\Gamma_n$ which appeared at the end of Section~\ref{sec:VEVs}; the graphs from this set differ from the graphs from the set $\tilde\Gamma_n$ of that section by lacking any multiple edges or ``ordinary'' edges connecting white vertices. 
\end{remark}

We formulate an expression for $W_{g,n}$ as a sum over multigraphs on $n$ vertices involving $W^\vee_{g',n'}$'s for different pairs of ${g',n'}$ and a similar expression for $W^\vee_{g,n}$ in terms of $W_{g',n'}$'s. These expressions have a symmetric form providing the duality between the functions $W_{g,n}$ and $W^\vee_{g,n}$: the collection of one of these two functions determines uniquely the other one.

Define
\begin{align}\label{eq:first}
\tilde\delta_j^\vee&\coloneqq\sum_{k=1}^\infty u_j\hbar\,k\,\cS(u_j\,\hbar\, k)w_j^k\frac{\partial}{\partial p_k},\\
\lwgtpart^\vee_{j_1,\dots,j_k}&\coloneqq\tilde\delta^\vee_{j_1}\dots\tilde\delta^\vee_{j_k} F^\vee\bigm|_{p=0}\\
&=\Biggl(\prod_{i=1}^k u_{j_i}\hbar\,\cS(u_{j_i}\,\hbar w_{j_i}\partial_{w_{j_i}})\Biggr)
   \sum_{g=0}^\infty \hbar^{2g-2+k}W^\vee_{g,k}(w_{j_1},\dots,w_{j_k}),\notag\\
\lwgt^\vee_{j_1,\dots,j_k}&\coloneqq\left(\prod_{i=1}^k(e^{\tilde\delta^\vee_{j_i}}-1)\right)F^\vee\bigm|_{p=0}\\
&=\sum_{r_1,\dots,r_k=1}^\infty\tfrac{1}{r_1!\dots r_k!}\lwgtpart^\vee_{\underbrace{\scriptstyle j_1,\dots,j_1}_{r_1},\dots,\underbrace{\scriptstyle j_k,\dots,j_k}_{r_k}}\notag.
\end{align}

Let $\gamma\in \Gamma_n$ be a multigraph. Let $e=(j_1,\dots,j_k)\in E(\gamma)$ be a multiedge connecting vertices $j_1,\dots,j_k$. We define the weight of this multiedge as follows. If $k>2$, we set
\begin{align}
\wgt^\vee(j_1,\dots,j_k)\coloneqq&\exp\left(\left(\prod_{i=1}^k(e^{\tilde\delta^\vee_{j_i}}-1)\right)F^\vee\bigm|_{p=0}\right)-1\\
=&e^{\lwgt^\vee_{j_1,\dots,j_k}}-1.\notag
 \end{align}
In the exceptional cases $k\le2$ these formulas are slightly modified. Namely, if $k=2$, we set
\begin{align}
\lwgt^{\vee\mathrm{corr}}_{i,j}\coloneqq&\hbar^2u_{i}u_{j}\cS(u_{i}\hbar w_{i}\partial_{w_{i}})\cS(u_{j}\hbar w_{j}\partial_{w_{j}})\frac{w_{i} w_{j}}{(w_{i}-w_{j})^2},\\
\wgt^\vee(i,j)\coloneqq&e^{\lwgt^\vee_{i,j}+\lwgt^{\vee\mathrm{corr}}_{i,j}}-1.\
 \end{align}
Finally, in the case $k=1$ we set
\begin{align}
\wgt^\vee(j)\coloneqq&e^{\lwgt^\vee_{j}-u_j\,W^\vee_{0,1}(w_j)}.
 \end{align}
\begin{remark}
	This $\wgt^\vee(j_1,\dots,j_k)$ (and $\wgt(j_1,\dots,j_k)$ below) is similar to the $\genwgt$ we had in Section~\ref{sec:VEVs} (or, more precisely the former is a special case of the latter), except that it gets this correction for $k=1$, which was not the case for the $\genwgt$.
\end{remark}

With these notations, the main relation connecting the KP-type `ordinary' and 'fully simple' correlator functions reads as follows:
\begin{theorem}\label{th:WWvee}
\begin{align}\label{eq:Wgn}
W_{g,n}(X_1,\dots,X_n)&{}+\delta_{g,0}\delta_{n,2}\tfrac{X_1X_2}{(X_1-X_2)^2}=[\hbar^{2g-2+n}]U_1\dots U_n\sum_{\gamma\in\tilde\Gamma_n}
  \prod_{e\in E(\gamma)}\wgt^\vee(e),
\end{align}
where $U_i$ is the transformation sending a function $H(u_i,w_i)$ in $u_i$ and $w_i$ to the function $U_iH$ in $X_i$ given by
\begin{equation}\label{eq:Uorig}
(U_iH)(X_i)\coloneqq\sum_{r=0}^\infty \sum_{m=-\infty}^\infty
  X_i^m\;(\partial_y^{r}\phi_m(y)|_{y=0})\;[u_i^r w_i^m] e^{u_i W^\vee_{0,1}(w_i)}\frac{H(u_i,w_i)}{u_j\hbar\cS(u_j\hbar)}.
\end{equation}
If $H(u_i,w_i)/u_i$ is regular in $u_i$ (which is always the case for $n\geq 2$), we also have
\begin{align}\label{eq:Ured}
	(U_iH)(X_i)&=\sum_{j=0}^\infty (\tfrac{1}{Q^\vee(w_i)}w_i\partial_{w_i})^j\sum_{r=0}^\infty
	\tfrac{[v^j]L_r(v,\hbar,W^\vee_{0,1}(w_i))}{Q^\vee(w_i)}
	[u_i^r]\frac{H(u_i,w_i)}{u_j\hbar\cS(u_j\hbar)}\Bigm|_{w_i=w(X_i)},\\
	L_{r}(v,\hbar,y) &\coloneqq \left(\partial_y+v\psi'(y)\right)^r e^{v\left(\frac{\cS(v\hbar\partial_y)}{\cS(\hbar\partial_y)}-1\right)\psi(y)},\\
	Q^\vee(w_i)&\coloneqq\frac{w_i}{X_i}\frac{dX_i}{dw_i}=1-w_i\partial_{w_i}\log(\phi(W^\vee_{0,1}(w_i))).\label{eq:last}
\end{align}
\end{theorem}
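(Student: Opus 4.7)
The plan is to reduce the claim to the graph-sum formula~\eqref{eq:connectedW-nomultedges} of Section~\ref{sec:VEVs}, applied to a suitably rewritten VEV for $W_{g,n}$. First, starting from~\eqref{eq:Wcorr} and using that $\covac\cD_\psi = \covac$ (since $\cD_\psi$ fixes the vacuum), insert $\cD_\psi \cD_\psi^{-1}$ between $\prod_i J_{m_i}$ and the $t$-exponential, and move the inserted $\cD_\psi$ leftward through $\prod_i J_{m_i}$ by repeated use of $J_{m_i}\cD_\psi = \cD_\psi(\cD_\psi^{-1}J_{m_i}\cD_\psi)$. Absorbing the resulting leftmost $\cD_\psi$ into $\covac$ and summing over $m_i$ with weights $X_i^{m_i}$ yields
\begin{equation*}
\sum_g \hbar^{2g-2+n} W_{g,n}(X_1,\dots,X_n) = \Vev{\prod_{i=1}^n \cA_i \cdot \cD_\psi^{-1} e^{\sum_i J_i t_i/(i\hbar)} \cD_\psi\, e^{\sum_i s_i J_{-i}/(i\hbar)}}^{\circ},
\end{equation*}
where $\cA_i$ is the vertex operator~\eqref{eq:VertexOp} with $\psi_i = \psi$; crucially, the inner operator is exactly the one defining $F^\vee = \log Z^\vee$ in the $p$-variables.

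Second, I would apply the vertex operator expansion~\eqref{eq:VertexOp} to each $\cA_i$ together with Wick's theorem, placing the computation into the setting of Section~\ref{sec:VEVs}. Positive modes of $\cA_i$ contract either with negative modes from other $\cA_j$'s (producing ``ordinary edges'' weighted by $\frac{z_i z_j}{(z_i-z_j)^2}$-type factors, with $z_i$ identified with $w_i$) or act as derivatives $\partial/\partial p_r$ on the inner operator, whose connected $p$-correlators are precisely the $W^\vee_{g',k}$'s. The inclusion-exclusion identity~\eqref{eq:inclexclinv} then expresses $W_{g,n}$ as a sum over connected multigraphs on $n$ labeled white vertices, each multi-edge on vertices $j_1,\dots,j_k$ weighted by a connected fully simple correlator packaged as $\lwgtpart^\vee_{j_1,\dots,j_k}$. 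Exponentiating the sum over contraction multiplicities at each multi-edge yields the $e^{\lwgt^\vee}-1$ structure of $\wgt^\vee$ for $k \geq 3$; for $k=2$ the ordinary-edge BPZ correction is packaged as $\lwgt^{\vee\mathrm{corr}}_{i,j}$ inside the exponential; and for $k=1$, the $W^\vee_{0,1}$ piece is separated out via the explicit subtraction $-u_jW^\vee_{0,1}(w_j)$ inside $\wgt^\vee(j)$.

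Third, the external prefactor $\sum_m X_i^m[z_i^m]\sum_r \partial_y^r \phi_m(y)|_{y=0}[u_i^r] \cdot (u_i\hbar\cS(u_i\hbar))^{-1}$ from each $\cA_i$, with $z_i$ identified with $w_i$, is precisely the operator $U_i$ of~\eqref{eq:Uorig}, where the $e^{u_i W^\vee_{0,1}(w_i)}$ factor exactly compensates the $-u_j W^\vee_{0,1}(w_j)$ subtraction inside $\wgt^\vee(j)$, reducing the one-vertex case to the same pattern as the higher-valence ones. The $\delta_{g,0}\delta_{n,2}\,X_1X_2/(X_1-X_2)^2$ correction on the left-hand side of~\eqref{eq:Wgn} corresponds to the empty graph on two vertices, since the bare BPZ propagator is not produced by a connected graph with a genuine multi-edge. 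Finally, the reduced form~\eqref{eq:Ured} is obtained from~\eqref{eq:Uorig} by Lagrange inversion applied to the change of variables $X_i = w_i/\phi(W^\vee_{0,1}(w_i))$: the Jacobian produces $Q^\vee(w_i)$, and the differential-operator identity encoding $\phi_m$-derivatives as $[v^j]L_r(v,\hbar,y)$-coefficients is the substitution step, while the regularity hypothesis on $H(u_i,w_i)/u_i$ in $u_i$ ensures the inversion applies termwise.

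The main technical obstacle will be the careful bookkeeping of the unstable pieces: verifying that the subtraction $-u_j W^\vee_{0,1}(w_j)$ in $\wgt^\vee(j)$ matches exactly the $e^{u_i W^\vee_{0,1}(w_i)}$ factor in $U_i$, and that the $(0,2)$-correction on the LHS accounts for the empty two-vertex graph contribution. The Lagrange-inversion step yielding~\eqref{eq:Ured} from~\eqref{eq:Uorig} is also technically delicate, requiring a precise reorganization of $\partial_y^r \phi_m$-structures into $[v^j]L_r$-coefficients, which can in principle be verified by a generating-series identity but is easy to mis-compute.
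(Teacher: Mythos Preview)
Your proposal is correct and follows essentially the same approach as the paper: rewrite the VEV via $Z = \cD_\psi Z^\vee$, apply the vertex-operator expansion for $\bJ_m = \cD_\psi^{-1} J_m \cD_\psi$ (this is your $\cA_i$), commute the vertex operators to produce the graph weights built from the action of $\tilde\delta^\vee_i$ on $F^\vee$, exponentiate and reorganize into the multigraph sum, then pass to connected graphs by inclusion-exclusion; the reduced form~\eqref{eq:Ured} is indeed obtained from~\eqref{eq:Uorig} via the ``principal identity'' of~\cite{bychkov2021explicit}, which is the Lagrange-inversion-type statement you describe.

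One small correction to your bookkeeping of the unstable pieces: the $(0,2)$-correction on the left of~\eqref{eq:Wgn} does not come from an ``empty graph on two vertices'' (that graph is disconnected and hence not in $\tilde\Gamma_2$). Rather, the paper works throughout with the \emph{extended} disconnected function $\widehat W^\bullet_n$, in which the $m_i$ range over all integers; after passing to connected functions one obtains $\widehat W_{g,n} = W_{g,n} + \delta_{g,0}\delta_{n,2}\tfrac{X_1X_2}{(X_1-X_2)^2}$, and it is this extension that accounts for the extra term. The $\tfrac{w_1w_2}{(w_1-w_2)^2}$ piece sits inside $\wgt^\vee(1,2)$ via $\lwgt^{\vee\mathrm{corr}}_{1,2}$ and is transformed by $U_1U_2$ into the $X$-side correction.
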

The form \eqref{eq:Ured} can be obtained from \eqref{eq:Uorig} due to the fact that the `principal identity' of \cite{bychkov2021explicit} implies 
 that the action of $U_i$ can be expressed in this case through the local change of variables $w=w(X)$ defined by an implicit equation
\begin{equation}\label{eq:wX}
X=\frac{w}{\phi(W^\vee_{0,1}(w))}.
\end{equation}
The advantage of this form of $U_i$ is that when it is applied in~\eqref{eq:Wgn} the sums over $r$ and $j$ appear to be \emph{finite} for any fixed~$g$ and~$n$.

All relations \eqref{eq:first}--\eqref{eq:last} have dual analogues with the exchange of the generalised `ordinary' and `fully simple' functions. Namely, denoting
\begin{align}
\lwgtpart_{j_1,\dots,j_m}
&\coloneqq\Biggl(\prod_{i=1}^m u_{j_i}\hbar\,\cS(u_{j_i}\,\hbar X_{j_i}\partial_{X_{j_i}})\Biggr)
   \sum_{g=0}^\infty \hbar^{2g-2+m}W_{g,m}(X_{j_1},\dots,X_{j_m}),\\
\lwgt_{j_1,\dots,j_k}&\coloneqq\sum_{r_1,\dots,r_k=1}^\infty\tfrac{1}{r_1!\dots r_k!}\lwgtpart_{\underbrace{\scriptstyle j_1,\dots,j_1}_{r_1},\dots,\underbrace{\scriptstyle j_k,\dots,j_k}_{r_k}},\\
\lwgt^{\mathrm{corr}}_{i,j}&\coloneqq\hbar^2u_{i}u_{j}\cS(u_{i}\hbar X_{i}\partial_{X_{i}})\cS(u_{j}\hbar X_{j}\partial_{X_{j}})\frac{X_{i} X_{j}}{(X_{i}-X_{j})^2},\\
\wgt(j_1,\dots,j_k)&\coloneqq
e^{\lwgt_{j_1,\dots,j_k}+\delta_{k,2}\lwgt^{\mathrm{corr}}_{j_1,j_2}-\delta_{k,1} u_{j_1}W_{0,1}(X_{j_1})}-1+\delta_{k,1},
\quad (j_1,\dots,j_k)\in E(\gamma),
 \end{align}
we have:

\begin{theorem}\label{th:WveeW}
\begin{align}\label{eq:Wveegn}
W^\vee_{g,n}(w_1,\dots,w_n)&{}+\delta_{g,0}\delta_{n,2}\tfrac{w_1w_2}{(w_1-w_2)^2}=[\hbar^{2g-2+n}]U^\vee_1\dots U^\vee_n\sum_{\gamma\in\tilde\Gamma_n}
  \prod_{e\in E(\gamma)}\wgt(e).
\end{align}
The transformation $U^\vee_i$ takes the function $H(u_i,X_i)$ in $u_i$ and $X_i$ to the function $U^\vee_iH$ in $w_i$ given by
\begin{equation}
\label{eq:Uvee}
(U^\vee_iH)(w_i)\coloneqq\sum_{r=0}^\infty \sum_{m=-\infty}^\infty
  w_i^m\;(\partial_y^{r}\phi_{-m}(y)|_{y=0})\;[u_i^r X_i^m] e^{u_i W_{0,1}(X_i)}\frac{H(u_i,X_i)}{u_j\hbar\cS(u_j\hbar)}.
\end{equation}
If $H(u_i,X_i)/u_i$ is regular in $u_i$ (which is always the case for $n\geq 2$), then we have also
\begin{align}\label{eq:Uveered}
(U^\vee_iH)(w_i)&=\sum_{j=0}^\infty (\tfrac{1}{Q(X_i)}X_i\partial_{X_i})^j\sum_{r=0}^\infty
\tfrac{[v^j]L_r(-v,\hbar,W_{0,1}(X_i))}{Q(X_i)}
[u_i^r]\frac{H(u_i,X_i)}{u_j\hbar\cS(u_j\hbar)}\Bigm|_{X_i=X(w_i)},\\
L_{r}(v,\hbar,y) &\coloneqq \left(\partial_y+v\psi'(y)\right)^r e^{v\left(\frac{\cS(v\hbar\partial_y)}{\cS(\hbar\partial_y)}-1\right)\psi(y)},\\
Q(X_i)&\coloneqq\frac{X_i}{w_i}\frac{dw_i}{dX_i}=1+X_i\partial_{X_i}\log(\phi(W_{0,1}(X_i))),
\end{align}
where the change $X=X(w)$ is defined through an implicit equation
\begin{equation}\label{eq:Xw}
w=X\,\phi(W_{0,1}(X)).
\end{equation}
\end{theorem}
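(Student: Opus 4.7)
The plan is to prove Theorem~\ref{th:WveeW} by the same method used for Theorem~\ref{th:WWvee}, exploiting the involution $\psi\mapsto-\psi$ under which $\cD_\psi\leftrightarrow\cD_\psi^{-1}$, $\phi_m\leftrightarrow\phi_{-m}$, $W_{g,n}\leftrightarrow W^\vee_{g,n}$, and $X\leftrightarrow w$. Concretely, starting from the VEV definition~\eqref{eq:Wveecorr} of $W^\vee_{g,n}$, I first use the left-invariance $\covac\cD_\psi=\covac$ to pull a factor of $\cD_\psi$ through the external currents, rewriting the defining VEV as
\begin{equation*}
\Vev{\prod_{i=1}^n J_{m_i} w_i^{m_i}\,\cD_\psi^{-1}\,\mathcal{K}}^\circ
=\Vev{\prod_{i=1}^n \bigl(\cD_\psi J_{m_i} w_i^{m_i}\cD_\psi^{-1}\bigr)\,\mathcal{K}}^\circ,
\end{equation*}
where $\mathcal{K}=e^{\sum_i J_i t_i/(i\hbar)}\cD_\psi e^{\sum_i s_i J_{-i}/(i\hbar)}$ is exactly the kernel of the partition function $Z$ whose connected correlators are the ``ordinary'' $W_{g,n}$. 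This is the key reduction: the $\cD_\psi^{-1}$ that was sitting between the insertions and the kernel has been absorbed into a conjugation of the insertions themselves.

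Next, I apply the vertex-operator expansion~\eqref{eq:VertexOp} with $\psi\mapsto-\psi$ to each conjugated insertion $\cD_\psi(\sum_\ell J_\ell w_i^\ell)\cD_\psi^{-1}$. Using $\phi_m(y)=\exp\bigl(m\cS(m\hbar\partial_y)/\cS(\hbar\partial_y)\cdot\psi(y)\bigr)$ and the evenness of $\cS$, the formal prefactor $\partial_y^r\phi_m(y)|_{y=0}$ appearing in~\eqref{eq:VertexOp} is replaced by $\partial_y^r\phi_{-m}(y)|_{y=0}$, while the bosonic exponentials $\exp(\sum u\hbar\cS(u\hbar i)J_{\pm i}z^{\mp i})$ are unchanged. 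Then I invoke the graph expansion of Section~\ref{sec:VEVs}: Wick contractions of the bosonic parts among themselves yield the $\lwgt^{\mathrm{corr}}_{i,j}$ terms, while contractions of the bosonic parts with the kernel $\mathcal{K}$ yield precisely the $W_{g,n}$'s, since such contractions are by definition the connected $n$-point correlators of $Z$. Summing over connected multigraphs $\gamma\in\tilde\Gamma_n$ with multi-edge weights $\wgt(j_1,\dots,j_k)$ defined as in the statement, and collecting the per-vertex prefactors into the operator $U^\vee_i$ in the form~\eqref{eq:Uvee}, one arrives at~\eqref{eq:Wveegn}. The $(0,2)$ disconnected correction $w_1w_2/(w_1-w_2)^2$ emerges from the $k=1$ normalization $\wgt(j)=e^{\lwgt_j-u_jW_{0,1}(X_j)}$ (without a $-1$ subtraction), exactly mirroring the corresponding mechanism in the proof of Theorem~\ref{th:WWvee}.

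Finally, to pass from~\eqref{eq:Uvee} to the reduced form~\eqref{eq:Uveered}, I invoke the \emph{principal identity} of~\cite[Section~4]{bychkov2021explicit}, a Lagrange-inversion-type statement that converts sums of the shape $\sum_{m,r}\partial_y^r\phi_{-m}(y)|_{y=0}\,w^m\,(\cdots)$ into a finite expression implementing the local change of variable $X\mapsto w=X\phi(W_{0,1}(X))$ of~\eqref{eq:Xw}. The associated Jacobian $Q(X_i)=X_i\,dw_i/(w_i\,dX_i)=1+X_i\partial_{X_i}\log\phi(W_{0,1}(X_i))$ then appears in the denominator, and the operator $L_r(-v,\hbar,W_{0,1}(X_i))$ arises as the image under $\psi\mapsto-\psi$ of $L_r(v,\hbar,W^\vee_{0,1}(w_i))$ in~\eqref{eq:Ured}. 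The main obstacle is exactly this last step: one must check carefully that the principal identity of~\cite{bychkov2021explicit}, originally formulated with a specific sign convention, transposes correctly under the involution $\psi\mapsto-\psi$, and that the implicit equation~\eqref{eq:Xw} is formally invertible at the expansion point (where both $W_{0,1}$ and $Q$ are regular and $Q(0)\neq 0$), so that~\eqref{eq:Uvee} and~\eqref{eq:Uveered} really define the same operator on power series in $w_i$. Once this is established, the finiteness of the sums over $r$ and $j$ at fixed $(g,n)$ is automatic, and the theorem follows.
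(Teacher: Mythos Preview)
Your proposal is correct and follows essentially the same approach as the paper: the paper proves Theorem~\ref{th:WWvee} in detail and then states that the proof of~\eqref{eq:Wveegn} is similar, relying on precisely the mechanism you describe---conjugating the insertions to $\bJ^\vee_m=\cD_\psi J_m\cD_\psi^{-1}$ acting on $Z$, expanding via the vertex-operator formula with prefactors $\partial_y^r\phi_{-m}(y)|_{y=0}$, Wick-contracting to obtain the graph sum with weights built from the ordinary $W_{g,n}$, and invoking the principal identity of~\cite{bychkov2021explicit} for the reduced form~\eqref{eq:Uveered}. One minor remark: the $(0,2)$ correction on the left of~\eqref{eq:Wveegn} is, in the paper's language, attributed to the passage from the ordinary to the \emph{extended} connected $n$-point functions $\widehat W_{g,n}$ (summation over all integer $m_i$), rather than specifically to the absence of a $-1$ in $\wgt(j)$; but this is a matter of bookkeeping, not substance.
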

We give the proofs of the latter two theorems in Section~\ref{sec:DualProofs}.

\begin{corollary}\label{cor:rational} Assume that we are given rational functions $X(z)$, $w(z)$, $y(z)$, and $\phi(y)$ satisfying the identity
\begin{equation}
w(z)=X(z)\phi(y(z)).
\end{equation}
Consider the collections of functions $W_{g,n}$ and $W_{g,n}^\vee$ defined by~\eqref{eq:Wcorr}--\eqref{eq:Wveecorr} and expressed in $z_1,\dots,z_n$ through the substitutions $X_i=X(z_i)$ and $w_i=w(z_i)$, respectively. Then the functions $W_{g,n}$ are rational in $z$-coordinates for all $g$ and $n$ iff the functions $W^\vee_{g,n}$ are rational in $z$-coordinates for all $g$ and $n$.
\end{corollary}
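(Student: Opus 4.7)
The plan is to deduce the corollary directly from the duality relations of Theorems~\ref{th:WWvee} and~\ref{th:WveeW}. These express each of $W_{g,n}$ and $W^\vee_{g,n}$, modulo an explicit rational $(0,2)$-correction, as a sum over connected multigraphs on $n$ vertices built from the dual family and acted upon by the operator $U_i$ or $U_i^\vee$. Because the two theorems are entirely symmetric in the two families, it is enough to establish one implication; I would carry out the direction ``$W_{g,n}$ rational in $z$ implies $W^\vee_{g,n}$ rational in $z$'' via Theorem~\ref{th:WveeW}, and then remark that the other direction is identical with the roles of $X$ and $w$ swapped via Theorem~\ref{th:WWvee}.

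The first step is to establish the finiteness at fixed $\hbar$-order of the right-hand side of~\eqref{eq:Wveegn}. Each weight $\wgt(e)$ with $|e|\ge 2$ equals $e^{\lwgt+\ldots}-1$ with exponent starting at $\hbar^2$, and the single-vertex weight $\wgt(j)=e^{\lwgt_j-u_j W_{0,1}(X_j)}$ starts at $1+O(\hbar^2)$ thanks to the cancellation between the leading term of $\lwgt_j$ and the explicit counterterm $-u_j W_{0,1}(X_j)$. Extracting $[\hbar^{2g-2+n}]$ therefore leaves only finitely many multigraphs and finitely many Taylor terms per exponential. Each surviving contribution is a polynomial in the auxiliary variables $u_i$ whose coefficients are built by multiplication and $X_i\partial_{X_i}$-differentiation from the $W_{g',k}$ with $k\le n$. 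Under the substitution $X_i=X(z_i)$ the operator $X_i\partial_{X_i}$ becomes $\tfrac{X(z_i)}{X'(z_i)}\partial_{z_i}$, a rational differential operator in $z_i$; since by hypothesis the $W_{g',k}$'s are rational in the $z$-variables, each $\lwgtpart$, the correction $\lwgt^{\mathrm{corr}}_{i,j}$, and each weight $\wgt(e)$ is rational in $z_1,\dots,z_n$ at every $\hbar$-order.

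The second step is to verify that the operator $U^\vee_i$ in its reduced form~\eqref{eq:Uveered} preserves rationality in $z_i$. Under the rational parametrization, $W_{0,1}(X(z_i))=y(z_i)$ is rational, and hence $Q(X_i)=1+X_i\partial_{X_i}\log\phi(W_{0,1}(X_i))$ is rational in $z_i$ since $\phi$ is rational; consequently $\tfrac{1}{Q(X_i)}X_i\partial_{X_i}$ is a rational differential operator in $z_i$. Next, the coefficient of each monomial $v^j\hbar^k$ in $L_r(v,\hbar,y)$ is a rational expression in $y$ built from iterated applications of $\partial_y$ and $\psi'(y)=\phi'(y)/\phi(y)$ together with higher derivatives of $\psi$, all of which are rational in $y$; evaluating at $y=W_{0,1}(X_i)$ therefore gives a rational function of $z_i$. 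Finally, the terminal substitution $|_{X_i=X(w_i)}$ is a tautology on our parametrization: from $w=X\phi(y)$ together with $y=W_{0,1}\circ X$, the implicit inverse of $w=X\phi(W_{0,1}(X))$ sends $w(z_i)\mapsto X(z_i)$. Assembling, the right-hand side of~\eqref{eq:Wveegn} is rational in $z_1,\dots,z_n$, and subtracting the rational $(0,2)$-term $\tfrac{w_1w_2}{(w_1-w_2)^2}$ proves $W^\vee_{g,n}$ is rational.

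The main delicacy I anticipate, as flagged above, is the compatibility of the implicit inversion $X=X(w)$ built into $U^\vee_i$ with the given global rational parametrization. This is tacitly contained in the hypothesis: rationality of $W_{0,1}(X(z))$ in $z$ combined with the identity $w(z)=X(z)\phi(y(z))$ forces the identification $y(z)=W_{0,1}(X(z))$ as the unique rational identification of $y$ compatible with the definition of $W_{0,1}$ via~\eqref{eq:Wcorr}; once this identification is made, every remaining step is a routine bookkeeping of rational operations and their preservation under the finite truncation in $\hbar$.
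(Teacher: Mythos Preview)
Your proposal is correct and is precisely the argument the paper intends: the corollary is stated without proof, immediately after Theorems~\ref{th:WWvee} and~\ref{th:WveeW}, and your derivation via the reduced form~\eqref{eq:Uveered} of $U^\vee_i$ (finiteness at each $\hbar$-order, rationality of the weights, and rationality-preservation by $\tfrac{1}{Q}X\partial_X$ and $L_r$ under the parametrization) is exactly the content that makes it a corollary. Two small corrections: the bound ``$k\le n$'' on the arguments of the $W_{g',k}$ entering the weights is false (the repeated-index structure in $\lwgt$ allows $k>n$), though $k$ is still bounded at each fixed $\hbar$-order, which is all you use; and the reduced form~\eqref{eq:Uveered} does not apply to $n=1$ because $\wgt(1)$ has constant term~$1$ in $u_1$, so your argument as written only covers $n\ge 2$---for $n=1$ one notes $W^\vee_{0,1}(w(z))=W_{0,1}(X(z))$ (Section~\ref{sec:01case}) and, for $g\ge1$, splits off the constant term and applies the reduced form to the remainder (or invokes the explicit formula~\eqref{eq:Mveeform1}).
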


\subsection{Special cases and examples} \label{sec:examples}
In this subsection we list explicit formulas for $W_{g,n}$ and $W^\vee_{g,n}$ in terms of one another for the special cases $(g,n)=(0,1)$ and $(g,n)=(0,2)$, and for regular examples $(g,n)=(0,3)$ and $(g,n)=(1,1)$.
\subsubsection{$(0,1)$ case} \label{sec:01case}

The case $(g,n)=(0,1)$ is a unique case when the sum over graphs has a constant term in the expansion in the $u$-variables and the relations~\eqref{eq:Ured},~\eqref{eq:Uveered} cannot be applied directly. However, the computations similar to those in~\cite{bychkov2021explicit} give in this case
\begin{align}
W_{0,1}(X)&=\sum_{r=0}^\infty\sum_{m=1}^\infty X^m\,\partial_y^r\phi(y)^m\bigm|_{y=0}[u^rw^m]\frac{e^{u\,W_{0,1}^\vee(w)}}{u}=W_{0,1}^\vee(w(X)),\\ X&=\frac{w(X)}{\phi(W_{0,1}^\vee(w(X)))},\\
W^\vee_{0,1}(w)&=\sum_{r=0}^\infty\sum_{m=1}^\infty X^m\,\partial_y^r\phi(y)^{-m}\bigm|_{y=0}[u^rw^m]\frac{e^{u\,W_{0,1}(X)}}{u}=W_{0,1}(X(w)),\\ w&=X(w)\;\phi(W_{0,1}(X(w))).\label{eq:Wvee01}
\end{align}
As a corollary, we conclude that \emph{the changes $w(X)$ and $X(w)$ defined by~\eqref{eq:wX} and~\eqref{eq:Xw} are inverse to one another and the functions $W_{0,1}(X)$ and $W_{0,1}^\vee(w)$ are identified through this change}. Indeed, we have
\begin{equation}
  X(w(X))
  \xlongequal{\!\eqref{eq:Xw}\!}\frac{w(X)}{\phi(W_{0,1}(X(w(X))))}
  \xlongequal{\!\eqref{eq:Wvee01}\!}\frac{w(X)}{\phi(W^\vee_{0,1}(w(X)))}
  \xlongequal{\!\eqref{eq:wX}\!}X,
\end{equation}
and similarly one shows $w(X(w))=w$.

\subsubsection{$(0,2)$ case}
Out of four connected multigraphs on two vertices only one gives a contribution for $g=0$, and we get
\begin{align}
W_{0,2}(X_1,X_2)+\tfrac{X_1X_2}{(X_1-X_2)^2}&=[\hbar^0]U_1U_2u_1u_2\Bigl(W^\vee_{0,2}(w_1,w_2)+\tfrac{w_1w_2}{(w_1-w_2)^2}\Bigr)\\
&=\tfrac1{Q^\vee(w_1)Q^\vee(w_2)}\Bigl(W^\vee_{0,2}(w_1,w_2)+\tfrac{w_1w_2}{(w_1-w_2)^2}\Bigr).\notag
\end{align}
The dual computation of $W_{0,2}^\vee$ yields
\begin{equation}
W^\vee_{0,2}(w_1,w_2)+\tfrac{w_1w_2}{(w_1-w_2)^2}=\tfrac1{Q(X_1)Q(X_2)}\Bigl(W_{0,2}(X_1,X_2)+\tfrac{X_1X_2}{(X_1-X_2)^2}\Bigr).
\end{equation}
The two computations agree since we have through the change between $X$ and $w$
\begin{equation}
Q(X)=\frac1{Q^\vee(w)}.
\end{equation}
These relations can be rewritten also in a more symmetric form
\begin{equation}
\omega^\vee_{0,2}=W^\vee_{0,2}(w_1,w_2)\tfrac{dw_1}{w_1}\tfrac{dw_2}{w_2}+\tfrac{dw_1dw_2}{(w_1-w_2)^2}
=W_{0,2}(X_1,X_2)\tfrac{dX_1}{X_1}\tfrac{dX_2}{X_2}+\tfrac{dX_1dX_2}{(X_1-X_2)^2}=\omega_{0,2}.
\end{equation}
\begin{remark} For the special case of maps/fully simple maps it is known that there exists a local coordinate $z$ on the line with coordinate $X$ or $w$ such that the above bidifferential becomes $\frac{dz_1dz_2}{(z_1-z_2)^2}$ in this coordinate (it is the global affine coordinate $z$ on the spectral curve $\Sigma ={\mathbb C} P^1$, see Section~\ref{sec:mapsdef}).	Computer experiments indicate that it should also be the case in general. However, the problem of finding this coordinate for the general case remains open, cf. Conjecture~\ref{conj:toprec}.	
\end{remark}

\subsubsection{$(0,3)$ case}
Set
\begin{align}
y_i&\coloneqq W_{0,1}(X_i)=W_{0,1}^\vee(w_i),\\
Q^\vee_i&\coloneqq Q^\vee(w_i)\\
Q_i&\coloneqq Q(X_i)=\frac{1}{Q^\vee_i}.
\end{align}
Then,
\begin{align}\label{eq:W03}
W_{0,3}(X_1,X_2,X_3)&=\tfrac{W^\vee_{0,3}(w_1,w_2,w_3)}{Q^\vee_1Q^\vee_2Q^\vee_3}+
\sum_{i=1}^3\tfrac1{Q^\vee_i}w_i\partial_{w_i}\tfrac{\psi'(y_i)
\prod_{j\ne i}\left(W^\vee_{0,2}(w_i,w_j)+\tfrac{w_iw_j}{(w_i-w_j)^2}\right)}{Q^\vee_1Q^\vee_2Q^\vee_3}\\
\label{eq:Wvee03}W^\vee_{0,3}(w_1,w_2,w_3)&=\tfrac{W_{0,3}(X_1,X_2,X_3)}{Q_1Q_2Q_3}-
\sum_{i=1}^3\tfrac1{Q_i}X_i\partial_{X_i}\tfrac{\psi'(y_i)
\prod_{j\ne i}\left(W_{0,2}(X_i,X_j)+\tfrac{X_iX_j}{(X_i-X_j)^2}\right)}{Q_1Q_2Q_3}
\end{align}
where we identify functions in $X_i$ and $w_i$ on the two sides of these equations through the change~\eqref{eq:wX} and~\eqref{eq:Xw}.

\subsubsection{$(1,1)$ case} Applying Theorem~\ref{th:WveeW} to this case we obtain, after some simplification,
\begin{align}\label{eq:W11}
W_{1,1}(X)=&\tfrac{1}{Q^\vee(w)}W^\vee_{1,1}(w)
 +(X\partial_X)^2\left(\tfrac{1}{Q^\vee(w)}\left(\tfrac{\psi''(y)}{24}w\partial_wy+\tfrac{\psi'(y)^2}{24}(w\partial_w)^2y\right)\right)\\
 &\qquad\notag
 {}+X\partial_X\left(\tfrac{1}{Q^\vee(w)}\left(\tfrac{\psi''(y)}{24}(w\partial_w)^2y
 -\tfrac{\psi'(y)}{24}+\tfrac{\psi'(y)}{2}W^\vee_{0,2}(w,w)\right)-\tfrac{\psi'(y)}{24}\right)\\
\label{eq:Wvee11}W^\vee_{1,1}(w)=&\tfrac{1}{Q(X)}W_{1,1}(X)
 +(w\partial_w)^2\left(\tfrac{1}{Q(X)}\left(-\tfrac{\psi''(y)}{24}X\partial_Xy+\tfrac{\psi'(y)^2}{24}(X\partial_X)^2y\right)\right)\\
 &\qquad\notag
 {}-w\partial_w\left(\tfrac{1}{Q(X)}\left(\tfrac{\psi''(y)}{24}(X\partial_X)^2y
 -\tfrac{\psi'(y)}{24}+\tfrac{\psi'(y)}{2}W_{0,2}(X,X)\right)-\tfrac{\psi'(y)}{24}\right)
\end{align}
where we identify similarly functions in $X$ and $w$ through the change~\eqref{eq:wX} and~\eqref{eq:Xw} and where we denote $y=W^\vee_{0,1}(w)=W_{0,1}(X)$. Note that we have through this change
\begin{equation}
X\partial_X=\tfrac{1}{Q^\vee(w)}w\partial_w,\qquad w\partial_w=\tfrac1{Q(X)}X\partial_X.
\end{equation}

\subsection{Duality proof} \label{sec:KPduality-end}
\label{sec:DualProofs}
\begin{proof}[Proof of Theorems~\ref{th:WWvee} and~\ref{th:WveeW}]
Consider $Z=e^F$ and $Z^\vee=e^{F^\vee}$ as elements of the (bosonic) Fock space:
\begin{align}
Z& =e^{\sum_{i=1}^\infty \frac{J_i t_i}{i\hbar}} \cD_\psi e^{\sum_{i=1}^\infty \frac{s_i J_{-i}}{i\hbar}}\,\big| 0 \big\rangle;
\\
Z^\vee& =\cD_\psi^{-1}e^{\sum_{i=1}^\infty \frac{J_i t_i}{i\hbar}} \cD_\psi e^{\sum_{i=1}^\infty \frac{s_i J_{-i}}{i\hbar}}\,\big| 0 \big\rangle.
\end{align}
In fact, an explicit structure of these expressions is not important for the proof. The only property that is used in the derivation of~\eqref{eq:Wgn} and~\eqref{eq:Wveegn} is the following relation between these functions.
\begin{equation}
Z=\cD_{\psi}Z^\vee,\qquad Z^\vee=\cD_{\psi}^{-1}Z=\cD_{-\psi}Z.
\end{equation}
Denote
\begin{align}
	\bJ_{m}&\coloneqq D_{\psi}^{-1}J_m D_{\psi},\\
	\bJ^\vee_{m}&\coloneqq D_{\psi}J_m D^{-1}_{\psi}.	
\end{align}
Consider first the disconnected $n$-point functions (or rather the extended versions $\widehat{W}_n^\bullet$ of these functions in the sense of~\cite[Section 4]{bychkov2021explicit}). By definition, we have
\begin{align}
\widehat W^\bullet_{n}&=\sum_{m_1,\dots,m_n}X_1^{m_1}\dots X_n^{m_n}
 \big\langle 0\! \bigm|J_{m_1}\dots J_{m_n} Z\\
&=\sum_{m_1,\dots,m_n}X_1^{m_1}\dots X_n^{m_n}
 \big\langle 0\! \bigm|J_{m_1}\dots J_{m_n} D_{\psi}Z^\vee\notag\\
&=\sum_{m_1,\dots,m_n}X_1^{m_1}\dots X_n^{m_n}
 \big\langle 0\! \bigm|\bJ_{m_1}\dots \bJ_{m_n} Z^\vee,\notag
\end{align}
where the summation runs over tuples of all possible integer values of $m_i$'s, both positive and negative. Then, \cite[Proposition~3.1]{bychkov2021explicit} states that
\begin{align}	
\bJ_{m}
&=\sum_{r=0}^\infty \bigl(\partial_y^r\phi_m(y)\bigm|_{y=0}\bigr)\;[u^rw^m]
\tfrac{e^{\sum_{k=1}^\infty u\hbar\cS(u\hbar k)J_{-k}w^{-k}}e^{\sum_{k=1}^\infty u\hbar\cS(u\hbar k)J_{k}w^{k}}}
{u\,\cS(u\hbar)}.
\end{align}
Similarly, for the dual case we get
\begin{align}
\widehat W^{\vee\bullet}_{n}&=\sum_{m_1,\dots,m_n}w_1^{m_1}\dots w_n^{m_n}
 \big\langle 0\! \bigm|\bJ^\vee_{m_1}\dots \bJ^\vee_{m_n} Z,\\
\bJ^\vee_{m}&=\sum_{r=0}^\infty \bigl(\partial_y^r\phi_{-m}(y)\bigm|_{y=0}\bigr)\;[u^rX^m]
\tfrac{e^{\sum_{k=1}^\infty u\hbar\cS(u\hbar k)J_{-k}X^{-k}}e^{\sum_{k=1}^\infty u\hbar\cS(u\hbar k)J_{k}X^{k}}}
{u\,\hbar\cS(u\hbar)}.\label{eq:cJvee}
\end{align}
Next, we substitute these expressions for $\bJ_m$ and commute the corresponding vertex operators moving positive $J$-operators to the right and negative ones to the left. Then, we compute
\begin{align}
\widehat W^\bullet_{n}&=\sum_{m_1,\dots,m_n}\sum_{r_1,\dots,r_n=0}^\infty \prod_{i=1}^n\bigl(\partial_y^{r_i}\phi_{m_i}(y)\bigm|_{y=0}\bigr)\;[\prod_{i=1}^n u_i^{r_i}w_i^{m_i}]\\
&\quad \prod_{i=1}^n\tfrac{1}{u_i\hbar\cS(u_i\hbar)}  \prod_{1\le k<\ell\le n}
e^{\hbar^2u_ku_\ell\cS(u_k\hbar w_k\partial_{w_k})\cS(u_\ell\hbar w_\ell\partial_{w_\ell})\frac{w_k w_\ell}{(w_k-w_\ell)^2}}\notag\\
&\qquad\times \big\langle 0\! \bigm|\prod_{i=1}^n e^{\sum_{k=1}^\infty u_i\hbar\cS(u_i\hbar k)J_{k}w_i^{k}} Z^\vee\notag\\
&=U_1\dots U_n \prod_{i=1}^n e^{-u_i W^\vee_{0,1}(w_i)}
\prod_{1\le k<\ell\le n}\!\!\!e^{\lwgt^{\vee\mathrm{corr}}_{k,\ell}}\;\;
\left(\prod_{i=1}^n e^{\tilde\delta^\vee_i}\right)e^{F^\vee}\Biggm|_{p=0}\notag
\\
&=U_1\dots U_n e^{-\sum_{i=1}^n u_i W^\vee_{0,1}(w_i)+
\sum_{1\le k<\ell\le n}\lwgt^{\vee\mathrm{corr}}_{k,\ell}}\;\;
  e^{\left(\prod_{i=1}^n e^{\tilde\delta^\vee_i}\right)F^\vee\bigm|_{p=0}}\notag
\\
&=U_1\dots U_n
e^{-\sum_{i=1}^n u_i W^\vee_{0,1}(w_i)+
\sum_{1\le k<\ell\le n}\lwgt^{\vee\mathrm{corr}}_{k,\ell}\;\;+\sum_{I\subset \{1,\dots,n\}}\lwgt^\vee_I}\notag
\\
&=U_1\dots U_n \prod_{I\subset \{1,\dots,n\}}(1+\wgt^\vee(I))\notag
\\
&=U_1\dots U_n \sum_\gamma\prod_{e\in E(\gamma)}\wgt^\vee(I)\notag,
\end{align}
where the last summation goes over all multigraphs $\gamma$ on $n$ numbered vertices, both connected and disconnected ones. The inclusion/exclusion procedure used in the passage from disconnected to connected $n$-point functions singles out exactly connected multigraphs. On the other hand, by~\cite{bychkov2021explicit}, the `connected extended $n$-point functions' are given by
\begin{equation}
\widehat W_{g,n}(X_1,\dots,X_n)=W_{g,n}(X_1,\dots,X_n)+\delta_{g,0}\delta_{n,2}\tfrac{X_1X_2}{(X_1-X_2)^2}.
\end{equation}
This completes the proof of~\eqref{eq:Wgn}. The proof of~\eqref{eq:Wveegn} is similar.

\end{proof}

\subsection{Maps vs fully simple maps}\label{sec:duality}

The enumeration of maps corresponds to the specialization $s_k=\delta_{k,2}$ and $\phi(y)=1+y$. We have in this case
\begin{equation}	
\phi(y)=1+y=\frac{w}{X}.
\end{equation}
Therefore, we have
\begin{align}
\omega_{0,1}&=y\frac{d X}{X}=X\,(1+y)\frac{d X}{X^2}-\frac{d X}{X}=-w\,d(X^{-1})-\frac{dX}{X},\\
\omega^\vee_{0,1}&=y\frac{d w}{w}=\frac{1+y}{w}d w-\frac{d w}{w}=X^{-1}\,d w-\frac{d w}{w}
\end{align}
The topological recursion relations make use of the odd part of $\omega_{0,1}$ with respect to the involution only. It follows that adding even summands to this form does not change the result of the recursion. Therefore, we can set in this case equally
\begin{equation}
	\omega_{0,1}=-w\,d(X^{-1}),\qquad \omega^\vee_{0,1}=X^{-1}\,d w,
\end{equation}
and this will not affect the computation of the higher differentials. Therefore, our duality statement is equivalent in this case to the `$(x\leftrightarrow y)$-symmetry' conjecture of~\cite{borot2018simple}: we use $X^{-1}$ and $w$ as the conventional $x$ and $y$ coordinates of topological recursion. Note that in the general case our formulation of ordinary vs fully simple duality is not reduced to the $x\leftrightarrow y$ duality.

\subsection{Beyond KP integrability}\label{sec:BeyondKP} The generalised ordinary vs fully simple duality theory developed in Sections~\ref{sec:KPDuality-beginning}-\ref{sec:KPduality-end} can be applied (with literally the same proofs) to more general VEVs.
Let $\psi=\sum_{k=1}^\infty\sum_{m=0}^\infty c_{k,m}y^k\hbar^{2m}$ be a formal series in $y$ and $\hbar^2$, and $t_{g;k_1,\dots,k_m}$ and $s_{g;k_1,\dots,k_m}$ (for $g\ge0$, $m\ge 1$, $k_i\ge1$) be some numbers symmetric in the indices $k$. Here we no longer assume that only a finite number of the coefficients $c$, $t$, and $s$ are non-zero (as we did at the beginning of this section). Define
\begin{align}
	&\Mf^\bullet_n(X_1,\dots,X_n) \coloneqq \\ \nonumber
	&\quad = \big\langle 0 \big|  \left(\prod_{i=1}^n \sum_{\ell_i=1}^\infty \dfrac{J_{\ell_i}}{\ell_i} X_i^{\ell_i}\right) \exp \left({\sum_{g=0}^\infty \sum_{m=1}^{\infty} \frac{\hbar^{2g-2+m}}{m!} \sum_{k_1,\dots,k_m=1}^\infty t_{g;k_1,\dots,k_m} \prod_{j=1}^m \frac{J_{k_j}}{k_j} }\right)   \\ \notag
	& \qquad \qquad
	\times\cD_{\psi}\exp \left({\sum_{g=0}^\infty \sum_{m=1}^{\infty} \frac{\hbar^{2g-2+m}}{m!} \sum_{k_1,\dots,k_m=1}^\infty s_{g;k_1,\dots,k_m} \prod_{j=1}^m \frac{J_{-k_j}}{k_j} }\right) \big| 0 \big\rangle\\ \nonumber
	&\quad =\big\langle 0 \big|  \left(\prod_{i=1}^n \sum_{\ell_i=1}^\infty \dfrac{J_{\ell_i}}{\ell_i} X_i^{\ell_i}\right)  \cD_{\psi}
	\exp \left({\sum_{g=0}^\infty \sum_{m=1}^{\infty} \frac{\hbar^{2g-2+m}}{m!} \sum_{k_1,\dots,k_m=1}^\infty \widetilde{s}_{g;k_1,\dots,k_m} \prod_{j=1}^m \frac{J_{-k_j}}{k_j} }\right)
	\big| 0 \big\rangle, \\
	&\Mf^{\vee\bullet}_n(w_1,\dots,w_n) \coloneqq \\  \nonumber
	&\quad = \big\langle 0 \big|  \left(\prod_{i=1}^n \sum_{\ell_i=1}^\infty \dfrac{J_{\ell_i}}{\ell_i} w_i^{\ell_i}\right) \cD_{\psi}^{-1}  \exp \left({\sum_{g=0}^\infty \sum_{m=1}^{\infty} \frac{\hbar^{2g-2+m}}{m!} \sum_{k_1,\dots,k_m=1}^\infty t_{g;k_1,\dots,k_m} \prod_{j=1}^m \frac{J_{k_j}}{k_j} }\right)
	\\ \notag & \qquad \qquad \times \cD_{\psi}
	\exp \left({\sum_{g=0}^\infty \sum_{m=1}^{\infty} \frac{\hbar^{2g-2+m}}{m!} \sum_{k_1,\dots,k_m=1}^\infty s_{g;k_1,\dots,k_m} \prod_{j=1}^m \frac{J_{-k_j}}{k_j} }\right)
	\Big)\big| 0 \big\rangle
	\\ \nonumber
	&\quad =\big\langle 0 \big|  \left(\prod_{i=1}^n \sum_{\ell_i=1}^\infty \dfrac{J_{\ell_i}}{\ell_i} w_i^{\ell_i}\right)
	\exp \left({\sum_{g=0}^\infty \sum_{m=1}^{\infty} \frac{\hbar^{2g-2+m}}{m!} \sum_{k_1,\dots,k_m=1}^\infty \widetilde{s}_{g;k_1,\dots,k_m} \prod_{j=1}^m \frac{J_{-k_j}}{k_j} }\right)
	\big| 0 \big\rangle,
\end{align}
with $\Mf_n(X_1,\dots,X_n)$ and $\Mf^\vee_n(X_1,\dots,X_n)$ being the connected counterparts, respectively. Do note that while we list the variants of these formulas containing the $t$-parameters (in order for the maps/fully simple maps case to be clearly seen as a special case here), these $t$-parameters do not represent independent degrees of freedom w.~r.~t. the $s$-parameters, which is visible from the $\widetilde{s}$-rewritten formulas.

This more general case includes, in particular, the stuffed maps~\cite{BorotStuffed,BorotSha-Blobbed,borot2018simple} (for $\psi
=\log(1+y)$ and general values for the parameters $s_{g;k_1,\dots,k_m}$), and some pieces of the generalised ordinary vs fully simple duality statements in this case are already mentioned in~\cite[Section 7]{borot2018simple}. Using the results of~\cite[Section 5]{borot2019relating}, our Lemma~\ref{lem:FSviaHurwitz} also proves that $\Mf_n^\vee$ enumerates the corresponding fully simple objects (stuffed maps and hypermaps).

\subsection{General duality for \texorpdfstring{$\Mf$}{H}-functions} \label{sec:general duality}
In this section we formulate the refined form of the duality formulated in terms of the functions $\Mf_{g,n}\coloneqq[\hbar^{2g-2+n}]\Mf_n$ and $\Mf^\vee_{g,n}\coloneqq[\hbar^{2g-2+n}]\Mf^\vee_n$ (we use here the same notation as in \eqref{eq:Mveedef1}, \eqref{eq:Mveedef} for the generalised n-point function) defined in Section~\ref{sec:BeyondKP}. It takes the following form, where $W_{0,1}(X)=X\partial_X \Mf_{0,1}(X)$ and $\psi_0 \coloneqq \psi|_{\hbar=0}$:
\begin{theorem}\label{prop:Mveeform}
	For $n\geq 3$:
	\begin{align}\label{eq:Mveeform}
		&\Mf_{g,n}^\vee(w_1,\dots,w_n)=[\hbar^{2g-2+n}] \sum\limits_{\gamma\in\tilde\Gamma_n} \prod\limits_{v_i\in\mathcal{I}_\gamma} \overline{U}^\vee_i \prod\limits_{e\in E_\gamma\setminus\mathcal K_\gamma}\wgt
		(e)
		\\&\quad\times \nonumber
		\prod\limits_{e\in\mathcal{K}_\gamma}\left( \overline{U}^\vee_{i_1(e)}\cdots\overline{U}^\vee_{i_{l(e)}(e)} \wgt
		(e) +  \widetilde{\wgt}_{i_1(e)\dots i_{l(e)}(e)}
		(e)
		\right)
		+\mathrm{const},
	\end{align}
	where $\tilde\Gamma_n$ is the set of all connected multigraphs on $n$ vertices $v_1,\ldots,v_n$, $E_\gamma$ is the set of multiedges of a graph $\gamma$, $\mathcal{I}_\gamma$ is the subset of vertices of valency $\geq 2$, and $\mathcal{K}_\gamma$ is the subset of multiedges $e$ with $l(e) > 0$ ends $v_{i_1(e)},\dots,v_{i_{l(e)}(e)}$ of valency $1$, and where 
	\begin{align}	
		\overline U^\vee_i\, f &\coloneqq\sum_{j=1}^\infty (\tfrac{1}{Q(X_i)}X_i\partial_{X_i})^{j-1}\sum_{r=0}^\infty
		\tfrac{[v^j]L_r(-v,\hbar,W_{0,1}(X_i))}{Q(X_i)}
		[u_i^r]\frac{f}{u_j\hbar\cS(u_j\hbar)}\Bigm|_{X_i=X(w_i)},\\
		L_{r}(v,\hbar,y) &\coloneqq \left(\partial_y+v\psi_0'(y)\right)^r e^{v\left(\frac{\cS(v\hbar\partial_y)}{\cS(\hbar\partial_y)}\psi-\psi_0\right)},\\
		Q(X_i)&\coloneqq\frac{X_i}{w_i}\frac{dw_i}{dX_i}=1+X_i\dfrac{d}{dX_i}\psi_0(W_{0,1}(X_i)),
	\end{align}
	and
	\begin{align}
		\wgt
		(j_1,\dots,j_k)&\coloneqq
		e^{\lwgt_{j_1,\dots,j_k}+\delta_{k,2}\lwgt^{\mathrm{corr}}_{j_1,j_2}-\delta_{k,1}u_{j_1} W_{0,1}(X_{j_1})}-1,
		\\
		\lwgt_{j_1,\dots,j_k}&\coloneqq\sum_{r_1,\dots,r_k=1}^\infty\tfrac{1}{r_1!\cdots r_k!}\lwgtpart_{\underbrace{\scriptstyle j_1,\dots,j_1}_{r_1},\dots,\underbrace{\scriptstyle j_k,\dots,j_k}_{r_k}},\\
		\lwgtpart_{j_1,\dots,j_m}
		&\coloneqq\Biggl(\prod_{i=1}^m u_{j_i}\hbar\,\cS(u_{j_i}\,\hbar X_{j_i}\partial_{X_{j_i}}) X_{j_1}\partial_{X_{j_i}}\Biggr)
		\sum_{g=0}^\infty \hbar^{2g-2+m}\Mf_{g,m}(X_{j_1},\dots,X_{j_m}),\\
		\lwgt^{\mathrm{corr}}_{i,j}&\coloneqq\hbar^2 u_{i}u_{j}\cS(u_{i}\hbar X_{i}\partial_{X_{i}})\cS(u_{j}\hbar X_{j}\partial_{X_{j}})\frac{X_{i} X_{j}}{(X_{i}-X_{j})^2},
	\end{align}	
	and
	\begin{align}
		\widetilde{\wgt}_{i_1\dots i_l}
		(j_1,\dots,j_k)&\coloneqq\delta_{k,2}\hbar u_{\bar i_1} \cS(u_{\bar i_1}\hbar X_{\bar i_1}\partial_{X_{\bar i_1}})\frac{X_{i_1}}{X_{\bar i_1}-X_{i_1}}+\widetilde{\lwgt}_{\bar i_1\dots\bar i_m}^{i_1\dots i_{l}},
	\end{align}	
	where $m=k-l$ and $\{\bar i_1,\dots,\bar i_m\} = \{j_1,\dots,j_k\} \setminus \{i_1,\dots,i_{l}\}$, and
	\begin{align}
		\widetilde{\lwgt}_{\emptyset}^{i_1\dots i_{l}}&\coloneqq\widetilde{\lwgtpart}^{i_1\dots i_l}_{\emptyset}, \qquad m=0\\
		\widetilde{\lwgt}_{j_1\dots j_m}^{i_1\dots i_{l}}&\coloneqq
		\sum_{r_1,\dots,r_m=1}^\infty \dfrac{1}{r_1!\cdots r_m!} \widetilde{\lwgtpart}^{i_1\dots i_l}_{\underbrace{\scriptstyle j_1,\dots,j_1}_{r_1},\dots,\underbrace{\scriptstyle j_m,\dots,j_m}_{r_m}}, \qquad m\geq 1\\
		\widetilde{\lwgtpart}^{i_1\dots i_l}_{j_1\dots j_q} &\coloneqq 	\hbar^l \Biggl(\prod_{s=1}^q u_{j_s}\hbar\,\cS(u_{j_s}\,\hbar X_{j_s}\partial_{X_{j_s}}) X_{j_s} \partial_{X_{j_s}}\Biggr) \\ \notag & \qquad
		{\sum_{g=0}^\infty}^\prime \hbar^{2g-2+l+m}\Mf_{g,l+q}(X_{i_1},\dots,X_{i_l},X_{j_1},\dots,X_{j_q}),
	\end{align}
where in the last sum the prime means that for $(l,q)=(1,0)$ the sum runs from $g=1$ instead of $g=0$.

	For $n=2$ and $g>0$ we have:
	\begin{align}\label{eq:Mveeform2}
		\Mf^\vee_{g,2}& =\eqref{eq:Mveeform}^\prime+ [\hbar^{2g}]\Bigg(\overline{U}^\vee_1\overline{U}^\vee_2 \wgt(1,2)
		+ \overline{U}^\vee_1\widetilde{\wgt}_{2}(1,2) +\overline{U}^\vee_2\widetilde{\wgt}_{1}(1,2)\Bigg)
		+\mathrm{const},
	\end{align}
where by $\eqref{eq:Mveeform}^\prime$ we mean the expression one obtains from \eqref{eq:Mveeform} for this $n=2$ case, but instead of the summation over all 4 graphs there are in this case, we sum over only 3 of them, excluding the graph $\gamma$ with $E(\gamma)=\{(1,2)\}$.

	For $n=1$ and $g>0$ we have:
	\begin{align} \label{eq:Mveeform1}
		\Mf^\vee_{g,1}& =[\hbar^{2g-1}]\Bigg(\overline{U}^\vee_1 \wgt(1) +\widetilde{\wgt}_{1}(1) \\ \notag
		& \quad +\sum_{j=1}^\infty \Big(\tfrac 1{Q(X_1)}X_1\partial_{X_1}\Big)^{j-1}  \tfrac{[v^{j+1}]L_0(-v,\hbar,W_{0,1}(X_1))}{\hbar Q(X_1)}  X_1\partial_{X_1}W_{0,1}(X_1)
		\\ \notag & \quad
		- 
		\int_{0}^{W_{0,1}(X_1)}\left(\dfrac{1}{\cS(\hbar\partial_y)}\psi(y,\hbar)-\psi_0(y)\right)dy\Bigg)
		+\mathrm{const}.
	\end{align}
\end{theorem}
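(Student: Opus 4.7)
The plan is to adapt the proof of Theorem~\ref{th:WveeW} from Section~\ref{sec:DualProofs} in two directions simultaneously: from the $n$-point functions $W^\vee_{g,n}$ to the potentials $\Mf^\vee_{g,n}$, and from the KP setting to the generalised beyond-KP VEVs of Section~\ref{sec:BeyondKP}. The second direction requires almost no new ideas: the beyond-KP generalisation of the vertex-operator identity~\eqref{eq:cJvee} reads exactly as before, with $\partial_y^r \phi_{-m}(y)|_{y=0}$ replaced by the kernels $L_r(-v,\hbar,y)$, which account for an $\hbar^2$-dependent $\psi$ (see~\cite{bychkov2021explicit}, Proposition~3.1). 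The first direction is where the bulk of the new input lies, and it is responsible for the appearance of the $\overline{U}^\vee$-operators and of the correction weights $\widetilde{\wgt}_{i_1\dots i_l}(e)$ on the leaf-touching edges.

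\textbf{Main steps.} I first expand $\Mf^{\vee\bullet}_n$ by conjugating each $J_{\ell_i}/\ell_i$ through $\cD_\psi^{-1}$ into $\bJ^\vee_{\ell_i}/\ell_i$ and substituting the vertex-operator identity. Moving the positive modes to the right and the negative ones to the left produces the $\lwgt^{\mathrm{corr}}_{i,j}$ factors on two-vertex subsets and the multiedge weights $\wgt(e)$ on every subset of vertices via the $\widetilde{s}$-decorated exponential of Section~\ref{sec:BeyondKP}. Inclusion--exclusion reduces the disconnected sum to the sum over connected $\gamma\in\tilde\Gamma_n$. The factor $1/\ell_i$ in each insertion amounts to a single antidifferentiation in $w_i$; performing this antidifferentiation under the change of variables $w=X\phi(W_{0,1}(X))$ corresponds, after the ``principal identity'' reduction of~\cite{bychkov2021explicit}, to replacing $\sum_{j\ge0}(\tfrac{1}{Q(X_i)}X_i\partial_{X_i})^j$ in $U^\vee_i$ by $\sum_{j\ge1}(\tfrac{1}{Q(X_i)}X_i\partial_{X_i})^{j-1}$, which is precisely the definition of $\overline{U}^\vee_i$.

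\textbf{Leaf corrections.} The delicate point is that $\overline{U}^\vee_i$, unlike $U^\vee_i$, fails to reproduce the $u_i$-independent part of its argument, because the shift $j\mapsto j-1$ drops the formal term $(\tfrac{1}{Q}X\partial_X)^{-1}$ one would need for a true antiderivative. At every vertex of valency $1$ in $\gamma$, the unique adjacent multiedge contributes terms that survive the $[u_i^0]$ projection: these are the ``integration-constant'' boundary terms of the antidifferentiation. Collecting them at each edge $e\in\mathcal{K}_\gamma$ along all its leaves $v_{i_1(e)},\dots,v_{i_{l(e)}(e)}$ produces exactly the additional weight $\widetilde{\wgt}_{i_1(e)\dots i_{l(e)}(e)}(e)$, the $\hbar^{l}$ prefactor accounting for the missing $u_{i_s}\hbar\cS(\cdot)$-factors at the leaves; the $\delta_{k,2}$-term in $\widetilde{\wgt}$ is the antiderivative of the $\lwgt^{\mathrm{corr}}$-piece at a 2-vertex edge with one leaf, and the restriction $\sum_g^\prime$ excludes the $(g,l+q)=(1,0)$ term already absorbed into the change of variables via $W_{0,1}$.

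\textbf{Small cases and obstacle.} For $n=2$ the single-edge graph must be excluded from the generic sum because its $\lwgt^{\mathrm{corr}}_{1,2}$-piece does not fit into the $\overline{U}^\vee$-machinery uniformly; this is precisely what the explicit three-term correction in~\eqref{eq:Mveeform2} fixes. For $n=1$ there is, in addition to the leaf correction at the unique vertex, a classical antidifferentiation of the $\omega_{0,1}$-type contribution along the change $w=X\phi(W_{0,1}(X))$; performing this integral explicitly gives the closed form $\int_0^{W_{0,1}(X_1)}\bigl(\tfrac{1}{\cS(\hbar\partial_y)}\psi(y,\hbar)-\psi_0(y)\bigr)dy$ in~\eqref{eq:Mveeform1}, while the remaining leaf-type contribution produces the shifted sum in its second line. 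The main obstacle is the combinatorial bookkeeping of the integration constants: one must verify that the constants produced by antidifferentiating $W^\vee_{g,n}$ split correctly into the $\overline{U}^\vee$-part, the $\widetilde{\wgt}$-part, and the overall ``$+\,\mathrm{const}$'' piece. The cleanest consistency check is to apply $w_1\partial_{w_1}\cdots w_n\partial_{w_n}$ to both sides of~\eqref{eq:Mveeform}, \eqref{eq:Mveeform2}, \eqref{eq:Mveeform1} and verify that each side reduces to the (beyond-KP generalisation of) Theorem~\ref{th:WveeW}.
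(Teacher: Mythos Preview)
Your proposal is correct and follows essentially the same route as the paper: the paper's own proof is a two-line remark that (i) Theorem~\ref{th:WveeW} carries over verbatim to the beyond-KP VEVs of Section~\ref{sec:BeyondKP}, and (ii) the passage from $W^\vee_{g,n}$ to $\Mf^\vee_{g,n}$ is done exactly as in \cite[Theorem~5.3 and Section~6]{bychkov2021explicit}. Your write-up simply spells out the mechanism behind step~(ii)---the antidifferentiation producing $\overline{U}^\vee_i$ in place of $U^\vee_i$, and the leaf corrections $\widetilde{\wgt}$ arising from the terms killed by $[u_i^0]$---which is precisely what that reference contains.

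One small slip: when you write ``the restriction $\sum_g^\prime$ excludes the $(g,l+q)=(1,0)$ term'', you mean the case $(l,q)=(1,0)$, $g=0$, i.e.\ the $\Mf_{0,1}$ contribution; your explanation (``already absorbed into the change of variables via $W_{0,1}$'') shows you have the right term in mind, just a notational hiccup.
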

\begin{proof}
	Theorem~\ref{th:WveeW} holds for the general $W^\vee_{g,n}$ and $W_{g,n}$ corresponding to $\Mf^\vee_{g,n}$ and $\Mf_{g,n}$ of Section~\ref{sec:BeyondKP} (in place of KP-type $W^\vee_{g,n}$ and $W_{g,n}$) in exactly the same form, and the proofs are completely analogous. Passing from $W^\vee_{g,n}$ to $\Mf^\vee_{g,n}$ is then done analogously to how a similar thing was done in \cite[Theorem~5.3 and Section~6]{bychkov2021explicit}.
\end{proof}

\begin{remark}
	Analogously, $\Mf_{g,n}$'s can be expressed in terms of $\Mf^\vee_{g,n}$'s in similar way to Theorem~\ref{prop:Mveeform}. The difference is as Theorem~\ref{th:WveeW} differs from Theorem~\ref{th:WWvee}. One just has to take all formulas of Theorem~\ref{prop:Mveeform} and then to replace $\psi$ with $-\psi$ and all $\vee$-d functions with non-$\vee$-d ones and vice versa everywhere; this follows from the fact that $\cD_{\psi}^{-1}=\cD_{-\psi}$. We do not list explicit formulas here for brevity.
\end{remark}

\begin{example}
In the case $(g,n)=(0,3)$, applying $(X_1X_2X_3\partial_{X_1}\partial_{X_2}\partial_{X_3})^{-1}$ to both sides of formulas for general $W_{0,3}$ and  $(w_1w_2w_3\partial_{w_1}\partial_{w_2}\partial_{w_3})^{-1}$ to both sides of formulas for general $W_{0,3}^\vee$ (formulas for general $W_{0,3}$ and $W_{0,3}^\vee$ are analogous to the ones in Section \ref{sec:examples}), we obtain
\begin{align}
\Mf_{0,3}(X_1,X_2,X_3)-\Mf^\vee_{0,3}(w_1,w_2,w_3)=&
\sum_{i=1}^3\tfrac{\psi_0'(y_i)}{Q_i}
\prod_{j\ne i}\left(D_{w_i}\Mf^\vee_{0,2}(w_i,w_j)+\tfrac{w_j}{w_i-w_j}\right)+\psi_0'(0)\\
=&\sum_{i=1}^3\tfrac{\psi_0'(y_i)}{Q^\vee_i}
\prod_{j\ne i}\left(D_{X_i}\Mf_{0,2}(X_i,X_j)+\tfrac{X_j}{X_i-X_j}\right)+\psi_0'(0).\notag
\end{align}
Here
\begin{align}
\psi_k(y)&=[\hbar^{2k}]\log\phi(y)\\
D_X&=X\partial_{X}=\tfrac{1}{Q^\vee(w)}w\partial_w,& D_w&=w\partial_w=\tfrac{1}{Q(X)}X\partial_X,
\end{align}

The integration constant $\psi_0'(0)$ is computed from the condition that both sides should vanish at the origin, see~\cite{bychkov2021explicit} for more details on the computation of this kind of constants.

Analogously, in the case $(g,n)=(1,1)$ we obtain
\begin{align}
\Mf_{1,1}-\Mf^\vee_{1,1}={}&
\tfrac{1}{Q^\vee}\left(\tfrac{\psi_0''(y)}{24}D_w^2y
 -\tfrac{\psi_0'(y)}{24}+\tfrac{\psi_0'(y)}{2}W^\vee_{0,2}(w,w)\right)\\
 &\notag
 {}+D_X\left(\tfrac{1}{Q^\vee}\left(\tfrac{\psi_0''(y)}{24}D_wy+\tfrac{\psi_0'(y)^2}{24}D_w^2y\right)\right)
 -\tfrac{\psi_0'(y)}{24}-\int_0^y\psi_1(u)\,du+\tfrac{\psi_0'(0)}{12}
\\
={}&\notag
\tfrac{1}{Q}\left(\tfrac{\psi_0''(y)}{24}D_X^2y
 -\tfrac{\psi_0'(y)}{24}+\tfrac{\psi_0'(y)}{2}W_{0,2}(X,X)\right)\\
 &\notag
 {}+D_w\left(\tfrac{1}{Q}\left(\tfrac{\psi_0''(y)}{24}D_Xy-\tfrac{\psi_0'(y)^2}{24}D_X^2y\right)\right)
 -\tfrac{\psi_0'(y)}{24}-\int_0^y\psi_1(u)\,du+\tfrac{\psi_0'(0)}{12}.
\end{align}
\end{example}


\section{Proof of topological recursion for fully simple maps}\label{sec:TRforFS}

In this section we prove Theorem \ref{th:BGF}. Starting from this moment and up to the end of the paper we work only with a special case of fully simple maps, though, when possible, we still write more general formulas in the intermediate computations. So, from now on, as in Section~\ref{sec:mapsdef}, we assume that $\phi(y)=1+y$, $\psi(y)=\log(1+y)$, the functions $W_{g,n}(X_1,\dots,X_n)$ are the corresponding $n$-point functions for maps, and we use that these functions are rational functions on $\mathbb{C}P^1$ in the coordinates $z_1,\dots,z_n$, $X_i=X(z_i)$, with the coefficients depending on $t$-parameters, and with known singularities. Furthermore, they satisfy the topological recursion.

An immediate observation that follows from Theorem~\ref{th:WveeW} and topological recursion for the ordinary maps is that $W^\vee_{g,n}(w_1,\dots,w_n)$ are also rational functions on $\mathbb{C}P^1$ in the
coordinates $z_1,\dots,z_n$, $w_i=w(z_i)$ (cf. Corollary~\ref{cor:rational}).

\subsection{Loop equations}\label{sec:LoopEquations}
In this section we prove the loop equations (the blobbed topological recursion) for the fully simple maps. The way we present it is a bit streamlined and shortened version of the argument in~\cite[Section 2]{bychkov2021explicit}, and it is heavily based on \emph{op. cit.}, so we skip here many subtleties explained there in detail. We use the following notation for operators on the bosonic Fock space:
\begin{align}
	\sum_{k,l\in\Z+\frac12} x^ly^{-k} \hat E_{k,l} & \coloneqq \frac{x^{\frac 12} y^{\frac 12} \big( e^{\sum_{i=1}^\infty (y^{-i}-x^{-i}) \frac{p_i}i } e^{\sum_{i=1}^\infty (x^{i}-y^{i})  \partial_{p_i}} -1  \big) }{x-y}; \\
	\cE_0(u)&\coloneqq\sum_{k\in\Z+\frac12}e^{u\,k}\hat E_{k,k};\\
	J(w)& \coloneqq \sum_{m\in\Z} J_mw^m.
\end{align}

\begin{definition}
	\label{def:cW}
	Let
	\begin{align} \label{eq:VEVdefinitionOfcWgn}
		& \cW_{n}^{\vee\bullet}(w_1;w_{\llbracket n \rrbracket \setminus 1};u)\coloneqq
		\VEV{ \left(\sum_{m=1}^\infty \frac{w_1^m J_m}{m\hbar}\right) \cE_0(\hbar\,u)
			\left(\prod_{i=2}^nJ(w_i)\right)  \cD_\psi^{-1} e^{\sum_{i=1}^\infty \frac{J_i t_i}{i\hbar}}
			\cD_\psi e^{\sum_{j=1}^\infty \frac{s_j J_{-j}}{j\hbar} }}.
	\end{align}
	Using inclusion-exclusion procedure as described in Section \ref{sec:VEVs}, we also consider the connected versions of these VEVs denoted by $\cW^\vee_{n}$. Let
	$\cW_{g,n}^{\vee}\coloneqq [\hbar^{2g-2+n}]\cW_{n}^{\vee}$.
\end{definition}


\begin{lemma}[See~\cite{bychkov2020topological}] We have:
	\begin{align}
		[u^0]\cW^\vee_{g,n}(w_1;w_{\llbracket n \rrbracket \setminus 1};u)& =0; \\
		[u^1]\cW^\vee_{g,n}(w_1;w_{\llbracket n \rrbracket \setminus 1};u)& =W^\vee_{g,n}(w_{\llbracket n \rrbracket}); \\
		[u^2]\cW^\vee_{g,n}(w_1;w_{\llbracket n \rrbracket \setminus 1};u)& =
		\frac 12 W^\vee_{g-1,n+1}(w_1,w_1,w_{\llbracket n \rrbracket \setminus 1}) \\ \notag
		& \quad +
		\frac 12 \sum_{\substack{g_1+g_2=g\\I\sqcup J=\llbracket n \rrbracket \setminus 1}}
		W^\vee_{g_1,|I|+1}(w_1,w_{I})
		W^\vee_{g_2,|J|+1}(w_1,w_{J})
		\\ \notag &
		\quad
		+ \sum_{i\in\llbracket n\rrbracket \setminus 1} \frac{w_1w_i}{(w_1-w_i)^2} W^\vee_{g,n-1}(w_1,w_{\llbracket n \rrbracket \setminus 1,i}).
	\end{align}
\end{lemma}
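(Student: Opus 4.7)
The plan is to expand the operator $\cE_0(\hbar u) = \sum_{k\in\Z+\frac12} e^{\hbar u k}\hat E_{k,k}$ as a power series in $u$, to reinterpret each coefficient bosonically via the generating series defining $\hat E_{k,l}$, and to match the result term-by-term with the three claimed identities. The three orders $r=0,1,2$ are handled separately.

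At order zero, $\cE_0(0) = \sum_k \hat E_{k,k}$ is (up to an additive constant) the total-charge operator. Since every other operator in the VEV preserves charge and the vacuum $\vac$ lies in the zero-charge sector, this coefficient annihilates the VEV, so $[u^0]\cW^{\vee\bullet}_n = 0$; the inclusion--exclusion passage to connected correlators preserves this vanishing. At order one, $[u^1]\cE_0(\hbar u) = \hbar \sum_k k\,\hat E_{k,k}$ is bosonically (proportional to) the energy operator $L_0$; paired with the preceding integrated current $\sum_m w_1^m J_m/(m\hbar)$, the commutator $[L_0,J_m]=-m J_m$ collapses the product to the single current $J(w_1) = \sum_m w_1^m J_m$ inserted at $w_1$. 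Substituting this into the definition \eqref{eq:Wveecorr} of $W^\vee_{g,n}$ yields the second identity.

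At order two, $[u^2]\cE_0(\hbar u) = \tfrac{\hbar^2}{2}\sum_k k^2\,\hat E_{k,k}$ bosonizes to a regularized quadratic expression in the currents, which when commuted through the preceding left insertion produces two independent current insertions at $w_1$, together with a contact subtraction coming from the $x\to y$ limit built into the generating series of $\hat E_{k,l}$. Applying the inclusion--exclusion decomposition \eqref{eq:inclexclinv} then sorts the resulting disconnected correlator into three types of connected diagrams: type (a), in which both $w_1$-insertions live in the same connected block with an extra handle, contributes $\tfrac12 W^\vee_{g-1,n+1}(w_1,w_1,w_{\llbracket n\rrbracket\setminus 1})$; type (b), in which the two $w_1$-insertions distribute across two separate connected blocks, contributes the symmetric splitting sum over $g_1+g_2=g$ and $I\sqcup J = \llbracket n\rrbracket\setminus 1$; and type (c), in which one of the two currents at $w_1$ contracts with an existing $J(w_i)$ through the canonical commutator $[J(w_1),J(w_i)]$, produces the kernel $w_1 w_i/(w_1-w_i)^2$ times $W^\vee_{g,n-1}(w_1,w_{\llbracket n\rrbracket\setminus 1,i})$.

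The main obstacle will be the precise bookkeeping at order $u^2$: one must verify that the contact subtraction built into the definition of $\cE_0$ exactly cancels the would-be divergent self-contraction of the two coinciding $w_1$-currents, and that the $\tfrac12$'s arising from the $u^2/2!$ expansion match the symmetry factors in contributions (a) and (b). This is essentially a direct computation, already carried out (modulo normalization conventions) in \cite{bychkov2020topological}, so the present lemma follows as a corollary of that reference once the dictionary between the two setups is fixed.
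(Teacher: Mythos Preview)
Your proposal is correct and follows the standard semi-infinite wedge computation that the paper simply defers to \cite{bychkov2020topological} without reproducing; the paper itself gives no argument for this lemma beyond the citation. Your outline of the $u^0$, $u^1$, $u^2$ coefficients via the charge operator, the energy operator $L_0$, and the normal-ordered quadratic current (with the contact subtraction handling the self-contraction at coinciding points) is exactly the mechanism behind the reference, and your final remark acknowledging this is appropriate.
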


So, in order to proof the linear and quadratic loop equations it is sufficient to prove that the corresponding coefficients of $\cW^\vee_{g,n}$ are rational functions in $z_1$, $w_1=w(z_1)$, whose polar parts at the critical points of $w$ are odd with respect to the respective deck transformations. We can show this via deriving explicit closed formulas for $\cW^\vee_{g,n}$. To this end, we obtain the following straightforward generalisation of Theorem~\ref{th:WveeW}.

\begin{lemma} \label{lem:curlyWexplicit} For $g\geq 0$, $n\geq 2$, $2g-2+n>0$, we have:
	\begin{equation}
		\label{eq:cWvee}
		\cW^\vee_{g,n}(w_1;w_{\llbracket n \rrbracket \setminus 1};u) =[\hbar^{2g-2+n}]U^{\vee}_n\ldots U^{\vee}_2\widetilde{U}^{\vee}_1\sum\limits_{\gamma\in\tilde{\Gamma}_n}\prod\limits_{e\in E_\gamma} \mathrm{wgt}_\gamma(e),
	\end{equation}
	where $U^\vee_i$, $i=2,\dots,n$, are given by Equation~\eqref{eq:Uveered}, and the transformation $\widetilde{U}^\vee_1$ takes a function $H(u_1,X_1)$ to the function $\widetilde{U}^\vee_1 H$ in $w_1$ given by
	\begin{align}
		(\widetilde U^\vee_1H)(w_1)& =
		\sum_{j=0}^\infty \Big(\frac{1}{Q(X_1)}X_1\partial_{X_1}\Big)^j [v^j]
		\\ \notag & \quad \sum_{r=0}^\infty
		\frac{L_r(-v,\hbar,W_{0,1}(X_1))uS(vu\hbar)e^{uW_{0,1}(X_1)}}{Q(X_1)}
		[u_1^r]\frac{H(u_1,X_1)}{u_1\hbar\cS(u_1\hbar)}\Bigm|_{X_1=X(w_1)},
	\end{align}
	where $w=X\,\phi(W_{0,1}(X)) = X(1+W_{0,1}(X))$.
	
	For $(g,n)=(0,2)$ we have:
	\begin{equation}
		\cW^\vee_{0,2} = \frac{ue^{u\,W_{0,1}(X_1)}}{Q_1Q_2}\Bigg(W_{0,2}(X_1,X_2)+\frac{X_1X_2}{(X_1-X_2)^2}\Bigg).
	\end{equation}
	
	For $n=1$, $g>0$ we have:
	\begin{align}
		& \cW^\vee_{g,1}  = [\hbar^{2g-1}]\widetilde{U}^{\vee}_1\sum\limits_{\gamma\in\tilde{\Gamma}_1}\prod\limits_{e\in E_\gamma} \mathrm{wgt}_\gamma(e)+
		\\ \notag &
		\sum_{j=0}^\infty \Big(\frac{1}{Q(X_1)}X_1\partial_{X_1}\Big)^j [v^{j+1}]
		\frac{L_0(-v,\hbar,W_{0,1}(X_1))uS(vu\hbar)e^{uW_{0,1}(X_1)}X_1\partial_{X_1} W_{0,1}(X_1)}{\hbar Q(X_1)}
		\Bigm|_{X_1=X(w_1)}
	\end{align}
	(of course, $|\tilde{\Gamma}_1|=1$ and for the only $\gamma\in\tilde{\Gamma}_1$ we have $|E_\gamma|=1$, but it is still convenient to write the expression this way).
	
	Finally, for $(g,n)=(0,1)$ we have:
	\begin{equation}	
		\cW^\vee_{0,1} = e^{u\,W_{0,1}(X_1)} -1.
	\end{equation}
\end{lemma}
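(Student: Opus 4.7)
The plan is to mimic the proof of Theorem~\ref{th:WveeW} verbatim, with the only new ingredient being the insertion of $\cE_0(\hbar u)$ between the first current insertion and the remaining ones. First I would conjugate $\cD_\psi$ through the $J$-insertions so that the VEV in~\eqref{eq:VEVdefinitionOfcWgn} is rewritten in terms of the dressed operators $\bJ^\vee_{m}=\cD_\psi J_m\cD_\psi^{-1}$ at all $n$ slots, plus a dressed middle factor $\widetilde{\cE}_0(\hbar u):=\cD_\psi\cE_0(\hbar u)\cD_\psi^{-1}$ sitting between slot~$1$ and slot~$2$. Then I would apply the vertex-operator presentation~\eqref{eq:cJvee} at each slot and commute $\widetilde{\cE}_0(\hbar u)$ to the right through the vertex operators until it acts on the state produced by the partition-function factors on $\vac$.

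The main step is to identify the contribution of $\widetilde{\cE}_0(\hbar u)$ as a modification only of slot~$1$. Its commutators with the vertex operators at slots $j\geq 2$ produce scalar factors that can be reabsorbed without changing the $U^\vee_j$-kernels, while its interaction with the vertex operator at slot~$1$ (to its left) together with its action on the ``dressed vacuum'' contribute precisely the extra multiplicative factor $u\cS(vu\hbar)\,e^{uW_{0,1}(X_1)}$ inside the $[v^j]$-expansion of~\eqref{eq:Uveered}, converting $U^\vee_1$ into $\widetilde{U}^\vee_1$. The underlying principal identity in its $\cE_0$-extended form has already been worked out in~\cite{bychkov2020topological} on the ``ordinary'' side; here I would use its fully-simple counterpart, obtained by substituting $\bJ^\vee_m$ in place of $\bJ_m$, in exactly the same manner as in passing from Theorem~\ref{th:WWvee} to Theorem~\ref{th:WveeW}.

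The remaining steps are routine and parallel the proof of Theorem~\ref{th:WveeW}: commute $J_{>0}$ to the right and $J_{<0}$ to the left in order to express the VEV as a sum over bi-coloured graphs (cf.~Section~\ref{sec:VEVs}), collapse disconnected contributions to connected ones via inclusion--exclusion, and match the result with the right-hand side of~\eqref{eq:cWvee}. The $\wgt$-weights on multi-edges are the same as in Theorem~\ref{th:WveeW}, because $\widetilde{\cE}_0(\hbar u)$ has been entirely absorbed into $\widetilde{U}^\vee_1$.

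The low $(g,n)$ cases require separate bookkeeping. For $(g,n)=(0,1)$ only the $\cE_0$-action on the disk amplitude survives, giving $e^{uW_{0,1}(X_1)}-1$. For $(g,n)=(0,2)$ the only contributing connected multigraph has a single multi-edge, producing the closed-form expression stated. For $n=1$, $g\geq 1$ one has to retain the ``seed'' $[v^{j+1}]$-term arising from the $\widetilde{\cE}_0$-action on a single-vertex graph without any multi-edge: this term would normally be killed by the $[u_1^0]$-regularisation in $U^\vee_1$, but is rescued by the extra $u\cS(vu\hbar)$-factor of $\widetilde{U}^\vee_1$, whose $v$-expansion starts at $v^0$. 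The hardest part will be this last point --- correctly tracking which low-order terms survive or get killed by the $u_1^0$-regularisation once the extra factor $u\cS(vu\hbar)e^{uW_{0,1}(X_1)}$ is present, and thereby isolating the precise correction term for $n=1$.
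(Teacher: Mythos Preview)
Your overall strategy --- reduce to the same graph-sum as in Theorem~\ref{th:WveeW} with only slot~$1$ modified --- is the right target, and the treatment of the low $(g,n)$ cases is reasonable. But the mechanism you propose for isolating the $\cE_0$-effect at slot~$1$ does not work as stated.

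You want to push $\widetilde{\cE}_0(\hbar u)=\cE_0(\hbar u)$ to the \emph{right} through the vertex operators at slots $2,\dots,n$ and claim that this produces only scalar factors at those slots. This is false: conjugating $J_m$ (or the full vertex operator at slot~$j$) by $e^{\epsilon\cE_0}$ genuinely deforms the operator --- in fermionic language $e^{\epsilon\cE_0}\psi_k e^{-\epsilon\cE_0}=e^{\epsilon e^{u\hbar k}}\psi_k$, so each mode picks up a $k$-dependent factor and the result is not a scalar multiple of $J_m$. Moreover, if $\cE_0$ is being moved to the right there is no ``interaction with the vertex operator at slot~$1$ (to its left)'' at all; that phrase only makes sense if you move $\cE_0$ to the left. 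In short, moving right spreads the $\epsilon$-dependence over slots $2,\dots,n$ and the vacuum action, which is exactly the opposite of the localisation you claim.

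The paper's argument avoids this entirely by a different (and much cleaner) device, taken from~\cite[Proof of Proposition~2.26]{bychkov2020topological}: write $\cE_0(\hbar u)=[\epsilon^1]e^{\epsilon\cE_0(\hbar u)}$ and observe that $e^{\epsilon\cE_0}$ is itself a $\cD$-type operator, so that the combination $e^{\epsilon\cE_0}\cD_\psi^{-1}$ sitting immediately to the right of slot~$1$ is absorbed into a single $\cD_{\tilde\phi}^{-1}$ with $\log\tilde\phi(y)=\psi(y)-\epsilon\, u\hbar\, S(\hbar\partial_y)e^{uy}$. Nothing is moved past slots $2,\dots,n$; instead slot~$1$ alone is conjugated by $\cD_{\tilde\phi}$ while slots $2,\dots,n$ are conjugated by $\cD_\psi$ as before. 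One then reruns the computation of Section~\ref{sec:DualProofs} verbatim with this asymmetric pair of weights and extracts $[\epsilon^1]$ at the end; the $\epsilon$-linear piece of the slot-$1$ kernel is precisely the extra factor $u\,\cS(vu\hbar)\,e^{uW_{0,1}(X_1)}$ turning $U^\vee_1$ into $\widetilde U^\vee_1$. The special $n=1$ correction and the $(0,1)$, $(0,2)$ formulas then fall out exactly as you anticipated.
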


\begin{proof}
	We compute $\cW^\vee_{g,n}$ in a closed form applying the techniques developed in~\cite{bychkov2020topological}, see also Section~\ref{sec:DualProofs}. From~\cite[Proof of Proposition~2.26]{bychkov2020topological} we have
	\begin{align}
		& \cW_{n}^{\vee\bullet}(w_1;w_{\llbracket n \rrbracket \setminus 1};u)\\ \notag
		&=[\epsilon^1]\VEV{ \left(\sum_{m=1}^\infty \frac{w_1^m J_m}{m\hbar}\right) e^{\epsilon \cE_0(\hbar\,u)}
			\left(\prod_{i=2}^nJ(w_i)\right)  \cD_\psi^{-1} e^{\sum_{i=1}^\infty \frac{J_i t_i}{i\hbar}}
			\cD_\psi e^{\sum_{j=1}^\infty \frac{s_j J_{-j}}{j\hbar} }}\\ \notag
		&=[\epsilon^1]\big\langle 0 \big|
		\cD_{\tilde \phi}\left(\sum_{m=1}^\infty \frac{w_1^m}{m\hbar } J_m\right)\cD_{\tilde \phi^{-1}}
		\left(\prod_{i=2}^n \cD_\psi J(w_i)\cD_{\phi^{-1}}\right)
		\\ \notag & \quad
		\exp\Bigg(\sum_{g\geq 0, k\geq 1} \frac{ \hbar^{2g-2+k}}{k!} \sum_{p_1,\dots,p_k=1}^\infty \prod_{i=1}^k \frac{J_{-p_i}}{p_i} \Big[\prod_{i=1}^k X_i^{p_i}\Big]W_{g,n}(X_1,\dots,X_k) \Bigg)
		\big| 0 \big\rangle    ,
	\end{align}
	where $\log\tilde\phi(y) = \psi(y)-\epsilon u \hbar S(\hbar \partial_y) e^{uy}$. This allows to apply directly the same computations as in Section~\ref{sec:DualProofs}, which leads to the formulas stated in this lemma.
\end{proof}

\begin{corollary}\label{cor:lqloop} The linear and quadratic loop equation hold for fully simple maps.
\end{corollary}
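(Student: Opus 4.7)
The plan is to reduce both loop equations to a single claim about the polar structure of $\cW^\vee_{g,n}$ at the critical points of $w$, and then to establish that claim by direct analysis of the operator $\widetilde U^\vee_1$ from Lemma~\ref{lem:curlyWexplicit}. The reduction uses the $u$-expansion identities from the lemma preceding Lemma~\ref{lem:curlyWexplicit}: $[u^1]\cW^\vee_{g,n}=W^\vee_{g,n}$, and $[u^2]\cW^\vee_{g,n}$ is one half of the quadratic loop equation combination plus ``correction terms'' of the form $\sum_i \frac{w_1w_i}{(w_1-w_i)^2}\, W^\vee_{g,n-1}(w_1,w_{\llbracket n\rrbracket\setminus 1,i})$. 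The correction factor $\frac{w_1w_i}{(w_1-w_i)^2}$ is a function of $w_1$ alone and so, once expressed as a function of $z_1$, is invariant under every deck transformation $\sigma_i$; in the local coordinate $u_i$ near $p_i$ defined by $u_i^2=w(z_1)-w(p_i)$, it is purely even in $u_i$.

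The main analytical claim I would prove is the following: for every $(g,n)$ and every simple critical point $p_i$ of $w$, the function $\cW^\vee_{g,n}(w_1;w_{\llbracket n\rrbracket\setminus 1};u)$, viewed as a function of $z_1$ with the other $z_j$ fixed at generic values, has polar part at $p_i$ consisting only of odd powers of $u_i$. Given this, the linear loop equation follows immediately from $[u^1]\cW^\vee_{g,n}=W^\vee_{g,n}$. The quadratic loop equation then follows by induction on $n$: assuming LLE for $W^\vee_{g,n-1}$ (which itself comes from the main claim at a lower level), the correction terms inherit odd-in-$u_i$ polar parts from $W^\vee_{g,n-1}$, since multiplication by an even-in-$u_i$ analytic function preserves this property; subtracting the corrections from the odd-in-$u_i$ function $[u^2]\cW^\vee_{g,n}$ yields a QLE combination that is itself odd in $u_i$.

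To prove the main claim I would exploit three features of the formula in Lemma~\ref{lem:curlyWexplicit}. First, $Q(X_1)$ has a simple zero at $p_i$ in $z_1$, so $1/Q(X_1)$ contributes precisely a single polar term $\sim 1/u_i$ plus a mixed-parity analytic remainder. Second, the operator $w_1\partial_{w_1}$ in the $u_i$-coordinate reads as $\tfrac{w(p_i)+u_i^2}{2u_i}\partial_{u_i}$ and sends $u_i^k$ to a linear combination of $u_i^{k-2}$ and $u_i^k$, preserving parity of the exponent. Third, for generic parameter values the critical points of $w$ are disjoint from those of $X$, so every function of $X_1$ appearing in the formula---built from the $W_{g',n'}$ of ordinary maps---is analytic at $p_i$. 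Combining these, the operand of $(w_1\partial_{w_1})^j[v^j]$ in Lemma~\ref{lem:curlyWexplicit} is an analytic-at-$p_i$ function divided by $Q(X_1)$, so its polar part is a single odd simple pole; and applying $(w_1\partial_{w_1})^j$ produces only odd-in-$u_i$ polar contributions, because parity is preserved on existing polar terms and because $w_1\partial_{w_1}$ maps an analytic $h(u_i)=h_0+h_1u_i+\dots$ to $\frac{w(p_i)h_1}{2u_i}+(\text{analytic})$, contributing only simple odd poles.

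The main technical obstacle I expect is the parity bookkeeping in this last step: since the analytic factors multiplying $1/Q$ are of mixed parity in $u_i$, one must check that no even-powered polar terms can arise under any number of applications of $w_1\partial_{w_1}$. The reason this works---and the point that must be verified most carefully---is that $1/Q$ has only a simple pole, so that its product with a mixed-parity analytic function produces only a simple odd pole; all higher-order polar terms created by subsequent applications of $w_1\partial_{w_1}$ then inherit odd parity by virtue of its action on monomials $u_i^k$.
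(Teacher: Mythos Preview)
Your proposal is correct and follows essentially the same route as the paper: both use the explicit formula of Lemma~\ref{lem:curlyWexplicit} to see that the $u$-coefficients of $\cW^\vee_{g,n}$ have odd polar part at the critical points of $w$, and then extract the linear and quadratic loop equations from $[u^1]$ and $[u^2]$. The paper's proof is more compressed because the parity bookkeeping you spell out---that iterated application of $w_1\partial_{w_1}=\tfrac{1}{Q(X_1)}X_1\partial_{X_1}$ to a function holomorphic at $p_i$ produces only odd polar parts---is exactly the content of \cite[Propositions~2.9 and~2.12]{bychkov2020topological}, which the paper simply cites; the induction on $n$ for the correction terms in the QLE is likewise absorbed there.
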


\begin{proof} Indeed, the formulas that we obtain in Lemma~\ref{lem:curlyWexplicit} manifestly show that the coefficients of the expansion in $u$ of $\cW^\vee_{g,n}$ are rational functions on $\mathbb{C}P^1$, whose singularities in $z_1$ at the critical points of $w_1=w(z_1)$ are generated by a finite iterative application of $w_1\partial_{w_1}$ to rational functions locally holomorphic at the respective critical points. By~\cite[Proposition 2.9 and 2.12]{bychkov2020topological}, this property implies the quadratic and linear loop equations.
\end{proof}


\subsection{Projection property}
We are going to closely follow \cite[Sections~3 \& 4]{bychkov2020topological}, though there are some new effects that we have to take into account. Recall from \cite{bychkov2020topological} the definition of the space $\Theta_n$:
\begin{definition}
	The space $\Theta$ (which we denote by $\Theta_n$, $n\geq 1$, when we want to stress the number of variables) is  defined as the linear span of functions of the form
	$\prod_{i=1}^n f_i(z_i)$,
	where each $f_i(z_i)$
	\begin{itemize}
		\item is a rational function on the Riemann sphere;
		\item has poles only at the zeroes of $dw(z)$ (which we denote by  $p_1,\ldots,p_N$);
		\item its principal part at $p_k$, $k=1,\dots,N$, is odd with respect to the corresponding deck transformation, that is, $f_i(z_i)+f_i(\sigma_k(z_i))$ is holomorphic at $z_i\to p_k$.
	\end{itemize}
\end{definition}

In order to prove the projection property, we need to prove that for $2g-2+n>0$ we have $\Mf^\vee_{g,n}\in\Theta_n$~\cite[Proposition~3.9]{bychkov2020topological}. Consider the statement of Theorem~\ref{prop:Mveeform} specified to fully simple maps expressions in terms of maps, as defined in Section~\ref{sec:mapsdef}. From Equations~\eqref{eq:Mveeform}, \eqref{eq:Mveeform2}, and~\eqref{eq:Mveeform1} specified to this case we see that for $2g-2+n>0$ every $\Mf^\vee_{g,n}$ is a rational function in $z_1,\dots,z_n$. Indeed, all $\Mf_{g,n}$ (without "${\,}^\vee$", the $n$-point functions for maps) entering the respective formulas for $\Mf^\vee_{g,n}$ are rational functions in $z_1,\dots,z_n$ (as implied by the topological recursion for maps), except for the $\Mf_{0,1}$ and $\Mf_{0,2}$ cases. But $\Mf_{0,1}$ itself does not enter the formulas~\eqref{eq:Mveeform}, \eqref{eq:Mveeform2}, and~\eqref{eq:Mveeform1} for $\Mf^\vee_{g,n}$, we only use $W_{0,1}$ , which is rational, and $\Mf_{0,2}$ only appears with at least one derivative taken, which is also a global rational function.
From the structure of the aforementioned formulas of Theorem~\ref{prop:Mveeform} we also immediately see that $\Mf^\vee_{g,n}$ can have poles only at the following points:
\begin{itemize}
	\item At the zeroes of $Q(X(z))$ which are the same as the zeroes of $dw(z)$. These are the expected poles of $\Mf^\vee_{g,n}$ which is fine for our purpose of proving $\Mf^\vee_{g,n}\in\Theta_n$.
	\item At $z=\pm 1$, since the functions $\Mf_{g,n}$ (for $2g-2+n>0$), entering the expression for $\wgt(e)$, have poles at these points (and only at these points, as it follows from the topological recursion for maps).
	\item At $z$ such that $y(z)=-1$, since they appear in $L_r$.
	\item At $z=0$ and $z=\infty$, since the derivatives of $y(z)=W_{0,1}(X(z))$ can poles at these points.
\end{itemize}
Let us prove that all these poles, except the ones at the zeroes of $Q(X(z))$, actually get cancelled.
\begin{lemma} \label{prop:nopmone}For $2g-2+n>0$,
$	\Mf_{g,n}^\vee(w(z_1),\dots,w(z_n))$ has no poles at $z_1=\pm 1$ and $z_1=0$.
\end{lemma}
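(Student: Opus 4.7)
The plan is to use the explicit formula of Theorem~\ref{prop:Mveeform}, specialised to the fully simple map case ($\phi(y)=1+y$, $\psi(y)=\log(1+y)$), together with the fact (recalled at the start of this section) that the ordinary $n$-point potentials $\Mf_{g,m}$ for $2g-2+m>0$ are rational functions on $\mathbb{C}P^1$ in the $z$-coordinates whose only poles lie at $z_i=\pm1$. I would verify the regularity of the right-hand side of \eqref{eq:Mveeform}--\eqref{eq:Mveeform1} in the variable $z_1$ at the two types of points separately.

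For $z_1=0$: Since $dX/dz|_{z=0}=1/\gamma\neq 0$, the function $X$ is a local coordinate at $z=0$, so $X_1\partial_{X_1}$ is a regular differential operator there. Because $W_{0,1}$ is a formal power series in $X$ vanishing at the origin, $y(z_1)=W_{0,1}(X(z_1))$ is regular and vanishes at $z_1=0$; consequently $Q(X(0))=1$, $\phi(y(0))=1$, and the $L_r$-factors (which involve $\psi(y)=\log(1+y)$) are evaluated near $y=0$ where they are analytic. The rational ingredients $\Mf_{g,m}$ with $2g-2+m>0$ are regular at $z_i=0$ by the projection property for ordinary maps; the $(g,m)=(0,2)$ contribution $W_{0,2}(X_1,X_j)+X_1X_j/(X_1-X_j)^2$ is regular at $z_1=0$ as well. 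A term-by-term inspection of \eqref{eq:Mveeform}, \eqref{eq:Mveeform2}, \eqref{eq:Mveeform1} then yields regularity of $\Mf^\vee_{g,n}(w(z_1),\dots,w(z_n))$ at $z_1=0$.

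For $z_1=\pm1$: These are the simple critical points of $X$ (indeed $dX/dz=\gamma(1-z^2)/(\alpha z+\gamma(1+z^2))^2$ has simple zeros there), but generically \emph{not} critical points of $w$; in particular $dw/dz$ is nonzero at $z=\pm1$ under the hypothesis of Theorem~\ref{th:BGF}. The structural identity underlying the argument is
\[
\tfrac{1}{Q(X_1)}\,X_1\partial_{X_1}\;=\;w_1\partial_{w_1},
\]
which follows from $Q=(X/w)\,dw/dX$. Hence the iterated operator $\bigl(\tfrac{1}{Q}X_1\partial_{X_1}\bigr)^{j-1}=(w_1\partial_{w_1})^{j-1}$ appearing inside $\overline U^\vee_1$ is regular at $z_1=\pm1$, since $w_1$ serves as a local coordinate there. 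What remains from $\overline U^\vee_1$ is a single outer factor of $1/Q(X_1)$, and this vanishes to first order at $z_1=\pm1$ because $dw/dX=(dw/dz)/(dX/dz)$ has a simple pole at the simple zeros of $dX/dz$, so $Q$ itself has a simple pole.

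The main obstacle is that a single zero of $1/Q$ does not suffice, term by term, to cancel the arbitrarily high-order poles at $z_1=\pm1$ entering through $\lwgtpart_{1,\dots}$: each extra insertion of $\cS(u_1\hbar X_1\partial_{X_1})\,X_1\partial_{X_1}$ acting on $\Mf_{g,m}$ raises the pole order, because $X_1\partial_{X_1}=(X_1/X_1')\partial_{z_1}$ has a simple pole at the zeros of $X_1'$. My plan for handling this is to rewrite every $X_1\partial_{X_1}$ appearing inside those factors as $Q(X_1)\cdot w_1\partial_{w_1}$, thereby moving all the singular behaviour at $z_1=\pm1$ into explicit powers of $Q$, and then to exploit the telescoping coming from the loop equations satisfied by the ordinary $\Mf_{g,m}$ together with the integration-by-parts identities implicit in the $j$-sum in $\overline U^\vee_1$ to see that the net effect is a single residual factor $1/Q$, which vanishes at $z_1=\pm1$ and yields regularity. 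This follows the model of the analogous projection-property argument on the ordinary side in \cite[Sections 3--4]{bychkov2020topological}. The cases $n=1,2$ require only the obvious adjustments using \eqref{eq:Mveeform2} and \eqref{eq:Mveeform1}.
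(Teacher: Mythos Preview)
Your argument has a genuine gap at $z_1=\pm1$. You correctly identify the obstacle: the stable $\Mf_{g,m}$ entering the weights $\wgt(e)$ already carry poles of arbitrarily high order at $z_1=\pm1$ (these being the critical points of $X$), and a single outer factor of $1/Q$ from $\overline U^\vee_1$ cannot kill them. But your proposed resolution --- rewriting $X_1\partial_{X_1}=Q\cdot w_1\partial_{w_1}$ inside the $\lwgtpart$'s and invoking ``telescoping from the loop equations'' and ``integration-by-parts identities implicit in the $j$-sum'' --- is only a sketch of a hope, not an argument. Rewriting the derivative that way inserts \emph{more} factors of $Q$ (which has a simple pole at $z_1=\pm1$), not fewer, and nothing in Theorem~\ref{prop:Mveeform} organises the terms into a telescoping sum. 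The analogous arguments in \cite[\S3--4]{bychkov2020topological} concern regularity at the \emph{non}-critical special points of the relevant $X$-function, not at its critical points; here $z_1=\pm1$ are precisely the critical points of $X$, which is a different situation.

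The paper's proof bypasses Theorem~\ref{prop:Mveeform} entirely and uses a feature specific to $\phi(y)=1+y$: since $\phi_m(y)$ is then a polynomial of degree $m$, the expansion of $\cD^{-1}J_b\cD$ as in \cite[Proposition~3.1]{bychkov2021explicit} is a \emph{finite} sum of monomials in $J$-operators. One returns to the VEV definition of $\Mf^\vee_{g,n}$, expands in the $t$-variables, and observes that each Taylor coefficient $(\partial_{t_{a_1}}\cdots\partial_{t_{a_k}}\Mf^\vee_{g,n})|_{t=0}$ is a finite linear combination of VEVs of finitely many $J$'s acting on $e^{J_{-2}/2\hbar}|0\rangle$, with the $z$-dependence entering only through polynomials (since $w(z)$ is a polynomial in $z$). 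Each such VEV is therefore a polynomial in $z_1,\dots,z_n$, hence regular at $z_1=0,\pm1$ simultaneously. Because $\Mf^\vee_{g,n}$ is already known to be rational in $z$, regularity of every $t$-coefficient forces regularity of $\Mf^\vee_{g,n}$ itself at those points. This argument is short and does not require any cancellation analysis in the explicit duality formula.
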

\begin{proof}
	As mentioned above, Theorem~\ref{prop:Mveeform} implies that for $2g-2+n>0$ every $n$-point function $\Mf^\vee_{g,n}(w(z_1),\dots,w(z_n))$ is a rational function in $z_1,\dots,z_n$, also depending on the $t$-variables. Consider the expansion of $\Mf^\vee_{g,n}$ in the $t$-variables. If $\Mf^\vee_{g,n}$ had a pole at, say, $z_1=1$, then at least one of the coefficients $[t_{a_1}\cdots t_{a_k}]\Mf^\vee_{g,n}$ would have such a pole as well (as a function in $z_1,\dots,z_n$). This means that it is enough to prove that $\forall k\in \mathbb{Z}_{\geq 0} \; \forall a_1,\dots,a_k \in \mathbb{Z}_{>0}$ the following expression
	\begin{equation}
		 \left.\left(\dfrac{\partial^k}{\partial_{t_{a_1}}\dots\partial_{t_{a_k}}}\Mf^\vee_{g,n}(w(z_1),\dots,w(z_n))\right)\right|_{t=0}
	\end{equation}
has no poles at $z_1=1$ (and similarly for the poles at $z_1=-1$ and $z_1=0$).
	
	In fact, let us prove that $(\partial_{t_{a_1}}\cdots\partial_{t_{a_k}}\Mf^\vee_{g,n})|_{t=0}$ is a polynomial, and thus has no poles at all, other than potentially at infinity.
	
	We have 	
	\begin{align}
		&\Mf^{\vee
		}_{g,n}(w_1,\dots,w_n) =
		[\hbar^{2g-2+n}] \Vev{\left(\prod_{j=1}^n\sum_{m_j=1}^\infty \dfrac{J_{m_j} w_j^{m_j}}{m_j}\right) \cD^{-1}e^{\sum_{i=1}^\infty \frac{J_i t_i}{i\hbar}} \cD e^{\frac{J_{-2}}{2\hbar}}}^\circ.
	\end{align}
	Denote
	\begin{align}
		\Eop &\coloneqq  \cD^{-1}e^{\sum_{i=1}^\infty \frac{J_i t_i}{i\hbar}} \cD e^{\frac{J_{-2}}{2\hbar}},
	\end{align}
then, via the inclusion-exclusion formula, we can write	$\Mf^{\vee}_{g,n}$ as a finite sum of products of disconnected correlators:
\begin{align}
	&\Mf^{\vee
	}_{g,n}(w_1,\dots,w_n) =
	[\hbar^{2g-2+n}] \sum\limits_{l=1}^n\frac{1}{l}\sum_{\substack{I_1\sqcup\ldots\sqcup I_l=\llbracket n \rrbracket \\ \forall j\, I_j \neq \emptyset}}(-1)^{l-1} \prod_{i=1}^l  \Vev{\left(\prod_{j\in I_i}^n\sum_{m_j=1}^\infty \dfrac{J_{m_j} w_j^{m_j}}{m_j}\right)\Eop}.
\end{align}
Then, by the Leibniz rule, if we prove that for any $n'\in\mathbb{Z}_{>0},\,k'\in\mathbb{Z}_{\geq 0},\,b_1,\dots,b_{k'} \in\mathbb{Z}_{>0}$ any correlator of the form
\begin{equation}
	\Vev{\left(\prod_{j=1}^{n'}\sum_{m_j=1}^\infty \dfrac{J_{m_j} 
			P_{j,m_j}(z_j)
		}{m_j}\right)(\partial_{t_{b_1}}\cdots\partial_{t_{b_{k'}}}\Eop)|_{t=0}},
\end{equation}
where $P_{j,m_j}(z_j)$ are some polynomials in $z_j$, is a polynomial in $z_1,\dots,z_{n'}$, this would imply that $(\partial_{t_{a_1}}\cdots\partial_{t_{a_k}}\Mf^\vee_{g,n})|_{t=0}$ is a polynomial too. The appearance of the polynomials $P_{j,m_j}$ is due to the fact that, from~\eqref{eq:wmapsdef}, $w(z)$ is a polynomial in $z$ which also depends on the $t$-variables, and thus some of the $t$-derivatives after applying the Leibniz rule will act on the $w$'s, but they will only produce polynomials in the respective $z$-variables.

We have
\begin{align}\label{eq:evev}
	&\Vev{\left(\prod_{j=1}^{n'}\sum_{m_j=1}^\infty \dfrac{J_{m_j} 
			P_{j,m_j}(z_j)
		}{m_j}\right)(\partial_{t_{b_1}}\cdots\partial_{t_{b_{k'}}}\Eop)|_{t=0}}\\ \nonumber
	&= \Vev{\left(\prod_{j=1}^{n'}\sum_{m_j=1}^\infty \dfrac{J_{m_j} 
		P_{j,m_j}(z_j)}{m_j}\right)\cD^{-1}\left(\prod_{i=1}^{k'}\dfrac{J_{b_i}}{\hbar b_i}\right)\cD e^{\frac{J_{-2}}{2\hbar}}}\\ \nonumber
	&= \Vev{\left(\prod_{j=1}^{n'}\sum_{m_j=1}^\infty \dfrac{J_{m_j} 
	P_{j,m_j}(z_j)}{m_j}\right)\left(\prod_{i=1}^{k'}\cD^{-1}\dfrac{J_{b_i}}{\hbar b_i}\cD\right) e^{\frac{J_{-2}}{2\hbar}}}
\end{align}
From \cite[Proposition~3.1]{bychkov2021explicit}, since in our case $\phi(y)=1+y$ and recalling \eqref{eq:phimdef}, we have
\begin{align}
	&\cD^{-1}J_b\cD= \sum_{r=0}^\infty\partial_y^r\phi_a(y)\bigm|_{y=0}[u^r\zeta^b]
	\frac{
		e^{\sum_{i=1}^\infty u\,\hbar\cS(u\,\hbar\,i) J_{-i}\zeta^{-i}}
		e^{\sum_{i=1}^\infty u\,\hbar \cS(u\,\hbar\,i) J_{i}\zeta^{i}}}
	{u\,\hbar\cS(u\,\hbar)}\\ \nonumber
	& = \sum_{r=0}^b\left.\left(\partial_y^r\prod_{i=1}^b\left(1+y+\dfrac{2i-b-1}{2}\hbar\right)\right) \right|_{y=0}[u^r\zeta^b]
	\frac{
		e^{\sum_{i=1}^\infty u\,\hbar\cS(u\,\hbar\,i) J_{-i}\zeta^{-i}}
		e^{\sum_{i=1}^\infty u\,\hbar \cS(u\,\hbar\,i) J_{i}\zeta^{i}}}
	{u\,\hbar\cS(u\,\hbar)}.
\end{align}
The infinite sum in $r$ becomes finite in the second line since the $r$-th derivative acts on a polynomial of degree $a$. Thus $\cD^{-1}J_a\cD$ can be represented as a finite sum of expressions of the form
\begin{equation}
	R(\hbar) J_{-\alpha_1}\cdots J_{-\alpha_k} J_{\beta_1}\cdots J_{\beta_l},
\end{equation}
where $R(\hbar)$ is some polynomial in $\hbar$ and $\alpha_i,\beta_j \in \mathbb{Z}_{>0}$.

This implies that the whole expression~\eqref{eq:evev} is a finite sum of expressions of the form
\begin{align}\label{eq:jprodcor}
	\hbar^{-k}\widetilde{R}(\hbar)\; \VEV{\left(\prod_{j=1}^n\sum_{m_j=1}^\infty \dfrac{J_{m_j} 
			P_{j,m_j}(z_j)}{m_j}\right)\left(\prod_{i=1}^kJ_{-\alpha_{i,1}}\cdots J_{-\alpha_{i,k_i}} \cdot J_{\beta_{i,1}}\cdots J_{\beta_{i,l_i}}\right) \sum_{l=0}^\infty\dfrac{1}{l!}\left({\frac{J_{-2}}{2\hbar}}\right)^l},
\end{align}
where $\widetilde{R}(\hbar)$ is some polynomial in $\hbar$. But a VEV
\begin{equation}
	\Vev{J_{i_1}\cdots J_{i_k}}
\end{equation}
is not equal to zero if and only if the numbers $(i_1,\dots,i_k)$ perfectly split into pairs $(i,-i)$. Since there is only a finite number of $J_2$'s and a finite number of $J_{-j}$'s, $j\neq 2, j>0$, inside \eqref{eq:jprodcor}, only a finite number of terms one gets after expanding all brackets in \eqref{eq:jprodcor} are non-vanishing. Since all these terms are polynomial in $z_1,\dots,z_n$, the whole expression is polynomial, and thus all $(\partial_{t_{a_1}}\cdots\partial_{t_{a_k}}\Mf^\vee_{g,n})|_{t=0}$ are polynomial as well. This, as mentioned above, implies that all $\Mf^\vee_{g,n}(w(z_1),\dots,w(z_n))$ have no poles at $z_1=\pm 1$ or $z_1=0$.

\end{proof}
\begin{remark}
	Note that the argument in the proof of the preceding lemma does not work for poles at points with coordinates depending on $t$.
\end{remark}

\begin{lemma}\label{prop:noyp}
	$\Mf^\vee_{g,n}$ has no poles at $z_1$ such that $y(z_1)=-1$.
\end{lemma}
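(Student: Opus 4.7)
The plan is to extend the Taylor-in-$t$ argument of Lemma~\ref{prop:nopmone} to the present $t$-dependent pole locations. First observe that any point $z_0$ with $y(z_0)=-1$ is a zero of $w(z)=X(z)(1+y(z))$ distinct from $z=0$, since $X$ does not vanish there: the critical points of $X$ are $z=\pm 1$ and its zeros are $z=0,\infty$, so $X(z_0)\neq 0$. At $t=0$, where $y(z)=z^2$, these points are $z_0=\pm i$, in particular finite.

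Next, reread the proof of Lemma~\ref{prop:nopmone} carefully: what it actually establishes is stronger than its stated conclusion, namely that for every multi-index $(a_1,\dots,a_k)$ the $t$-Taylor coefficient $(\partial_{t_{a_1}}\cdots\partial_{t_{a_k}}\Mf^\vee_{g,n})|_{t=0}$ is a polynomial in $z_1,\dots,z_n$ and hence has no poles at any finite point. This implies that $\Mf^\vee_{g,n}(z_1,\dots,z_n;t)$ is analytic in a neighborhood of every $(z_1^*,\dots,z_n^*,0)$ with finite $z_i^*$; taking $z_1^*=\pm i$ yields no pole at the perturbed location $z_1=z_0(t)$ for small $t$.

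To extend to generic $t$, note that $\Mf^\vee_{g,n}$ is a rational function of $z_1$ whose denominator is a polynomial in $z_1$ with coefficients polynomial in $t,\alpha(t),\gamma(t)$. The presence of the factor $(1+y(z_1;t))=w(z_1;t)/X(z_1;t)$ in this denominator is an algebraic condition on the spectral data; since the factor is absent for small $t$ by the previous step, it is absent for all $t$, and so no such pole is ever created.

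The main obstacle is the rigorous formulation of the final algebraic-extension step, which is precisely what the Remark following Lemma~\ref{prop:nopmone} cautions about. A self-contained alternative avoiding any analytic continuation in $t$ would be to rewrite the formula of Theorem~\ref{prop:Mveeform} specialized to fully simple maps using the identity $\tfrac{1}{Q(X)}X\partial_X=w\partial_w$, and to verify directly that the apparent poles from $\psi^{(k)}(y)=(-1)^{k-1}(k-1)!/(1+y)^k$ appearing in $L_r$ cancel against the simple zero of $Q(X)^{-1}$ at $y=-1$; this fails term by term and so requires subtle regrouping of the sum over graphs and over the indices $j,r$, but is ultimately guaranteed by the underlying generating-function fact that $\Mf^\vee_{g,n}(w_1,\dots,w_n)$ has no constant term in $w_1$.
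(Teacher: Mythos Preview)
Your main argument has a genuine gap that the ``algebraic extension'' step does not close. Write $1+y(z_1;t)=w(z_1)/X(z_1)$; this is a polynomial in $z_1$ of degree $q$ with leading coefficient $-t_q\gamma^q$, whereas at $t=0$ it degenerates to $1+z_1^2$, of degree $2$. Hence for $q\ge 3$ there are $q-2$ roots of $1+y=0$ that escape to $z_1=\infty$ as $t\to 0$. The Taylor-in-$t$ argument of Lemma~\ref{prop:nopmone} shows that every Taylor coefficient is a polynomial in $z_1$; this excludes a pole along any branch $z_0(t)$ with $z_0(0)$ finite, but says nothing about branches with $z_0(0)=\infty$. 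A toy model makes this explicit: $1/(1-t z)$ has all $t$-Taylor coefficients $z^n$ polynomial, yet carries a genuine pole at the $t$-dependent point $z=1/t$. This is precisely the obstruction the Remark after Lemma~\ref{prop:nopmone} alludes to, and your continuation claim would require something like irreducibility of $1+y$ over $\mathbb C(t,\alpha,\gamma)$, which you have not argued.

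Your ``self-contained alternative'' points in the right direction but is not a proof either. You correctly observe that $Q^{-1}$ acquires a simple zero at a generic solution of $y=-1$ while $L_r$ contributes higher-order poles there, and you concede that a term-by-term check fails. The claimed rescue via ``$\Mf^\vee_{g,n}$ has no constant term in $w_1$'' is circular: that vanishing refers to the formal expansion at the distinguished preimage $z_1=0$ of $w_1=0$ and carries no information about the \emph{other} preimages of $w_1=0$ sitting at $y=-1$. The paper's proof is exactly this direct cancellation, carried out by appeal to \cite[Lemma~4.1 and the proof of Lemma~4.4]{bychkov2020topological}, where the regrouping of the $j,r$-sums you allude to is done explicitly (up to replacing $\psi$ by $-\psi$).
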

\begin{proof}
	This follows from Lemma~4.1 and the proof of Lemma~4.4 (formulas (149)--(152)) of \cite{bychkov2020topological}, up to a change of signs.
\end{proof}


\begin{lemma}\label{prop:nozip}
	For $2g-2+n>0$, $\Mf^\vee_{g,n}$ has no poles at $z_1=\infty$.	
\end{lemma}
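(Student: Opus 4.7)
The plan is to apply the explicit formula of Theorem~\ref{prop:Mveeform} for $\Mf^\vee_{g,n}$ specialized to $\psi(y)=\log(1+y)$ and to verify summand by summand that no poles at $z_1=\infty$ arise. The first step is to record the behavior of the basic ingredients at $z_1=\infty$: $X(z_1)$ has a simple zero, $w(z_1)$ is a polynomial in $z_1$ of positive degree $d$ vanishing at $z_1=0$, so $y(z_1)=W_{0,1}(X(z_1))=w(z_1)/X(z_1)-1$ has a pole of order $d+1$ at $\infty$; a direct computation gives $Q(X(z_1))\to -d\ne 0$, so $1/Q$ is regular; and the operator $X_1\partial_{X_1}$ equals $-z_1\partial_{z_1}$ at leading order, preserving regularity at $\infty$. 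Crucially, for $\psi(y)=\log(1+y)$ every derivative decays as $\psi^{(k)}(y)=O(1/y^k)$ for $y\to\infty$ and $k\ge 1$, which supplies the decay needed to counter the pole of $y_1$.

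The second step is to verify the regularity at $z_1=\infty$ of each building block of the formula. Every $\Mf_{g,m}(X_{j_1},\ldots,X_{j_m})$ of ordinary maps with $2g-2+m>0$ is regular at $z_1=\infty$ by the projection property for ordinary maps; the $(0,2)$-contribution inside $\lwgt$ combined with $\lwgt^{\mathrm{corr}}_{1,j}$ yields the bidifferential $dz_1 dz_2/(z_1-z_2)^2$ in the $z$-coordinate, which tends to $0$ at $z_1\to\infty$. Every coefficient $[v^j]L_r(-v,\hbar,y_1)$ with $j\ge 1$ is a polynomial expression in the $\psi^{(k)}(y_1)$, $k\ge 1$, and hence tends to $0$ as $z_1\to\infty$. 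Combined with the regularity of $1/Q$ and of $(Q^{-1}X_1\partial_{X_1})^{j-1}$, this shows that $\overline U^\vee_1$ carries functions that are polynomial in $u_1$ with coefficients rational in $z_1$ and regular at $\infty$ to functions of the same type. The leaf-correction $\widetilde{\wgt}$ only involves $\Mf_{g,m}$ with $g+m\ge 2$ (by the primed sum in $\widetilde{\lwgtpart}$), which are regular at $\infty$ again by the projection property for maps.

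The main obstacle, and the technical heart of the proof, is handling the ``unstable'' $(g,m)=(0,1)$ contributions appearing inside $\wgt(v_1)=e^{\lwgt_{v_1}-u_1 W_{0,1}(X_1)}-1$. Expanding $\cS(u_1\hbar X_1\partial_{X_1})W_{0,1}(X_1)$ in $u_1\hbar$ and subtracting the $\hbar^0$ piece $u_1 W_{0,1}(X_1)$ still leaves contributions $u_1^{2k+1}\hbar^{2k}(X_1\partial_{X_1})^{2k}W_{0,1}(X_1)/((2k+1)!\cdot 2^{2k})$ for $k\ge 1$, each of which has a pole of order $d+1$ at $z_1=\infty$. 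Analogous singular pieces appear from the explicit extra term $\sum_{j\ge 1}(Q^{-1}X_1\partial_{X_1})^{j-1}[v^{j+1}]L_0\cdot X_1\partial_{X_1}W_{0,1}/(\hbar Q)$ in \eqref{eq:Mveeform1} and from the action of $\overline U^\vee_1$ on the residual $(0,1)$-pieces coming from tadpole graphs in~\eqref{eq:Mveeform} and~\eqref{eq:Mveeform2}. I expect all such singular contributions, once collected across the sum over graphs $\gamma\in\tilde\Gamma_n$ containing a tadpole on $v_1$ and over the expansion of $\overline U^\vee_1$, to cancel via the chain-rule identities that follow from $w=X(1+y)$, which are precisely the identities that produce Theorem~\ref{prop:Mveeform} to begin with. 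The integral $\int_0^{y_1}(\cS(\hbar\partial_y)^{-1}\psi-\psi_0)\,du$ in~\eqref{eq:Mveeform1} has a finite limit as $y_1\to\infty$ since its integrand is $O(1/y^2)$. Verifying this cancellation in detail is the step I expect to require the most careful bookkeeping; it parallels the strategy used for Lemma~\ref{prop:noyp} (via \cite[Section~4]{bychkov2020topological}), now adapted to the pole of $w$ at $z_1=\infty$.
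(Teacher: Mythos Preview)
Your approach is the paper's: analyse the explicit formula of Theorem~\ref{prop:Mveeform} term by term and defer the actual cancellation to \cite{bychkov2020topological}. The paper's entire proof is the single sentence ``The proof literally repeats the proofs of \cite[Lemmata~4.6 and~4.10]{bychkov2020topological}.'' Your preliminary analysis (behaviour of $X$, $w$, $y$, $Q$ at $z_1=\infty$; decay of $\psi^{(k)}(y)$ for $\psi=\log(1+y)$; regularity of the stable $\Mf_{g,m}$ from the projection property for ordinary maps; convergence of the $\int_0^{y_1}$ term) is correct and matches what those lemmata use.

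The gap is exactly where you write ``I expect all such singular contributions \ldots\ to cancel.'' That cancellation \emph{is} the content of the lemma; you have correctly isolated the dangerous $(0,1)$ pieces $(X_1\partial_{X_1})^{2k}W_{0,1}(X_1)$ and the extra $n=1$ term, but you have not verified that they drop out. The paper does not carry this out in-text either, but it pins the argument to the specific Lemmata~4.6 and~4.10 of \cite{bychkov2020topological}, whose proofs transfer verbatim because the formal structure of $\overline U^\vee_i$, $L_r$, and $\wgt$ here is identical to the corresponding objects there (only $\psi\mapsto-\psi$ and the roles of the two families of $n$-point functions are swapped). Your looser pointer to ``the strategy used for Lemma~\ref{prop:noyp}'' and ``Section~4'' is slightly off target: Lemma~\ref{prop:noyp} invokes Lemma~4.1 and the proof of Lemma~4.4 of \cite{bychkov2020topological}, which treat the zeros of $\phi(y)$; the $z_1=\infty$ analysis is the separate content of Lemmata~4.6 and~4.10 there. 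To close the gap you should either cite those two lemmata directly or reproduce their cancellation argument.
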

\begin{proof}
 The proof literally repeats the proofs of \cite[Lemmata~4.6 and~4.10]{bychkov2020topological}.
\end{proof}

\begin{corollary}\label{cor:MTheta}
	For $2g-2+n>0$ we have $\Mf_{g,n}\in\Theta_n$.
\end{corollary}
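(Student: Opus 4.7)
The task is to verify that $\Mf^\vee_{g,n}(w(z_1),\dots,w(z_n))$ (with $2g-2+n>0$) satisfies, in each variable $z_i$ separately, the three defining conditions of $\Theta_n$: (i) rationality on $\mathbb{C}P^1$, (ii) poles only at the critical points $p_1,\dots,p_N$ of $w$, and (iii) $\sigma_k$-odd principal part at each $p_k$. Conditions (i) and (ii) are essentially bookkeeping on top of the lemmas already proved in Section~\ref{sec:TRforFS}, and I expect condition (iii) to be the main place where the loop equations intervene.

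For conditions (i) and (ii), Theorem~\ref{prop:Mveeform} specialized to $\phi(y)=1+y$ exhibits $\Mf^\vee_{g,n}$ as a finite combination of rational operations on the maps $n$-point functions $\Mf_{g',n'}$ (which are rational in the $z$-variables by the topological recursion for maps), the one-point function $W_{0,1}$, the rational functions $X(z), w(z), y(z)$, and the operators $\overline U^\vee_i$. This immediately yields rationality of $\Mf^\vee_{g,n}$ and an \emph{a priori} list of possible pole loci in $z_i$, as enumerated at the start of Section~\ref{sec:TRforFS}: the zeros of $Q(X(z_i))$ (equivalently, of $dw(z_i)$), the points $z_i=\pm 1$, the points $z_i=0$ and $z_i=\infty$, and the points with $y(z_i)=-1$. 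Lemma~\ref{prop:nopmone} eliminates $z_i=\pm 1,0$; Lemma~\ref{prop:noyp} eliminates $y(z_i)=-1$; Lemma~\ref{prop:nozip} eliminates $z_i=\infty$. What remains is exactly the $p_k$'s, which matches condition~(ii).

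For condition (iii) I will invoke the linear loop equation from Corollary~\ref{cor:lqloop}, which ensures $\sigma_k$-oddness of the $z_i$-principal part of $W^\vee_{g,n}$ at $p_k$, and transfer this oddness back to $\Mf^\vee_{g,n}$ through the relation $W^\vee_{g,n}=D_{w_1}\cdots D_{w_n}\Mf^\vee_{g,n}$. Choosing a local coordinate $\zeta$ near $p_k$ with $\sigma_k(\zeta)=-\zeta$ and $w=w(p_k)+c\zeta^2+O(\zeta^4)$, a direct computation gives
\begin{equation}
D_{w_i}\,\zeta^{-j}=-\tfrac{j\,w(p_k)}{2c}\,\zeta^{-j-2}+\ldots,
\end{equation}
so each $D_{w_i}$ shifts polar orders by $2$ while preserving $\zeta$-parity and is injective on the space of nontrivial principal parts. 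Consequently $D_{w_1}\cdots D_{w_n}$ preserves the parity of the $z_i$-principal part at $p_k$, and $\sigma_k$-oddness transfers back from $W^\vee_{g,n}$ to $\Mf^\vee_{g,n}$.

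The main obstacle is that $D_{w_2}\cdots D_{w_n}$ annihilates any summand of $\Mf^\vee_{g,n}$ which is constant in one of the variables $z_2,\dots,z_n$, so a hypothetical ``marginal'' piece of the form $\sum_j f_j(z_j)$ in $\Mf^\vee_{g,n}$ could carry an even polar component at $p_k$ that is invisible from $W^\vee_{g,n}$. I would rule this out along the lines of~\cite[Proposition~3.9]{bychkov2020topological} by combining the symmetry of $\Mf^\vee_{g,n}$ with the connectedness of the defining VEV~\eqref{eq:Mveedef}: every monomial in the Taylor expansion at $w=0$ has positive degree in each $w_j$, and the only symmetric functions compatible with this that fall in the kernel of $D_{w_2}\cdots D_{w_n}$ are absorbed into the lower-order connected correlators, producing no new contribution at $(g,n)$. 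With this resolved, all three conditions hold and Corollary~\ref{cor:MTheta} follows. The unstable base cases $(g,n)=(0,3)$ and $(1,1)$ admit direct verification using the explicit formulas~\eqref{eq:W03} and~\eqref{eq:W11}.
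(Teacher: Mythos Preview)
Your treatment of conditions (i) and (ii) is essentially the same as the paper's: Theorem~\ref{prop:Mveeform} gives rationality and an a~priori list of poles, and Lemmas~\ref{prop:nopmone}, \ref{prop:noyp}, \ref{prop:nozip} eliminate all but the zeros of $dw$. Symmetry in the $z_i$ then propagates the analysis to every variable. This matches the one-line argument in the paper.

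Where you diverge is in condition (iii), the $\sigma_k$-oddness of the principal part. The paper does \emph{not} go through the linear loop equation for $W^\vee_{g,n}$ and then try to push oddness backwards through $D_{w_1}\cdots D_{w_n}$. Instead, it reads oddness directly off the structure of the formula in Theorem~\ref{prop:Mveeform}: once the extraneous poles are removed, the only singularities in $z_i$ at $p_k$ arise through the operators $\overline U^\vee_i$, which are built from iterates of $w_i\partial_{w_i}=\tfrac{1}{Q}X_i\partial_{X_i}$ applied to $\tfrac{1}{Q}\cdot(\text{holomorphic at }p_k)$. Since $Q$ vanishes to first order at $p_k$, the seed $1/Q$ is odd in the local involution coordinate, and $w\partial_w$ preserves that parity while raising the pole order by~$2$; this is precisely the mechanism from~\cite[Section~3]{bychkov2020topological} that places such expressions in $\Theta$. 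The $\widetilde{\wgt}$ terms carry no $\overline U^\vee$ in the corresponding variables and hence contribute no poles at $p_k$ at all. So oddness is structural, not borrowed from $W^\vee_{g,n}$.

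Your route via the loop equations is in principle viable, but the gap you yourself flag is real and your resolution is not convincing. The fact that the Taylor expansion of $\Mf^\vee_{g,n}$ at $w=0$ has positive degree in every $w_j$ is a statement about the germ at the origin; it says nothing about the parity of a Laurent tail at $p_k$. A hypothetical even polar piece in $z_1$ whose coefficient is, say, constant in $w_2$ would be killed by $D_{w_2}$ and invisible in $W^\vee_{g,n}$, and nothing in your argument excludes it. The paper's structural argument sidesteps this entirely. (Minor remark: $(0,3)$ and $(1,1)$ are the first \emph{stable} cases, not unstable.)
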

\begin{proof}
	Follows from Theorem~\ref{prop:Mveeform} and Lemmas~\ref{prop:nopmone}, \ref{prop:noyp},  
	and \ref{prop:nozip}, taking into account that $\Mf_{g,n}$ is symmetric in all arguments.
\end{proof}

As we mentioned above, this Corollary implies the projection property, and thus concludes the proof of Theorem~\ref{th:BGF}.

\printbibliography
\end{document}